\let\csname equation*\endcsname\relax
\let\csname endequation*\endcsname\relax
\newcommand{\HCd}{\mathcal{H}}
\newcommand{\FCd}{\mathcal{\tilde{F}}}
\newcommand{\LCd}{\mathcal{L}}
\newcommand{\RB}{\mathbb{R}}
\newcommand{\dd}{\mathrm{d}}
\newcommand{\bgamma}{\pmb{\gamma}}
\newcommand{\bsigma}{\pmb{\sigma}}
\newcommand{\psibar}{\bar{\psi}}
\newcommand{\pibar}{\bar{\pi}}
\newcommand{\btau}{\pmb{\tau}}
\newcommand{\bEins}{\pmb{\Eins}}
\newcommand{\onehalf}{{\textstyle\frac{1}{2}}}
\newcommand{\quarter}{{\textstyle\frac{1}{4}}}
\newcommand{\pfrac}[2]{\frac{\partial{#1}}{\partial{#2}}}
\newcommand{\ppfrac}[3]{\frac{\partial^{2}{#1}}{\partial{#2}\partial{#3}}}
\newcommand{\pppfrac}[4]{\frac{\partial^{3}{#1}}{\partial{#2}\partial{#3}\partial{#4}}}
\newcommand{\afffias}{Frankfurt Institute for Advanced Studies (FIAS), Ruth-Moufang-Strasse~1,\\ ~~60438 Frankfurt am Main, Germany}
\newcommand{\affjwg}{Goethe-Universit\"at, Max-von-Laue-Strasse~1, 60438~Frankfurt am Main, Germany}
\newcommand{\affgsi}{GSI Helmholtzzentrum f\"ur Schwerionenforschung GmbH, Planckstrasse~1,\\ ~~64291 Darmstadt, Germany}
\newtheorem{proposition}{Proposition}
\newtheorem{proof}{Proof}
\newtheorem{corollary}{Corollary}
\begin{document}
\title[Generic Theory of Geometrodynamics]{Generic Theory of Geometrodynamics\\
from Noether's theorem for the $\mathrm{Diff}(M)$ symmetry group}
\author{J~Struckmeier$^{1,2,3}$, D~Vasak$^1$, and J~Kirsch$^1$}
\address{$^1$\afffias}\address{$^2$\affjwg}\address{$^3$\affgsi\\[\medskipamount] \today}
\ead{struckmeier@fias.uni-frankfurt.de}

\bigskip\noindent
\begin{indented}
\item[]\emph{Dedicated to the memory of Prof.~Dr.~Walter Greiner, our teacher, mentor, and friend}

\medskip\noindent
In this contribution, we present the canonical transformation formalism in the realm of classical field theory,
where spacetime is treated as a dynamical quantity, and apply it to formulate the gauge theory of gravitation.
In this respect, it generalizes the Extended Hamilton-Lagrange-Jacobi formalism of relativistic point dynamics.
Walter was very much interested in this formalism and therefore added several chapters on the matter to the
second edition of his textbook ``Classical Mechanics''~\cite{greiner03}.
To quote Walter from the Preface to the Second Edition: ``It may come as a surprise that even for the time-honored
subject such as Classical Mechanics in the formulation of Lagrange and Hamilton, new aspects may emerge.''
We are sure, Walter would have loved the following elaboration.
\end{indented}
\begin{abstract}
We work out the most general theory for the interaction of spacetime geometry and matter
fields---commonly referred to as geometrodynamics---for spin-$0$ and spin-$1$ particles.
Actually, we present a Hamilton-Lagrange-Noether formulation of the gauge theory of gravitation.
It is based on the minimum set of postulates to be introduced, namely (i) the action principle and
(ii) the form-invariance of the action under the (local) diffeomorphism group.
The second postulate thus implements the \emph{Principle of General Relativity},
also referred to as the \emph{Principle of General Covariance}.
According to Noether's theorem, this \emph{physical symmetry} gives rise to a conserved Noether current,
from which the complete set of theories compatible with both postulates can be deduced.
This finally results in a new generic Einstein-type equation, which can be interpreted as an
energy-momentum balance equation emerging from the Lagrangian $\LCd_{R}$ for the source-free
dynamics of gravitation and the energy-momentum tensor of the source system $\LCd_{0}$.
Provided that the system has no other symmetries---such as SU$(N)$---the \emph{canonical}
energy-momentum tensor turns out to be the correct source term of gravitation.
For the case of massive spin-$1$ particles, this entails an increased weighting of the kinetic energy
over the mass as the source of gravity, compared to the \emph{metric} energy momentum tensor,
which constitutes the source of gravity in Einstein's General Relativity.
We furthermore confirm that a massive vector field necessarily acts as a source for torsion of spacetime.
From the viewpoint of our generic Einstein-type equation, Einstein's General Relativity constitutes
the particular case for scalar and massless vector particle fields, and the Hilbert Lagrangian
$\LCd_{R,\mathrm{H}}$ as the model for the source-free dynamics of gravitation.
\end{abstract}
\vspace*{-4mm}
\pacs{04.50.Kd, 11.15.-q, 47.10.Df}
\section{Introduction}
The covariant Hamiltonian formalism in the realm of classical field theories was recently
shown to allow generalized canonical transformations which also include arbitrary active
diffeomorphisms within the underlying spacetime manifold $M$~\cite{struckmeier17a}.
With this framework at hand, it is now possible to isolate the complete set of theories
which are based on the action principle and the Principle of
General Relativity, i.e., the condition for a system to be form-invariant under the diffeomorphism group.
This approach naturally leads to a Palatini formulation, where the metric
and the affine connection \emph{a priori} represent independent dynamical quantities.
Systems complying with these principles are thus required to have the $\mathrm{Diff}(M)$ group as an intrinsic symmetry group.
As known from Noether's theorem, each symmetry is associated with a pertaining conserved Noether current $j_{\mathrm{N}}^{\,\mu}$.
The main favor of the Noether approach is that the obtained condition for a conserved Noether current
directly leads to the respective field equations, which describe the \emph{coupling} of metric
and connection to the given source fields of gravitation.

With our actual Noether approach, it is possible to derive the most general form of an Einstein-type equation
for a closed system of scalar and vector fields in dynamical spacetime
satisfying the Principle of General Relativity---including a spacetime with torsion and
without the restriction to a covariantly conserved metric.
We thereby isolate the complete set of possible theories of geometrodynamics for scalar and vector matter
and derive a new form of a generic Einstein-type equation.

Similar to all gauge theories, the Noether approach to geometrodynamics provides the \emph{coupling}
of the fields of the given system to the spacetime geometry, but does \emph{not} fix the Hamiltonian
resp.\ the Lagrangian $\LCd_{R}$ describing the dynamics of the ``free'' (uncoupled)
gravitational field, hence the gravitational field dynamics in classical vacuum.
The Hilbert Lagrangian $\LCd_{R,\mathrm{H}}$---which entails the Einstein tensor of standard General
Relativity---is the simplest example.
Based on analogy with other classical field theories, Einstein himself already proposed
a Lagrangian $\LCd_{R}$ quadratic in the Riemann tensor~\cite{einstein18}, which will be
discussed here as an amendment to the conventional Hilbert Lagrangian $\LCd_{R,\mathrm{H}}$.
Remarkably, the field equation emerging from this Lagrangian is equally satisfied by the
Schwarzschild and even the Kerr metric in the case of classical vacuum~\cite{kehm17}.

The source term of gravity is shown to be given by the \emph{canonical} energy-momentum tensor,
provided that the given system has no additional symmetries---such as a $\mathrm{SU}(N)$
symmetry---besides the $\mathrm{Diff}(M)$ symmetry.
This entails an increased weight of the kinetic energy
over the mass in their roles as sources of gravity.
Also, a massive vector field is shown to \emph{necessarily} induce a torsion of spacetime---which
is in perfect agreement with previous works of Hehl et al.\ (see, for instance, \cite{hehl76}).

We review in section~\ref{sec:cantra} the formalism of canonical transformations in the
covariant Hamiltonian description of classical field theories.
After having formulated Noether's theorem~\cite{noether18} in the realm of covariant Hamiltonian field theory in section~\ref{sec:noether},
we proceed in section~\ref{sec:finite} with the canonical transformation representation of finite $\mathrm{Diff}(M)$ symmetry transformations.

In order to work out the conserved Noether current for the $\mathrm{Diff}(M)$ symmetry trans\-formation,
the finite transformation is reformulated in section~\ref{sec:infini} as the pertaining infinitesimal transformation.
The detailed discussion of the conserved Noether current then follows in section~\ref{sec:discussion}.
It will be shown that the \emph{zero-energy principle}---hence a vanishing energy-momentum tensor
of the total system of source fields and dynamic spacetime---emerges as a direct consequence.

The most general Einstein-type equation of geometrodynamics is presented in section~\ref{sec:lagrange}.
This equation is shown to be equivalent to the ``consistency equation'' of Ref.~\cite{struckmeier17a}
by means of an \emph{identity}, which holds for scalar-valued functions of arbitrary tensors and the metric.
This identity also provides the correlation of the metric (Hilbert) and the canonical energy-momentum tensors of a given Lagrangian system.
We discuss in section~\ref{sec:sample-lag} possible Lagrangians $\LCd_{R}$ for the dynamics of the free gravi\-ta\-tional field
and set up a generalized field equation quadratic and linear in the Riemann-Cartan tensor.
Finally, we discuss the correlation of the spin part of the energy-momentum tensor of the source system
with the torsion of spacetime, described by the then skew-symmetric part of the Ricci tensor.
\section{Canonical transformations under a dynamic spacetime\label{sec:cantra}}
The requirement of form-invariance of the action functional for a real scalar field $\phi$ and a vector
field $a_{\mu}$---in conjunction with their respective canonical conjugate fields $\pi^{\mu}$ and
$p^{\nu\mu}$---in a local inertial frame, hence under a \emph{static} spacetime, is formulated as
\begin{align}
\delta\int_{\Omega_{y}}\left(\pi^{\alpha}\pfrac{\phi}{y^{\alpha}}+
p^{\,\beta\alpha}\pfrac{a_{\beta}}{y^{\alpha}}-\HCd-
\pfrac{\mathcal{F}_{1}^{\alpha}}{y^{\alpha}}\right)\dd^{4}y\stackrel{!}{=}
\delta\int_{\Omega_{y}}\left(\Pi^{\alpha}\pfrac{\Phi}{y^{\alpha}}+
P^{\,\beta\alpha}\pfrac{A_{\beta}}{y^{\alpha}}-\HCd^{\prime}\right)\dd^{4}y.
\label{eq:varprinzip0}
\end{align}
For the transition to non-inertial frames, the volume form $\dd^{4}y$ is expressed
in terms of $\dd^{4}x$ and $\dd^{4}X$ by means of the respective Jacobians
\begin{equation*}
\dd^{4}y=\left|\pfrac{y}{x}\right|\dd^{4}x,\qquad\dd^{4}y=\left|\pfrac{y}{X}\right|\dd^{4}X
\quad\Rightarrow\quad\dd^{4}X=\left|\pfrac{X}{x}\right|\dd^{4}x.
\end{equation*}
We assume the local inertial system $\big(y^{0},\ldots,y^{3}\big)$ to be endowed
by the Minkowski metric $\eta_{jk}=\mathrm{diag}\left(1,-1,-1,-1\right)$.
The metrics $g_{\mu\nu}(x)$ and $G_{\mu\nu}(X)$ in the respective non-inertial frames
$\big(x^{0},\ldots,x^{3}\big)$ and $\big(X^{0},\ldots,X^{3}\big)$ are then
\begin{equation*}
g_{\mu\nu}=\eta_{jk}\pfrac{y^{j}}{x^{\mu}}\pfrac{y^{k}}{x^{\mu}},\qquad
G_{\mu\nu}=\eta_{jk}\pfrac{y^{j}}{X^{\mu}}\pfrac{y^{k}}{X^{\mu}},
\end{equation*}
and hence their determinants, owing to $\det\eta_{jk}=-1$, are
\begin{equation*}
g\equiv\det g_{\mu\nu}=-\left|\pfrac{y}{x}\right|^{2},\qquad
G\equiv\det G_{\mu\nu}=-\left|\pfrac{y}{X}\right|^{2}.
\end{equation*}
The volume form $\dd^{4}y$ thus writes in terms of the determinants of the respective metric
\begin{equation}\label{eq:trans-volumeform}
\dd^{4}y=\sqrt{-g}\,\dd^{4}x,\qquad\dd^{4}y=\sqrt{-G}\,\dd^{4}X.
\end{equation}
The requirement of form-invariance of the action functional for real scalar and vector fields under
transitions $x^{\mu}\mapsto X^{\mu}$ is formulated as
\begin{equation}\label{eq:varprinzip1}
\delta\int_{\Omega_{x}}\left(\tilde{\pi}^{\alpha}\pfrac{\phi}{x^{\alpha}}+
\tilde{p}^{\,\beta\alpha}\pfrac{a_{\beta}}{x^{\alpha}}-\tilde{\HCd}-
\pfrac{\FCd_{1}^{\alpha}}{x^{\alpha}}\right)\dd^{4}x\stackrel{!}{=}
\delta\int_{\Omega_{X}}\left(\tilde{\Pi}^{\alpha}\pfrac{\Phi}{X^{\alpha}}+
\tilde{P}^{\,\beta\alpha}\pfrac{A_{\beta}}{X^{\alpha}}-\tilde{\HCd}^{\prime}\right)\dd^{4}X,
\end{equation}
where the factors $\sqrt{-g}$ and $\sqrt{-G}$ are absorbed into the respective momentum fields,
thereby converting them into relative tensors of weight $w=1$, i.e., into \emph{tensor densities}:
\begin{equation*}
\tilde{\pi}^{\beta}=\pi^{\beta}\sqrt{-g},\qquad\tilde{\Pi}^{\beta}=\Pi^{\beta}\sqrt{-G},
\end{equation*}
and similarly for all other momenta.
The Hamiltonians are thus converted into scalar densities.

As the action integral is to be varied, Eq.~(\ref{eq:varprinzip1})
implies that the \emph{integrands} may differ by the
divergence of a vector density $\FCd_{1}^{\mu}(x)$ whose variation vanishes on the
boundary $\partial\Omega_{x}$ of the integration region $\Omega_{x}$ within spacetime
\begin{equation}\label{eq:surface-term}
\delta\int_{\Omega_{x}}\pfrac{\FCd_{1}^{\alpha}}{x^{\alpha}}\,\dd^{4}x=
\delta\oint_{\partial\Omega_{x}}\FCd_{1}^{\alpha}\,\dd S_{\alpha}\stackrel{!}{=}0.
\end{equation}
The addition of a term $\partial\FCd_{1}^{\alpha}/\partial x^{\alpha}$ to the integrand
which can be converted into a surface integral---commonly referred to briefly as a
\emph{surface term}---thus does not modify the variation of the action integral.
This means that the integrand is only determined up to the divergence
of the functions $\FCd_{1}^{\mu}(\Phi,\phi,A,a,x)$.
With the transformation rule of the volume form from Eq.~(\ref{eq:trans-volumeform}),
and $\FCd_{1}^{\mu}$ to be taken at $x$, the integrand condition obtained from
Eq.~(\ref{eq:varprinzip1}) for an extended canonical transformation thus writes
\begin{align}
&\quad\tilde{\pi}^{\beta}\pfrac{\phi}{x^{\beta}}+\tilde{p}^{\,\alpha\beta}\pfrac{a_{\alpha}}{x^{\beta}}-\tilde{\HCd}-
\left(\tilde{\Pi}^{\beta}\pfrac{\Phi}{X^{\beta}}+\tilde{P}^{\,\alpha\beta}\pfrac{A_{\alpha}}{X^{\beta}}-
\tilde{\HCd}^{\prime}\right)\left|\pfrac{X}{x}\right|\label{eq:integrand-condition}\\
&=\pfrac{\FCd_{1}^{\beta}}{\phi}\pfrac{\phi}{x^{\beta}}+
\pfrac{\FCd_{1}^{\alpha}}{\Phi}\pfrac{X^{\beta}}{x^{\alpha}}\pfrac{\Phi}{X^{\beta}}+
\pfrac{\FCd_{1}^{\beta}}{a_{\alpha}}\pfrac{a_{\alpha}}{x^{\beta}}+
\pfrac{\FCd_{1}^{\xi}}{A_{\alpha}}\pfrac{X^{\beta}}{x^{\xi}}\pfrac{A_{\alpha}}{X^{\beta}}+
\left.\pfrac{\FCd_{1}^{\alpha}}{x^{\alpha}}\right|_{\text{expl}}.\nonumber
\end{align}
Comparing the coefficients yields the transformation rules
\begin{subequations}\label{eq:f1-ext}
\begin{align}
\tilde{\pi}^{\mu}(x)&=\pfrac{\FCd_{1}^{\mu}}{\phi}&
\tilde{\Pi}^{\mu}(X)&=-\pfrac{\FCd_{1}^{\beta}}{\Phi}\pfrac{X^{\mu}}{x^{\beta}}\left|\pfrac{x}{X}\right|\\
\tilde{p}^{\,\nu\mu}(x)&=\pfrac{\FCd_{1}^{\mu}}{a_{\nu}}&
\tilde{P}^{\,\nu\mu}(X)&=-\pfrac{\FCd_{1}^{\beta}}{A_{\nu}}\pfrac{X^{\mu}}{x^{\beta}}\left|\pfrac{x}{X}\right|\\
\left.\tilde{\HCd}^{\prime}\right|_{X}&=\left(\left.\tilde{\HCd}\right|_{x}+
\left.\pfrac{\FCd_{1}^{\alpha}}{x^{\alpha}}\right|_{\text{expl}}\right)\left|\pfrac{x}{X}\right|&
\Rightarrow\left.\tilde{\HCd}^{\prime}\right|_{x}-\left.\tilde{\HCd}\right|_{x}&=\hphantom{-}\!\!
\left.\pfrac{\FCd_{1}^{\alpha}}{x^{\alpha}}\right|_{\text{expl}}.
\end{align}
\end{subequations}
The generating function $\FCd_{1}^{\mu}(\Phi,\phi,A,a,x)$ may be Legendre-transformed into
an equivalent generating function $\FCd_{2}^{\mu}(\tilde{\Pi},\phi,\tilde{P},a,x)$ according to
\begin{equation}\label{eq:extgen-f2}
\FCd_{2}^{\mu}=\FCd_{1}^{\mu}-\Phi\pfrac{\FCd_{1}^{\mu}}{\Phi}-A_{\alpha}\pfrac{\FCd_{1}^{\mu}}{A_{\alpha}}
=\FCd_{1}^{\mu}+\left(\Phi\,\tilde{\Pi}^{\beta}+
A_{\alpha}\,\tilde{P}^{\alpha\beta}\right)\pfrac{x^{\mu}}{X^{\beta}}\left|\pfrac{X}{x}\right|.
\end{equation}
In order to derive the divergence of $\FCd_{2}^{\mu}(x)$, we
make use of the identity~\cite{struckmeier17a} for the right-hand side factor.
Thus
\begin{equation}\label{eq:f2-deri}
\pfrac{\FCd_{1}^{\alpha}}{x^{\alpha}}=\pfrac{\FCd_{2}^{\alpha}}{x^{\alpha}}-
\left|\pfrac{X}{x}\right|\pfrac{}{X^{\beta}}\left(\Phi\,\tilde{\Pi}^{\beta}+
A_{\alpha}\,\tilde{P}^{\alpha\beta}\right).
\end{equation}
Inserting Eq.~(\ref{eq:f2-deri}) into the integrand condition~(\ref{eq:integrand-condition}),
we encounter the modified integrand condition for a generating function of type $\FCd_{2}^{\mu}$,
to be taken at the spacetime event $x$
\begin{align}
&\quad\tilde{\pi}^{\beta}\pfrac{\phi}{x^{\beta}}+\tilde{p}^{\,\alpha\beta}\pfrac{a_{\alpha}}{x^{\beta}}-\tilde{\HCd}+
\left(\Phi\delta_{\xi}^{\beta}\pfrac{\tilde{\Pi}^{\xi}}{X^{\beta}}+
A_{\alpha}\delta_{\xi}^{\beta}\pfrac{\tilde{P}^{\,\alpha\xi}}{X^{\beta}}+
\tilde{\HCd}^{\prime}\right)\left|\pfrac{X}{x}\right|\nonumber\\
&=\pfrac{\FCd_{2}^{\beta}}{\phi}\pfrac{\phi}{x^{\beta}}+\pfrac{\FCd_{2}^{\beta}}{a_{\alpha}}\pfrac{a_{\alpha}}{x^{\beta}}+
\pfrac{\FCd_{2}^{\eta}}{\tilde{\Pi}^{\xi}}\pfrac{X^{\beta}}{x^{\eta}}\pfrac{\tilde{\Pi}^{\xi}}{X^{\beta}}+
\pfrac{\FCd_{2}^{\eta}}{\tilde{P}^{\alpha\xi}}\pfrac{X^{\beta}}{x^{\eta}}\pfrac{\tilde{P}^{\alpha\xi}}{X^{\beta}}+
\left.\pfrac{\FCd_{2}^{\alpha}}{x^{\alpha}}\right|_{\text{expl}},\label{eq:integrand-condition2}
\end{align}
and hence the transformation rules
\begin{subequations}\label{eq:f2-ext}
\begin{align}
\tilde{\pi}^{\mu}(x)&=\pfrac{\FCd_{2}^{\mu}}{\phi}&\delta_{\nu}^{\mu}\Phi(X)&=
\pfrac{\FCd_{2}^{\eta}}{\tilde{\Pi}^{\nu}}\pfrac{X^{\mu}}{x^{\eta}}\left|\pfrac{x}{X}\right|\\
\tilde{p}^{\,\nu\mu}(x)&=\pfrac{\FCd_{2}^{\mu}}{a_{\nu}}&
\delta_{\nu}^{\mu}A_{\alpha}(X)&=\pfrac{\FCd_{2}^{\eta}}{\tilde{P}^{\alpha\nu}}\pfrac{X^{\mu}}{x^{\eta}}\left|\pfrac{x}{X}\right|\\
\left.\tilde{\HCd}^{\prime}\right|_{X}&=\left(\left.\tilde{\HCd}\right|_{x}+
\left.\pfrac{\FCd_{2}^{\alpha}}{x^{\alpha}}\right|_{\text{expl}}\right)\left|\pfrac{x}{X}\right|&
\Rightarrow\left.\tilde{\HCd}^{\prime}\right|_{x}-\left.\tilde{\HCd}\right|_{x}&=
\left.\pfrac{\FCd_{2}^{\alpha}}{x^{\alpha}}\right|_{\text{expl}}.
\end{align}
\end{subequations}
In any case, the integrands of the action integrals~(\ref{eq:varprinzip1}) must be world
scalars in order to keep their form under general spacetime transformations.
This finally ensures that the canonical field equations emerge as \emph{tensor equations}.
Furthermore, the generating function $\FCd_{1}^{\mu}(\Phi,\phi,A,a,x)$ may also be Legendre-transformed into
an equivalent generating function of type $\FCd_{3}^{\mu}(\tilde{\pi},\Phi,\tilde{p},A,x)$ according to
\begin{equation}\label{eq:extgen-f3}
\FCd_{3}^{\mu}=\FCd_{1}^{\mu}-\phi\pfrac{\FCd_{1}^{\mu}}{\phi}-a_{\alpha}\pfrac{\FCd_{1}^{\mu}}{a_{\alpha}}
=\FCd_{1}^{\mu}-\phi\,\tilde{\pi}^{\mu}-a_{\alpha}\,\tilde{p}^{\alpha\mu},
\end{equation}
hence
\begin{equation}\label{eq:f3-deri}
\pfrac{\FCd_{1}^{\alpha}}{x^{\alpha}}=\pfrac{\FCd_{3}^{\alpha}}{x^{\alpha}}+
\pfrac{}{x^{\beta}}\left(\phi\,\tilde{\pi}^{\beta}+a_{\alpha}\,\tilde{p}^{\alpha\beta}\right).
\end{equation}
Inserting Eq.~(\ref{eq:f3-deri}) into the integrand condition~(\ref{eq:integrand-condition}),
we encounter the modified integrand condition for a generating function of type $\FCd_{3}^{\mu}(x)$,
to be taken at the spacetime event $x$
\begin{align}
&\quad-\phi\,\delta_{\xi}^{\beta}\pfrac{\tilde{\pi}^{\xi}}{x^{\beta}}-
a_{\alpha}\delta_{\xi}^{\beta}\pfrac{\tilde{p}^{\,\alpha\xi}}{x^{\beta}}-\tilde{\HCd}-
\left(\tilde{\Pi}^{\beta}\pfrac{\Phi}{X^{\beta}}+\tilde{P}^{\,\alpha\beta}\pfrac{A_{\alpha}}{X^{\beta}}-
\tilde{\HCd}^{\prime}\right)\left|\pfrac{X}{x}\right|\nonumber\\
&=\pfrac{\FCd_{3}^{\eta}}{\Phi}\pfrac{X^{\beta}}{x^{\eta}}\pfrac{\Phi}{X^{\beta}}+
\pfrac{\FCd_{3}^{\eta}}{A_{\alpha}}\pfrac{X^{\beta}}{x^{\eta}}\pfrac{A_{\alpha}}{X^{\beta}}+
\pfrac{\FCd_{3}^{\beta}}{\tilde{\pi}^{\xi}}\pfrac{\tilde{\pi}^{\xi}}{x^{\beta}}+
\pfrac{\FCd_{3}^{\beta}}{\tilde{p}^{\alpha\xi}}\pfrac{\tilde{p}^{\alpha\xi}}{x^{\beta}}+
\left.\pfrac{\FCd_{3}^{\alpha}}{x^{\alpha}}\right|_{\text{expl}},\label{eq:integrand-condition3a}
\end{align}
and hence the transformation rules by comparing the coefficients
\begin{subequations}\label{eq:F3-rules}
\begin{align}
\tilde{\Pi}^{\mu}(X)&=-\pfrac{\FCd_{3}^{\kappa}}{\Phi}\pfrac{X^{\mu}}{x^{\kappa}}\left|\pfrac{x}{X}\right|&
\delta_{\nu}^{\mu}\,\phi(x)&=-\pfrac{\FCd_{3}^{\mu}}{\tilde{\pi}^{\nu}}\\
\tilde{P}^{\,\nu\mu}(X)&=-\pfrac{\FCd_{3}^{\kappa}}{A_{\nu}}\pfrac{X^{\mu}}{x^{\kappa}}\left|\pfrac{x}{X}\right|&
\delta_{\nu}^{\mu}\,a_{\alpha}(x)&=-\pfrac{\FCd_{3}^{\mu}}{\tilde{p}^{\alpha\nu}}\\
\left.\tilde{\HCd}^{\prime}\right|_{X}&=\left(\left.\tilde{\HCd}\right|_{x}+
\left.\pfrac{\FCd_{3}^{\alpha}}{x^{\alpha}}\right|_{\text{expl}}\right)\left|\pfrac{x}{X}\right|&
\Rightarrow\left.\tilde{\HCd}^{\prime}\right|_{x}-\left.\tilde{\HCd}\right|_{x}&=\hphantom{-}\!\!
\left.\pfrac{\FCd_{3}^{\alpha}}{x^{\alpha}}\right|_{\text{expl}}\label{eq:ham-rule-gen}.
\end{align}
In order to allow the description of a \emph{dynamic} spacetime, the Hamiltonians are presumed
to depend in addition on the metric $g_{\alpha\lambda}$ and the---in general---non-symmetric connection
$\gamma\indices{^{\alpha}_{\xi\eta}}$, in conjunction with their respective conjugates,
$\tilde{k}\indices{^{\alpha\lambda\beta}}$ and $\tilde{q}\indices{_{\alpha}^{\xi\eta\beta}}$.
Compared to Eq.~(\ref{eq:integrand-condition3a}), we thus encounter an extended integrand condition
for a generating function of type $\FCd_{3}^{\mu}(x)$,
\begin{align}
&\quad-\phi\,\delta_{\xi}^{\beta}\pfrac{\tilde{\pi}^{\xi}}{x^{\beta}}
-a_{\alpha}\delta_{\xi}^{\beta}\pfrac{\tilde{p}^{\,\alpha\xi}}{x^{\beta}}
-g_{\alpha\lambda}\delta_{\xi}^{\beta}\pfrac{\tilde{k}^{\,\alpha\lambda\xi}}{x^{\beta}}
-\gamma\indices{^{\alpha}_{\lambda\eta}}\delta_{\xi}^{\beta}\pfrac{\tilde{q}\indices{_{\alpha}^{\,\lambda\eta\xi}}}{x^{\beta}}
-\tilde{\HCd}\nonumber\\
&\quad-\left(\tilde{\Pi}^{\beta}\pfrac{\Phi}{X^{\beta}}+\tilde{P}^{\,\alpha\beta}\pfrac{A_{\alpha}}{X^{\beta}}
+\tilde{K}\indices{^{\alpha\lambda\beta}}\pfrac{G\indices{_{\alpha\lambda}}}{X^{\beta}}
+\tilde{Q}\indices{_{\alpha}^{\xi\eta\beta}}\pfrac{\Gamma\indices{^{\alpha}_{\xi\eta}}}{X^{\beta}}
-\tilde{\HCd}^{\prime}\right)\left|\pfrac{X}{x}\right|\nonumber\\
&=\pfrac{\FCd_{3}^{\eta}}{\Phi}\pfrac{X^{\beta}}{x^{\eta}}\pfrac{\Phi}{X^{\beta}}
+\pfrac{\FCd_{3}^{\eta}}{A_{\alpha}}\pfrac{X^{\beta}}{x^{\eta}}\pfrac{A_{\alpha}}{X^{\beta}}
+\pfrac{\FCd_{3}^{\eta}}{G_{\alpha\lambda}}\pfrac{X^{\beta}}{x^{\eta}}\pfrac{G_{\alpha\lambda}}{X^{\beta}}
+\pfrac{\FCd_{3}^{\kappa}}{\Gamma\indices{^{\alpha}_{\xi\eta}}}\pfrac{X^{\beta}}{x^{\kappa}}\pfrac{\Gamma\indices{^{\alpha}_{\xi\eta}}}{X^{\beta}}\nonumber\\
&\quad+\pfrac{\FCd_{3}^{\beta}}{\tilde{\pi}^{\xi}}\pfrac{\tilde{\pi}^{\xi}}{x^{\beta}}
+\pfrac{\FCd_{3}^{\beta}}{\tilde{p}^{\alpha\xi}}\pfrac{\tilde{p}^{\alpha\xi}}{x^{\beta}}
+\pfrac{\FCd_{3}^{\beta}}{\tilde{k}^{\,\alpha\lambda\xi}}\pfrac{\tilde{k}^{\,\alpha\lambda\xi}}{x^{\beta}}
+\pfrac{\FCd_{3}^{\beta}}{\tilde{q}\indices{_{\alpha}^{\,\lambda\eta\xi}}}\pfrac{\tilde{q}\indices{_{\alpha}^{\,\lambda\eta\xi}}}{x^{\beta}}
+\left.\pfrac{\FCd_{3}^{\beta}}{x^{\beta}}\right|_{\text{expl}},\label{eq:integrand-condition3}
\end{align}
and hence the additional transformation rules for the metric tensor $g_{\alpha\beta}$ and for the connection $\gamma\indices{^{\eta}_{\alpha\beta}}$,
together with their respective conjugates, $\tilde{k}^{\alpha\beta\mu}$ and $\tilde{q}\indices{_{\eta}^{\alpha\beta\mu}}$:
\begin{align}
\tilde{K}^{\alpha\beta\mu}(X)&=-\pfrac{\FCd_{3}^{\kappa}}{G_{\alpha\beta}}\pfrac{X^{\mu}}{x^{\kappa}}\!\left|\pfrac{x}{X}\right|&
\delta_{\nu}^{\mu}\,g_{\alpha\beta}(x)&=-\pfrac{\FCd_{3}^{\mu}}{\tilde{k}^{\alpha\beta\nu}}\\
\tilde{Q}\indices{_{\eta}^{\alpha\beta\mu}}(X)&=-\pfrac{\FCd_{3}^{\kappa}}{\Gamma\indices{^{\eta}_{\alpha\beta}}}
\pfrac{X^{\mu}}{x^{\kappa}}\!\left|\pfrac{x}{X}\right|&
\delta_{\nu}^{\mu}\,\gamma\indices{^{\eta}_{\alpha\beta}}(x)&=
-\pfrac{\FCd_{3}^{\mu}}{\tilde{q}\indices{_{\eta}^{\alpha\beta\nu}}}.
\end{align}
\end{subequations}
\section{Generalized Noether Theorem\label{sec:noether}}
In this section, we set up a generating function of type $\FCd_{3}^{\mu}$
to define a general \emph{infinitesimal} canonical transformation.
Specifically, for a sample system of a scalar field $\phi$ and a vector field $a_{\mu}$
in a metric-affine space, the infinitesimal transformation rules are derived from the generating function
\begin{equation}\label{eq:F3-infini}
\left.\FCd_{3}^{\mu}\right|_{x}=\left.\left(-\tilde{\pi}^{\,\mu}\Phi-
\tilde{p}^{\alpha\mu}\,A_{\alpha}-\tilde{k}^{\xi\lambda\mu}\,G_{\xi\lambda}-
\tilde{q}\indices{_{\eta}^{\alpha\beta\mu}}\Gamma\indices{^{\eta}_{\alpha\beta}}-
\epsilon\,\tilde{j}_{\mathrm{N}}^{\,\mu}\right)\right|_{x},
\end{equation}
wherein $\tilde{j}_{\mathrm{N}}^{\,\mu}=\tilde{j}_{\mathrm{N}}^{\,\mu}\left(\tilde{\pi},\phi,\tilde{p},a,\tilde{k},g,\tilde{q},\gamma,x\right)$.
For $\epsilon=0$, Eq.~(\ref{eq:F3-infini}) thus generates the identity transformation for all dynamical quantities the autonomous
on which the Hamiltonian $\tilde{\HCd}=\tilde{\HCd}\left(\tilde{\pi},\phi,\tilde{p},a,\tilde{k},g,\tilde{q},\gamma\right)$ depends on.
All contributions of the general transformation rules~(\ref{eq:F3-rules})
that are associated with a non-identical mapping of fields and spacetime
are encoded in the particular expression for $\tilde{j}_{\mathrm{N}}^{\,\mu}(x)$.
The transformation rules~(\ref{eq:F3-rules}) now read:
\begin{subequations}\label{eq:rules-infini3}
\begin{align}
\delta_{\nu}^{\mu}\delta\phi&=-\epsilon\pfrac{\tilde{j}_{\mathrm{N}}^{\,\mu}}{\tilde{\pi}^{\nu}},&
\delta\tilde{\pi}^{\mu}&=\epsilon\pfrac{\tilde{j}_{\mathrm{N}}^{\,\mu}}{\phi},&
\delta_{\nu}^{\mu}\delta a_{\alpha}&=-\epsilon\pfrac{\tilde{j}_{\mathrm{N}}^{\,\mu}}{\tilde{p}^{\alpha\nu}},&
\delta\tilde{p}^{\alpha\mu}&=\epsilon\pfrac{\tilde{j}_{\mathrm{N}}^{\,\mu}}{a_{\alpha}}\\
\delta_{\nu}^{\mu}\delta g_{\lambda\xi}&=-\epsilon\pfrac{\tilde{j}_{\mathrm{N}}^{\,\mu}}{\tilde{k}^{\lambda\xi\nu}},&
\delta\tilde{k}^{\lambda\xi\mu}&=\epsilon\pfrac{\tilde{j}_{\mathrm{N}}^{\,\mu}}{g_{\lambda\xi}},&
\delta_{\nu}^{\mu}\delta\gamma\indices{^{\eta}_{\alpha\beta}}&=
-\epsilon\pfrac{\tilde{j}_{\mathrm{N}}^{\,\mu}}{\tilde{q}\indices{_{\eta}^{\alpha\beta\nu}}},&
\delta\tilde{q}\indices{_{\eta}^{\alpha\beta\mu}}&=\epsilon\pfrac{\tilde{j}_{\mathrm{N}}^{\,\mu}}{\gamma\indices{^{\eta}_{\alpha\beta}}}
\end{align}
and
\begin{equation}
\left.\delta\tilde{\HCd}\right|_{\mathrm{CT}}\equiv\left.\tilde{\HCd}^{\prime}\right|_{x}-\left.\tilde{\HCd}\right|_{x}
=\left.\pfrac{\FCd_{3}^{\alpha}}{x^{\alpha}}\right|_{\text{expl}}
=-\epsilon\left.\pfrac{\tilde{j}_{\mathrm{N}}^{\,\alpha}}{x^{\alpha}}\right|_{\text{expl}}.
\end{equation}
\end{subequations}
The explicit representation of the infinitesimal transformation rules~(\ref{eq:rules-infini3}) for the
generating function $\FCd_{3}^{\mu}$ of a diffeomorphism will be presented below.

On the other hand, for a closed system, where the Hamiltonian does not explicitly depend on $x$,
the variation of the Hamiltonian emerging from the variation of the fields follows as
\begin{equation*}
\delta\tilde{\HCd}=
\pfrac{\tilde{\HCd}}{\phi}\delta\phi+\pfrac{\tilde{\HCd}}{\tilde{\pi}^{\alpha}}\delta\tilde{\pi}^{\alpha}+
\pfrac{\tilde{\HCd}}{a_{\alpha}}\delta a_{\alpha}+\pfrac{\tilde{\HCd}}{\tilde{p}^{\alpha\beta}}\delta\tilde{p}^{\alpha\beta}+
\pfrac{\tilde{\HCd}}{g_{\lambda\xi}}\delta g_{\lambda\xi}+\pfrac{\tilde{\HCd}}{\tilde{k}^{\lambda\xi\beta}}\delta\tilde{k}^{\lambda\xi\beta}+
\pfrac{\tilde{\HCd}}{\gamma\indices{^{\eta}_{\lambda\xi}}}\delta\gamma\indices{^{\eta}_{\lambda\xi}}+
\pfrac{\tilde{\HCd}}{\tilde{q}\indices{_{\eta}^{\lambda\xi\beta}}}\delta\tilde{q}\indices{_{\eta}^{\lambda\xi\beta}}\!.
\end{equation*}
Inserting the covariant canonical field equations~\cite{struckmeier08},
\begin{subequations}\label{eq:canonical_field_equations}
\begin{align}
-\pfrac{\tilde{\pi}^{\beta}}{x^{\beta}}&=\pfrac{\tilde{\HCd}}{\phi}&
\pfrac{\phi}{x^{\alpha}}&=\pfrac{\tilde{\HCd}}{\tilde{\pi}^{\alpha}}&
-\pfrac{\tilde{p}^{\alpha\beta}}{x^{\beta}}&=\pfrac{\tilde{\HCd}}{a_{\alpha}}&
\pfrac{a_{\alpha}}{x^{\beta}}&=\pfrac{\tilde{\HCd}}{\tilde{p}^{\alpha\beta}}\\
-\pfrac{\tilde{k}^{\lambda\xi\beta}}{x^{\beta}}&=\pfrac{\tilde{\HCd}}{g_{\lambda\xi}}&
\pfrac{g_{\lambda\xi}}{x^{\beta}}&=\pfrac{\tilde{\HCd}}{\tilde{k}^{\lambda\xi\beta}}&
-\pfrac{\tilde{q}\indices{_{\eta}^{\lambda\xi\beta}}}{x^{\beta}}&=\pfrac{\tilde{\HCd}}{\gamma\indices{^{\eta}_{\lambda\xi}}}&
\pfrac{\gamma\indices{^{\eta}_{\lambda\xi}}}{x^{\beta}}&=\pfrac{\tilde{\HCd}}{\tilde{q}\indices{_{\eta}^{\lambda\xi\beta}}},
\end{align}
\end{subequations}
the variation of $\tilde{\HCd}$ is expressed along the system's spacetime evolution
\begin{align}
\delta\tilde{\HCd}&=-\pfrac{\tilde{\pi}^{\beta}}{x^{\alpha}}\delta_{\beta}^{\alpha}\,\delta\phi+
\pfrac{\phi}{x^{\alpha}}\delta\tilde{\pi}^{\alpha}-\pfrac{\tilde{p}^{\alpha\beta}}{x^{\xi}}
\delta_{\beta}^{\xi}\,\delta a_{\alpha}+\pfrac{a_{\alpha}}{x^{\beta}}\delta\tilde{p}^{\alpha\beta}\nonumber\\
&\quad\,-\pfrac{\tilde{k}^{\lambda\xi\beta}}{x^{\alpha}}\delta_{\beta}^{\alpha}\,\delta g_{\lambda\xi}
+\pfrac{g_{\lambda\xi}}{x^{\beta}}\delta\tilde{k}^{\lambda\xi\beta}
-\pfrac{\tilde{q}\indices{_{\eta}^{\lambda\xi\beta}}}{x^{\alpha}}\delta_{\beta}^{\alpha}\,\delta\gamma\indices{^{\eta}_{\lambda\xi}}
+\pfrac{\gamma\indices{^{\eta}_{\lambda\xi}}}{x^{\beta}}\delta\tilde{q}\indices{_{\eta}^{\lambda\xi\beta}}.
\label{eq:delta-H}
\end{align}
With the transformation rules~(\ref{eq:rules-infini3}), this writes in terms of the derivatives of $\tilde{j}_{\mathrm{N}}^{\,\mu}$
\begin{align*}
\delta\tilde{\HCd}&=\epsilon\pfrac{\tilde{\pi}^{\beta}}{x^{\alpha}}\pfrac{\tilde{j}_{\mathrm{N}}^{\,\alpha}}{\tilde{\pi}^{\beta}}+
\epsilon\pfrac{\phi}{x^{\alpha}}\pfrac{\tilde{j}_{\mathrm{N}}^{\,\alpha}}{\phi}+
\epsilon\pfrac{\tilde{p}^{\alpha\beta}}{x^{\xi}}\pfrac{\tilde{j}_{\mathrm{N}}^{\,\xi}}{\tilde{p}^{\alpha\beta}}+
\epsilon\pfrac{a_{\alpha}}{x^{\beta}}\pfrac{\tilde{j}_{\mathrm{N}}^{\,\beta}}{a_{\alpha}}\\
&\quad+\epsilon\pfrac{\tilde{k}^{\lambda\xi\beta}}{x^{\alpha}}\pfrac{\tilde{j}_{\mathrm{N}}^{\,\alpha}}{\tilde{k}^{\lambda\xi\beta}}
+\epsilon\pfrac{g_{\lambda\xi}}{x^{\beta}}\pfrac{\tilde{j}_{\mathrm{N}}^{\,\beta}}{g_{\lambda\xi}}
+\epsilon\pfrac{\tilde{q}\indices{_{\eta}^{\lambda\xi\beta}}}{x^{\alpha}}
\pfrac{\tilde{j}_{\mathrm{N}}^{\,\alpha}}{\tilde{q}\indices{_{\eta}^{\lambda\xi\beta}}}
+\epsilon\pfrac{\gamma\indices{^{\eta}_{\lambda\xi}}}{x^{\beta}}\pfrac{\tilde{j}_{\mathrm{N}}^{\,\beta}}{\gamma\indices{^{\eta}_{\lambda\xi}}}\\
&=\epsilon\pfrac{\tilde{j}_{\mathrm{N}}^{\,\alpha}}{x^{\alpha}}-
\epsilon\left.\pfrac{\tilde{j}_{\mathrm{N}}^{\,\alpha}}{x^{\alpha}}\right|_{\text{expl}}\\
&=\epsilon\pfrac{\tilde{j}_{\mathrm{N}}^{\,\alpha}}{x^{\alpha}}+\left.\delta\tilde{\HCd}\right|_{\mathrm{CT}}.
\end{align*}
The requirement that both variations, $\delta\tilde{\HCd}$ and $\left.\delta\tilde{\HCd}\right|_{\mathrm{CT}}$,
weakly coincide ensures that the canonical transformation defines a symmetry transformation.
Then, the vector $\tilde{j}_{\mathrm{N}}^{\,\mu}$ in the generating function~(\ref{eq:F3-infini})
defines the (weakly) conserved Noether current
\begin{equation}\label{eq:gen-noether}
\delta\tilde{\HCd}-\left.\delta\tilde{\HCd}\right|_{\mathrm{CT}}\stackrel{\not\equiv}{=}0\qquad\Leftrightarrow\qquad
\pfrac{\tilde{j}_{\mathrm{N}}^{\,\alpha}}{x^{\alpha}}\stackrel{\not\equiv}{=}0.
\end{equation}
\section{Finite $\mathrm{Diff}(M)$ transformation\label{sec:finite}}
For a combined closed system of a scalar field $\phi$ and a vector field $a_{\mu}$ in a metric-affine space,
the generating function of type $\FCd_{3}^{\mu}$ for the canonical transformation of the (active)
diffeomorphisms that build the $\mathrm{Diff}(M)$ symmetry group is set up as follows in order
to ensure the proper transformation behavior of the fields:
\begin{equation}
\left.\FCd_{3}^{\mu}\right|_{x}=-\tilde{\pi}^{\mu}\,\Phi
-\tilde{p}^{\alpha\mu}\,A_{\beta}\pfrac{X^{\beta}}{x^{\alpha}}
-\tilde{k}^{\alpha\beta\mu}G_{\xi\lambda}\pfrac{X^{\xi}}{x^{\alpha}}\pfrac{X^{\lambda}}{x^{\beta}}
-\tilde{q}\indices{_{\eta}^{\alpha\beta\mu}}
\left(\Gamma\indices{^{\tau}_{\xi\lambda}}\pfrac{x^{\eta}}{X^{\tau}}\pfrac{X^{\xi}}{x^{\alpha}}
\pfrac{X^{\lambda}}{x^{\beta}}+\pfrac{x^{\eta}}{X^{\tau}}\ppfrac{X^{\tau}}{x^{\alpha}}{x^{\beta}}\right).
\label{eq:gen-conn-coeff}
\end{equation}
We remind that a tilde denotes that the respective quantity represents a
tensor density, i.e., a relative tensor of weight $w=1$.
Notice that the transformed metric tensor $G_{\xi\lambda}(X)$ is \emph{symmetric}, which induces the tensor
$\tilde{k}^{\alpha\beta\mu}$ to be symmetric in its first index pair, $\alpha,\beta$.

From the general rules~(\ref{eq:F3-rules}), the specific generating function~(\ref{eq:gen-conn-coeff})
yields the following particular transformation rules for involved fields and their conjugates:
\begin{subequations}
\begin{align}
\phi(x)&=\Phi(X),&a_{\alpha}(x)&=A_{\beta}(X)\pfrac{X^{\beta}}{x^{\alpha}},\label{eq:rules-1}\\
\tilde{\Pi}^{\mu}\!(X)&=\tilde{\pi}^{\kappa}(x)\pfrac{X^{\mu}}{x^{\kappa}}\!\left|\pfrac{x}{X}\right|,&
\tilde{P}^{\nu\mu}(X)&=\tilde{p}^{\alpha\kappa}(x)\pfrac{X^{\nu}}{x^{\alpha}}\pfrac{X^{\mu}}{x^{\kappa}}\!\left|\pfrac{x}{X}\right|,
\end{align}
and for the ``metric and connection fields'' and their conjugates,
\begin{align}
g_{\alpha\beta}(x)&=G_{\xi\lambda}(X)\pfrac{X^{\xi}}{x^{\alpha}}\pfrac{X^{\lambda}}{x^{\beta}},&
\gamma\indices{^{\eta}_{\alpha\beta}}(x)&=\Gamma\indices{^{\tau}_{\xi\lambda}}(X)\pfrac{x^{\eta}}{X^{\tau}}\pfrac{X^{\xi}}{x^{\alpha}}
\pfrac{X^{\lambda}}{x^{\beta}}+\pfrac{x^{\eta}}{X^{\tau}}\ppfrac{X^{\tau}}{x^{\alpha}}{x^{\beta}},\\
\tilde{K}^{\xi\lambda\mu}(X)&=\tilde{k}^{\alpha\beta\kappa}(x)\pfrac{X^{\xi}}{x^{\alpha}}\pfrac{X^{\lambda}}{x^{\beta}}
\pfrac{X^{\mu}}{x^{\kappa}}\!\left|\pfrac{x}{X}\right|,&
\tilde{Q}\indices{_{\tau}^{\xi\lambda\mu}}(X)&=\tilde{q}\indices{_{\eta}^{\alpha\beta\kappa}}(x)\pfrac{x^{\eta}}{X^{\tau}}
\pfrac{X^{\xi}}{x^{\alpha}}\pfrac{X^{\lambda}}{x^{\beta}}\pfrac{X^{\mu}}{x^{\kappa}}\!\left|\pfrac{x}{X}\right|.\label{eq:rules-2}
\end{align}
Note that the connection field $\gamma\indices{^{\eta}_{\alpha\beta}}(x)$ is not assumed to be symmetric in $\alpha$ and $\beta$.
Finally, the particular transformation rule for the covariant Hamiltonian follows from the general rule~(\ref{eq:ham-rule-gen}) as:
\begin{align}
\left.\delta\tilde{\HCd}\right|_{\mathrm{CT}}&=
\left.\tilde{\HCd}^{\prime}\right|_{x}-\left.\tilde{\HCd}\right|_{x}=
-\tilde{p}^{\alpha\mu}a_{\sigma}\pfrac{x^{\sigma}}{X^{\xi}}\ppfrac{X^{\xi}}{x^{\alpha}}{x^{\mu}}
-\tilde{k}^{\alpha\beta\mu}\left(g_{\sigma\beta}\pfrac{x^{\sigma}}{X^{\xi}}\ppfrac{X^{\xi}}{x^{\alpha}}{x^{\mu}}+
g_{\alpha\sigma}\pfrac{x^{\sigma}}{X^{\xi}}\ppfrac{X^{\xi}}{x^{\beta}}{x^{\mu}}\right)\nonumber\\
&\quad+\tilde{q}\indices{_{\eta}^{\alpha\beta\mu}}\left(
\gamma\indices{^{\sigma}_{\alpha\beta}}\pfrac{x^{\eta}}{X^{\xi}}\ppfrac{X^{\xi}}{x^{\sigma}}{x^{\mu}}
-\gamma\indices{^{\eta}_{\sigma\beta}}\pfrac{x^{\sigma}}{X^{\xi}}\ppfrac{X^{\xi}}{x^{\alpha}}{x^{\mu}}
-\gamma\indices{^{\eta}_{\alpha\sigma}}\pfrac{x^{\sigma}}{X^{\xi}}\ppfrac{X^{\xi}}{x^{\beta}}{x^{\mu}}\right.\nonumber\\
&\left.\qquad\qquad\;+\,\pfrac{x^{\eta}}{X^{\tau}}\ppfrac{X^{\tau}}{x^{\sigma}}{x^{\beta}}
\pfrac{x^{\sigma}}{X^{\xi}}\ppfrac{X^{\xi}}{x^{\alpha}}{x^{\mu}}+\pfrac{x^{\eta}}{X^{\tau}}\ppfrac{X^{\tau}}{x^{\sigma}}{x^{\alpha}}
\pfrac{x^{\sigma}}{X^{\xi}}\ppfrac{X^{\xi}}{x^{\beta}}{x^{\mu}}-\pfrac{x^{\eta}}{X^{\xi}}\pppfrac{X^{\xi}}{x^{\alpha}}{x^{\beta}}{x^{\mu}}\right).
\label{eq:ham-rule}
\end{align}
\end{subequations}
We will show that the transformation following from the generating function~(\ref{eq:gen-conn-coeff})
defines a symmetry transformation, in the sense that any action integral in one point of the
manifold $M$ is mapped to an action integral \emph{of the same form} at another point of $M$~\cite{Gaul:1999ys}.
In other words, the \emph{Principle of General Relativity} is implemented here via the generating function $\FCd_{3}^{\mu}$.
\section{Infinitesimal $\mathrm{Diff}(M)$ transformation\label{sec:infini}}
In order to work out the particular form of the Noether current
$\tilde{j}_{\mathrm{N}}^{\,\mu}$ which is associated with an invariance of a given field theory under \emph{active} diffeomorphisms,
the generating function for the corresponding \emph{infinitesimal} canonical transformation needs to be set up first.
To this end, a local parameter vector $h^{\mu}(x)$ is introduced, which defines the (active)
\emph{infinitesimal local diffeomorphism} $x^{\mu}\mapsto X^{\mu}$ on a manifold $M$:
\begin{equation}\label{eq:x-shift}
X^{\mu}=x^{\mu}+\epsilon\,h^{\mu}(x),\qquad L_{h}=h^{\alpha}(x)\,\pfrac{}{x^{\alpha}},
\end{equation}
with $\epsilon\ll1$ and $L_{h}$ the Lie algebra generators of the diffeomorphism group of $M$.
To first order in $\epsilon$, the spacetime dependent coefficients of~(\ref{eq:gen-conn-coeff}) are then expressed as
\begin{equation}\label{eq:infini-spt}
\pfrac{X^{\mu}}{x^{\nu}}=\delta_{\nu}^{\mu}+\epsilon\pfrac{h^{\mu}}{x^{\nu}},\qquad
\pfrac{x^{\mu}}{X^{\nu}}=\delta_{\nu}^{\mu}-\epsilon\pfrac{h^{\mu}}{x^{\nu}},\qquad
\ppfrac{X^{\tau}}{x^{\alpha}}{x^{\beta}}=\epsilon\ppfrac{h^{\tau}}{x^{\alpha}}{x^{\beta}},\qquad
\left|\pfrac{x}{X}\right|=1-\epsilon\pfrac{h^{\tau}}{x^{\tau}}.
\end{equation}
The finite transformation rules~(\ref{eq:rules-1}) to~(\ref{eq:rules-2}) of the fields now follow up to first order in $\epsilon$ as
\begin{subequations}\label{eq:rules-infini}
\begin{alignat}{2}
\delta\phi(x)&=\Phi(x)-\phi(x)&&=-\epsilon h^{\beta}\pfrac{\phi}{x^{\beta}}=-\epsilon\mathscr{L}_{h}\phi(x)\\
\delta a_{\alpha}(x)&=A_{\alpha}(x)-a_{\alpha}(x)&&=-\epsilon\left(
h^{\beta}\pfrac{a_{\alpha}}{x^{\beta}}+a_{\beta}\pfrac{h^{\beta}}{x^{\alpha}}\right)=-\epsilon\mathscr{L}_{h}a_{\alpha}(x)\\
\delta g_{\xi\lambda}(x)&=G_{\xi\lambda}(x)-g_{\xi\lambda}(x)&&=-\epsilon\left(h^{\beta}\pfrac{g_{\xi\lambda}}{x^{\beta}}
+g_{\beta\lambda}\pfrac{h^{\beta}}{x^{\xi}}+g_{\beta\xi}\pfrac{h^{\beta}}{x^{\lambda}}\right)=-\epsilon\mathscr{L}_{h}g_{\xi\lambda}(x)\\
\delta\gamma\indices{^{\eta}_{\lambda\tau}}(x)&=\Gamma\indices{^{\eta}_{\lambda\tau}}(x)-\gamma\indices{^{\eta}_{\lambda\tau}}(x)
&&=-\epsilon\left(h^{\beta}\pfrac{\gamma\indices{^{\eta}_{\lambda\tau}}}{x^{\beta}}-
\gamma\indices{^{\,\beta}_{\lambda\tau}}\pfrac{h^{\eta}}{x^{\beta}}+
\gamma\indices{^{\eta}_{\beta\tau}}\pfrac{h^{\beta}}{x^{\lambda}}+
\gamma\indices{^{\eta}_{\lambda\beta}}\pfrac{h^{\beta}}{x^{\tau}}
+\ppfrac{h^{\eta}}{x^{\lambda}}{x^{\tau}}\right)\nonumber\\
&&&=-\epsilon\mathscr{L}_{h}\gamma\indices{^{\eta}_{\lambda\tau}}(x).
\end{alignat}
\end{subequations}
The differences of the scalar, vector, and tensor fields are actually proportional to conventional (non-covariant)
\emph{Lie derivatives} along the vector field $h^{\beta}(x)$, denoted by $\mathscr{L}_{h}$~\cite{matteucci03,godina05}.
This reflects the general fact that a finite-dimensional continuous group of transformations which lie
infinitesimally close to the identity define the Lie algebra of said group.
The difference of the non-tensorial connection field $\gamma\indices{^{\eta}_{\lambda\tau}}(x)$
is equally proportional to its Lie derivative by virtue of its trans\-formation property~(\ref{eq:rules-2})
(see Schouten~\cite{schouten54}, Eq.~(II~10.34)).

The corresponding differences of the conjugate momentum tensors are proportional
to conventional Lie derivatives of the respective tensor \emph{densities}
\begin{subequations}
\begin{alignat}{2}
\delta\tilde{\pi}^{\mu}(x)&=\tilde{\Pi}^{\mu}(x)-\tilde{\pi}^{\mu}(x)&&=
\epsilon\left[-h^{\beta}\pfrac{\tilde{\pi}^{\mu}}{x^{\beta}}+\tilde{\pi}^{\beta}\left(
\pfrac{h^{\mu}}{x^{\beta}}-\delta_{\beta}^{\mu}\pfrac{h^{\tau}}{x^{\tau}}\right)\right]
=-\epsilon\mathscr{L}_{h}\tilde{\pi}^{\mu}(x)\label{eq:rules-infini-mom-1}\\
\delta\tilde{p}^{\alpha\mu}(x)&=\tilde{P}^{\alpha\mu}(x)-\tilde{p}^{\alpha\mu}(x)&&=
\epsilon\left[-h^{\beta}\pfrac{\tilde{p}^{\alpha\mu}}{x^{\beta}}+
\tilde{p}^{\beta\mu}\pfrac{h^{\alpha}}{x^{\beta}}+\tilde{p}^{\alpha\beta}\left(
\pfrac{h^{\mu}}{x^{\beta}}-\delta_{\beta}^{\mu}\pfrac{h^{\tau}}{x^{\tau}}\right)\right]
=-\epsilon\mathscr{L}_{h}\tilde{p}^{\alpha\mu}(x)\label{eq:rules-infini-mom-2}\\
\delta\tilde{k}^{\lambda\tau\mu}(x)&=\tilde{K}^{\lambda\tau\mu}(x)-\tilde{k}^{\lambda\tau\mu}(x)&&=
\epsilon\left[-h^{\beta}\pfrac{\tilde{k}^{\lambda\tau\mu}}{x^{\beta}}+
\tilde{k}^{\beta\tau\mu}\pfrac{h^{\lambda}}{x^{\beta}}+\tilde{k}^{\lambda\beta\mu}\pfrac{h^{\tau}}{x^{\beta}}+
\tilde{k}^{\lambda\tau\beta}\left(\pfrac{h^{\mu}}{x^{\beta}}-\delta_{\beta}^{\mu}\pfrac{h^{\tau}}{x^{\tau}}\right)\right]\nonumber\\
&&&=-\epsilon\mathscr{L}_{h}\tilde{k}^{\lambda\tau\mu}(x)
\end{alignat}
and
\begin{alignat}{2}
\delta\tilde{q}\indices{_{\eta}^{\lambda\tau\mu}}(x)&=\tilde{Q}\indices{_{\eta}^{\lambda\tau\mu}}(x)
-\tilde{q}\indices{_{\eta}^{\lambda\tau\mu}}(x)&&=
\epsilon\left[-h^{\beta}\pfrac{\tilde{q}\indices{_{\eta}^{\lambda\tau\mu}}}{x^{\beta}}-
\tilde{q}\indices{_{\beta}^{\lambda\tau\mu}}\pfrac{h^{\beta}}{x^{\eta}}+
\tilde{q}\indices{_{\eta}^{\beta\tau\mu}}\pfrac{h^{\lambda}}{x^{\beta}}+
\tilde{q}\indices{_{\eta}^{\lambda\beta\mu}}\pfrac{h^{\tau}}{x^{\beta}}\right.\nonumber\\
&&&\qquad\left.\mbox{}+\tilde{q}\indices{_{\eta}^{\lambda\tau\beta}}\left(
\pfrac{h^{\mu}}{x^{\beta}}-\delta_{\beta}^{\mu}\pfrac{h^{\tau}}{x^{\tau}}\right)\right]
=-\epsilon\mathscr{L}_{h}\tilde{q}\indices{_{\eta}^{\lambda\tau\mu}}(x).
\end{alignat}
\end{subequations}
As a particular feature of the covariant Hamiltonian formalism of field theories,
merely the \emph{divergences} of momentum fields are determined by the system's Hamiltonian
and not the individual components of the respective canonical momentum tensor.
The canonical momentum tensors are thus determined only up to additional divergence-free tensors.
The transformation rule for the \emph{divergence} of the momenta $\tilde{\pi}^{\mu}$ is set up on the basis of Eq.~(\ref{eq:rules-infini-mom-1})
\begin{align*}
\pfrac{\left(\delta\tilde{\pi}^{\mu}(x)\right)}{x^{\mu}}=
\pfrac{\tilde{\Pi}^{\mu}}{x^{\mu}}-\pfrac{\tilde{\pi}^{\mu}}{x^{\mu}}
&=\epsilon\left[\cancel{-\pfrac{h^{\beta}}{x^{\mu}}\pfrac{\tilde{\pi}^{\mu}}{x^{\beta}}}
-h^{\beta}\ppfrac{\tilde{\pi}^{\mu}}{x^{\beta}}{x^{\mu}}
+\pfrac{\tilde{\pi}^{\beta}}{x^{\mu}}\left(\cancel{\pfrac{h^{\mu}}{x^{\beta}}}-
\delta_{\beta}^{\mu}\pfrac{h^{\tau}}{x^{\tau}}\right)\right]\\
&=-\epsilon\left(h^{\mu}\ppfrac{\tilde{\pi}^{\beta}}{x^{\mu}}{x^{\beta}}
+\pfrac{h^{\mu}}{x^{\mu}}\pfrac{\tilde{\pi}^{\beta}}{x^{\beta}}\right)\\
&=-\epsilon\pfrac{}{x^{\mu}}\left(h^{\mu}\pfrac{\tilde{\pi}^{\beta}}{x^{\beta}}\right).
\end{align*}
We are thus allowed to replace the transformation rule~(\ref{eq:rules-infini-mom-1}) by
\begin{subequations}\label{eq:rules-infini-mom}
\begin{equation}
\delta\tilde{\pi}^{\mu}=-\epsilon h^{\mu}\pfrac{\tilde{\pi}^{\beta}}{x^{\beta}}=\epsilon h^{\mu}\pfrac{\tilde{\HCd}}{\phi},
\end{equation}
which amounts to replacing the momentum tensor components $\tilde{\pi}^{\mu}$
by modified components of an equivalent momentum tensor with the same divergence.
Similarly, the divergences of the momenta $\tilde{p}^{\alpha\mu}$ transform as
\begin{align*}
\pfrac{\tilde{P}^{\alpha\mu}}{x^{\mu}}-\pfrac{\tilde{p}^{\alpha\mu}}{x^{\mu}}
&=\epsilon\left[\cancel{-\pfrac{h^{\beta}}{x^{\mu}}\pfrac{\tilde{p}^{\alpha\mu}}{x^{\beta}}}
-h^{\beta}\ppfrac{\tilde{p}^{\alpha\mu}}{x^{\beta}}{x^{\mu}}
+\pfrac{\tilde{p}^{\beta\mu}}{x^{\mu}}\pfrac{h^{\alpha}}{x^{\beta}}
+\tilde{p}^{\beta\mu}\ppfrac{h^{\alpha}}{x^{\beta}}{x^{\mu}}
+\pfrac{\tilde{p}^{\alpha\beta}}{x^{\mu}}\left(
\cancel{\pfrac{h^{\mu}}{x^{\beta}}}-\delta_{\beta}^{\mu}\pfrac{h^{\tau}}{x^{\tau}}\right)\right]\\
&=-\epsilon\left(h^{\mu}\ppfrac{\tilde{p}^{\alpha\beta}}{x^{\mu}}{x^{\beta}}
-\pfrac{\tilde{p}^{\beta\mu}}{x^{\mu}}\pfrac{h^{\alpha}}{x^{\beta}}
-\tilde{p}^{\beta\mu}\ppfrac{h^{\alpha}}{x^{\beta}}{x^{\mu}}
+\pfrac{\tilde{p}^{\alpha\beta}}{x^{\beta}}\pfrac{h^{\mu}}{x^{\mu}}\right)\\
&=-\epsilon\pfrac{}{x^{\mu}}\left(h^{\mu}\pfrac{\tilde{p}^{\alpha\beta}}{x^{\beta}}
-\tilde{p}^{\beta\mu}\pfrac{h^{\alpha}}{x^{\beta}}\right).
\end{align*}
The transformation rule for the momenta $\tilde{p}^{\alpha\mu}$ from Eq.~(\ref{eq:rules-infini-mom-2})
can thus equivalently be expressed as
\begin{equation}
\delta\tilde{p}^{\alpha\mu}=
-\epsilon\left(h^{\mu}\pfrac{\tilde{p}^{\alpha\beta}}{x^{\beta}}
-\tilde{p}^{\beta\mu}\pfrac{h^{\alpha}}{x^{\beta}}\right)
=\epsilon\left(h^{\mu}\pfrac{\tilde{\HCd}}{a_{\alpha}}
+\tilde{p}^{\beta\mu}\pfrac{h^{\alpha}}{x^{\beta}}\right).
\end{equation}
The corresponding modified transformation rules apply for the momenta $\tilde{k}^{\lambda\xi\mu}$
and $\tilde{q}\indices{_{\eta}^{\lambda\tau\mu}}$:
\begin{align}
\delta\tilde{k}^{\lambda\xi\mu}&=\epsilon\left(
h^{\mu}\pfrac{\tilde{\HCd}}{g_{\lambda\xi}}+\tilde{k}^{\beta\xi\mu}\pfrac{h^{\lambda}}{x^{\beta}}+
\tilde{k}^{\lambda\beta\mu}\pfrac{h^{\xi}}{x^{\beta}}\right)\\
\delta\tilde{q}\indices{_{\eta}^{\lambda\tau\mu}}&=\epsilon\left(
h^{\mu}\pfrac{\tilde{\HCd}}{\gamma\indices{^{\eta}_{\lambda\tau}}}-
\tilde{q}\indices{_{\beta}^{\lambda\tau\mu}}\pfrac{h^{\beta}}{x^{\eta}}+
\tilde{q}\indices{_{\eta}^{\beta\tau\mu}}\pfrac{h^{\lambda}}{x^{\beta}}+
\tilde{q}\indices{_{\eta}^{\lambda\beta\mu}}\pfrac{h^{\tau}}{x^{\beta}}\right).
\end{align}
\end{subequations}
\section{The Noether theorem in the form of Eq.~(\ref{eq:gen-noether})}
Finally, the finite canonical transformation rule for the Hamiltonian density
from Eq.~(\ref{eq:ham-rule}) has the infinitesimal representation according to Eqs.~(\ref{eq:infini-spt}):
\begin{align*}
\left.\delta\tilde{\HCd}\right|_{\mathrm{CT}}=-\epsilon&\left[\ppfrac{h^{\beta}}{x^{\alpha}}{x^{\mu}}
\left(\tilde{p}^{\alpha\mu}\,a_{\beta}
+\tilde{k}^{\lambda\alpha\mu}g_{\beta\lambda}+\tilde{k}^{\alpha\lambda\mu}g_{\lambda\beta}
+\tilde{q}\indices{_{\eta}^{\lambda\alpha\mu}}\gamma\indices{^{\eta}_{\lambda\beta}}+
\tilde{q}\indices{_{\eta}^{\alpha\lambda\mu}}\gamma\indices{^{\eta}_{\beta\lambda}}-
\tilde{q}\indices{_{\beta}^{\eta\lambda\mu}}\gamma\indices{^{\alpha}_{\eta\lambda}}\right)
\vphantom{\pppfrac{h^{\beta}}{x^{\alpha}}{x^{\eta}}{x^{\mu}}}\right.\\
&\left.\mbox{}+\pppfrac{h^{\beta}}{x^{\alpha}}{x^{\eta}}{x^{\mu}}\tilde{q}\indices{_{\beta}^{\alpha\eta\mu}}\right].
\end{align*}
In contrast, the variation of $\delta\tilde{\HCd}$ emerging from the transformations of the fields, as stated by Eq.~(\ref{eq:delta-H}),
has the following infinitesimal representation according to Eqs.~(\ref{eq:rules-infini}) and Eqs.~(\ref{eq:rules-infini-mom}):
\begin{align*}
\delta\tilde{\HCd}=\epsilon&\left[\pfrac{h^{\beta}}{x^{\alpha}}\pfrac{}{x^{\mu}}\left(\tilde{p}^{\alpha\mu}\,a_{\beta}
+\tilde{k}^{\lambda\alpha\mu}g_{\beta\lambda}+\tilde{k}^{\alpha\lambda\mu}g_{\lambda\beta}
+\tilde{q}\indices{_{\eta}^{\lambda\alpha\mu}}\gamma\indices{^{\eta}_{\lambda\beta}}+
\tilde{q}\indices{_{\eta}^{\alpha\lambda\mu}}\gamma\indices{^{\eta}_{\beta\lambda}}-
\tilde{q}\indices{_{\beta}^{\eta\lambda\mu}}\gamma\indices{^{\alpha}_{\eta\lambda}}\right)
\vphantom{\pfrac{\tilde{q}\indices{_{\beta}^{\alpha\eta\mu}}}{x^{\mu}}}\right.\\
&\left.\mbox{}+\ppfrac{h^{\beta}}{x^{\alpha}}{x^{\eta}}\pfrac{\tilde{q}\indices{_{\beta}^{\alpha\eta\mu}}}{x^{\mu}}\right].
\end{align*}
The difference of the transformations of the Hamiltonian yields the divergence:
\begin{align}
\delta\tilde{\HCd}-\left.\delta\tilde{\HCd}\right|_{\mathrm{CT}}=
\epsilon\pfrac{}{x^{\mu}}&\left[\pfrac{h^{\beta}}{x^{\alpha}}\left(\tilde{p}^{\alpha\mu}\,a_{\beta}
+\tilde{k}^{\lambda\alpha\mu}g_{\beta\lambda}+\tilde{k}^{\alpha\lambda\mu}g_{\lambda\beta}
+\tilde{q}\indices{_{\eta}^{\lambda\alpha\mu}}\gamma\indices{^{\eta}_{\lambda\beta}}+
\tilde{q}\indices{_{\eta}^{\alpha\lambda\mu}}\gamma\indices{^{\eta}_{\beta\lambda}}-
\tilde{q}\indices{_{\beta}^{\eta\lambda\mu}}\gamma\indices{^{\alpha}_{\eta\lambda}}\right)
\vphantom{\ppfrac{h^{\beta}}{x^{\alpha}}{x^{\eta}}}\right.\nonumber\\
&\left.\mbox{}+\ppfrac{h^{\beta}}{x^{\alpha}}{x^{\eta}}\tilde{q}\indices{_{\beta}^{\alpha\eta\mu}}\right].
\label{eq:noether-current0}
\end{align}
According to Noether's theorem from Eq.~(\ref{eq:gen-noether}), the divergence of the Noether current
$\tilde{j}_{\mathrm{N}}^{\,\mu}$ vanishes exactly if $\delta\tilde{\HCd}-\left.\delta\tilde{\HCd}\right|_{\mathrm{CT}}=0$.
Then, the finite canonical transformation defined by the generating function~(\ref{eq:gen-conn-coeff})
establishes a \emph{symmetry transformation} which leaves the form of the given system invariant under
the Diff$(M)$ symmetry group---and thereby establishes the \emph{General Principle of Relativity}.
As we will see in the next section, requiring a conserved Noether current, namely a (weakly) vanishing
of the right-hand side of Eq.~(\ref{eq:noether-current0}), provides us with the generic theory of geometrodynamics.
\section{The Noether current $\tilde{j}_{\mathrm{N}}^{\,\mu}$ associated with diffeomorphism\label{sec:noether-current}}
The total set of canonical transformation rules~(\ref{eq:rules-infini3}) for the infinitesimal local translation~(\ref{eq:x-shift}) is:
\begin{subequations}\label{eq:rules-infini2}
\begin{alignat}{2}
-\frac{1}{\epsilon}\delta_{\nu}^{\mu}\,\delta\phi&=\pfrac{\tilde{j}_{\mathrm{N}}^{\,\mu}}{\tilde{\pi}^{\nu}}&&
\stackrel{!}{=}\,\delta_{\nu}^{\mu} h^{\beta}\pfrac{\phi}{x^{\beta}}\\
\frac{1}{\epsilon}\delta\tilde{\pi}^{\mu}&=\pfrac{\tilde{j}_{\mathrm{N}}^{\,\mu}}{\phi}&&
\stackrel{!}{=}\,h^{\mu}\pfrac{\tilde{\HCd}}{\phi}\\
-\frac{1}{\epsilon}\delta_{\nu}^{\mu}\,\delta a_{\alpha}&=\pfrac{\tilde{j}_{\mathrm{N}}^{\,\mu}}{\tilde{p}^{\alpha\nu}}&&
\stackrel{!}{=}\,\delta_{\nu}^{\mu}\left(
h^{\beta}\pfrac{a_{\alpha}}{x^{\beta}}+a_{\beta}\pfrac{h^{\beta}}{x^{\alpha}}\right)\\
\frac{1}{\epsilon}\delta\tilde{p}^{\alpha\mu}&=\pfrac{\tilde{j}_{\mathrm{N}}^{\,\mu}}{a_{\alpha}}&&\stackrel{!}{=}
h^{\mu}\pfrac{\tilde{\HCd}}{a_{\alpha}}+\tilde{p}^{\beta\mu}\pfrac{h^{\alpha}}{x^{\beta}}\\
-\frac{1}{\epsilon}\delta_{\nu}^{\mu}\,\delta g_{\xi\lambda}&=\pfrac{\tilde{j}_{\mathrm{N}}^{\,\mu}}{\tilde{k}^{\lambda\xi\nu}}
&&\stackrel{!}{=}\,\delta_{\nu}^{\mu}\left(h^{\beta}\pfrac{g_{\xi\lambda}}{x^{\beta}}
+g_{\beta\lambda}\pfrac{h^{\beta}}{x^{\xi}}+g_{\beta\xi}\pfrac{h^{\beta}}{x^{\lambda}}\right)\\
\frac{1}{\epsilon}\delta\tilde{k}^{\lambda\xi\mu}&=\pfrac{\tilde{j}_{\mathrm{N}}^{\,\mu}}{g_{\lambda\xi}}&&\stackrel{!}{=}
h^{\mu}\pfrac{\tilde{\HCd}}{g_{\lambda\xi}}+\tilde{k}^{\beta\xi\mu}\pfrac{h^{\lambda}}{x^{\beta}}+
\tilde{k}^{\lambda\beta\mu}\pfrac{h^{\xi}}{x^{\beta}}
\end{alignat}
and
\begin{alignat}{2}
-\frac{1}{\epsilon}\delta_{\nu}^{\mu}\,\delta\gamma\indices{^{\eta}_{\lambda\tau}}&=
\pfrac{\tilde{j}_{\mathrm{N}}^{\,\mu}}{\tilde{q}\indices{_{\eta}^{\lambda\tau\nu}}}
&&\stackrel{!}{=}\,\delta_{\nu}^{\mu}\left(h^{\beta}\pfrac{\gamma\indices{^{\eta}_{\lambda\tau}}}{x^{\beta}}-
\gamma\indices{^{\beta}_{\lambda\tau}}\pfrac{h^{\eta}}{x^{\beta}}+
\gamma\indices{^{\eta}_{\beta\tau}}\pfrac{h^{\beta}}{x^{\lambda}}+
\gamma\indices{^{\eta}_{\lambda\beta}}\pfrac{h^{\beta}}{x^{\tau}}
+\ppfrac{h^{\eta}}{x^{\lambda}}{x^{\tau}}\right)\\
\frac{1}{\epsilon}\delta\tilde{q}\indices{_{\eta}^{\lambda\tau\mu}}&=
\pfrac{\tilde{j}_{\mathrm{N}}^{\,\mu}}{\gamma\indices{^{\eta}_{\lambda\tau}}}
&&\stackrel{!}{=}h^{\mu}\pfrac{\tilde{\HCd}}{\gamma\indices{^{\eta}_{\lambda\tau}}}-
\tilde{q}\indices{_{\beta}^{\lambda\tau\mu}}\pfrac{h^{\beta}}{x^{\eta}}+
\tilde{q}\indices{_{\eta}^{\beta\tau\mu}}\pfrac{h^{\lambda}}{x^{\beta}}+
\tilde{q}\indices{_{\eta}^{\lambda\beta\mu}}\pfrac{h^{\tau}}{x^{\beta}}.
\end{alignat}
\end{subequations}
The Noether current $\tilde{j}_{\mathrm{N}}^{\,\mu}$, which defines the particular
infinitesimal transformation rules~(\ref{eq:rules-infini2}), follows as
\begin{equation}
\boxed{
\tilde{j}_{\mathrm{N}}^{\,\mu}=h^{\,\beta}\tilde{B}\indices{_{\beta}^{\mu}}+
\pfrac{h^{\,\beta}}{x^{\alpha}}\tilde{C}\indices{_{\beta}^{\alpha\mu}}+
\ppfrac{h^{\,\beta}}{x^{\alpha}}{x^{\eta}}\,\tilde{q}\indices{_{\beta}^{\alpha\eta\mu}},}
\label{eq:noether-current}
\end{equation}
wherein $\tilde{B}\indices{_{\beta}^{\mu}}$ abbreviates the sum of all terms
emerging from Eqs.~(\ref{eq:rules-infini2}) proportional to $h^{\beta}$
\begin{subequations}
\begin{align}
\tilde{B}\indices{_{\beta}^{\mu}}&=\tilde{\pi}^{\mu}\pfrac{\phi}{x^{\beta}}+
\tilde{p}^{\,\alpha\mu}\pfrac{a_{\alpha}}{x^{\beta}}+\tilde{k}^{\alpha\lambda\mu}\pfrac{g_{\alpha\lambda}}{x^{\beta}}+
\tilde{q}\indices{_{\eta}^{\alpha\lambda\mu}}\pfrac{\gamma\indices{^{\eta}_{\alpha\lambda}}}{x^{\beta}}\nonumber\\
&\quad\mbox{}-\delta_{\beta}^{\mu}\left(\tilde{\pi}^{\tau}\pfrac{\phi}{x^{\tau}}+
\tilde{p}^{\,\alpha\tau}\pfrac{a_{\alpha}}{x^{\tau}}+\tilde{k}^{\alpha\lambda\tau}\pfrac{g_{\alpha\lambda}}{x^{\tau}}+
\tilde{q}\indices{_{\eta}^{\alpha\lambda\tau}}\pfrac{\gamma\indices{^{\eta}_{\alpha\lambda}}}{x^{\tau}}-\tilde{\HCd}\right),
\label{eq:B-def}
\end{align}
whereas $\tilde{C}\indices{_{\beta}^{\alpha\mu}}$ stands for the collection of the terms proportional to $\partial h^{\beta}/\partial x^{\alpha}$:
\begin{align}
\tilde{C}\indices{_{\beta}^{\alpha\mu}}&=\tilde{p}^{\alpha\mu}\,a_{\beta}
+\tilde{k}^{\alpha\lambda\mu}g_{\beta\lambda}+\tilde{k}^{\lambda\alpha\mu}g_{\lambda\beta}
+\tilde{q}\indices{_{\eta}^{\alpha\lambda\mu}}\gamma\indices{^{\eta}_{\beta\lambda}}
+\tilde{q}\indices{_{\eta}^{\lambda\alpha\mu}}\gamma\indices{^{\eta}_{\lambda\beta}}
-\tilde{q}\indices{_{\beta}^{\eta\lambda\mu}}\gamma\indices{^{\alpha}_{\eta\lambda}}.\label{eq:C-def}
\end{align}
\end{subequations}
According to the Hamiltonian form of Noether's theorem, the canonical
trans\-formation rules emerging from the generating function $\FCd_{3}^{\mu}$
(Eq.~(\ref{eq:F3-infini})) represent a symmetry transformation of the given system exactly if
the divergence of the function $\tilde{j}_{\mathrm{N}}^{\,\mu}(x)$ contained therein vanishes:
\begin{equation*}
\pfrac{\tilde{j}_{\mathrm{N}}^{\,\mu}}{x^{\mu}}=0\qquad\Leftrightarrow\qquad
\tilde{j}_{\mathrm{N}}^{\,\mu}\text{ in }\FCd_{3}^{\mu}\text{ defines an (infinitesimal) symmetry transformation}.
\end{equation*}
Then, $\tilde{j}_{\mathrm{N}}^{\,\mu}$ represents the conserved Noether current.
Here, $\tilde{B}\indices{_{\beta}^{\mu}}$ is actually the \emph{local} Hamiltonian representation
of the canonical energy-momentum tensor density of the total dynamical system consisting of scalar
and vector fields in conjunction with a dynamic metric and connection.
As will be derived in the following section, the terms emerging from $\tilde{C}\indices{_{\beta}^{\alpha\mu}}$
in the condition for a \emph{conserved} Noether current convert the \emph{partial} derivatives in
$\tilde{B}\indices{_{\beta}^{\mu}}$ into \emph{covariant} derivatives---and hence into the
\emph{global} Hamiltonian representation of the canonical energy-momentum tensor of the total system.
\section{Discussion of the conserved Noether current for a Poincar\'e symmetry transformation}
For the particular case of $h^{\beta}(x)=h\indices{_0^{\beta}}+h\indices{_1^{\beta}_{\alpha}}\,x^{\alpha}$ a \emph{linear} function of $x$,
Eq.~(\ref{eq:x-shift}) defines the infinitesimal Poincar\'e transformation if $h_{1,\beta\alpha}$ is skew-symmetric:
\begin{equation}\label{eq:linear-h}
X_{\beta}=x_{\beta}+\epsilon\left(h_{0,\beta}+h_{1,\beta\alpha}\,x^{\alpha}\right),
\qquad h_{0,\beta},h_{1,\beta\alpha}=\mathrm{const.},\qquad h_{1,(\beta\alpha)}=0.
\end{equation}
The Noether current~(\ref{eq:noether-current}) then reduces to
\begin{equation*}
\tilde{j}_{\mathrm{N}}^{\,\mu}=h_{0,\beta}\tilde{B}\indices{^{\beta}^{\mu}}+
h\indices{_1_{,[\beta}_{\alpha]}}\left(\tilde{B}\indices{^{\beta}^{\mu}}x^\alpha+\tilde{C}\indices{^{\beta\alpha\mu}}\right),
\end{equation*}
hence the pertaining condition for a conserved Noether current follows as:
\begin{equation*}
\pfrac{\tilde{j}_{\mathrm{N}}^{\,\mu}}{x^{\mu}}=h_{0,\beta}\pfrac{\tilde{B}^{\beta\mu}}{x^{\mu}}
+h_{1,[\beta\alpha]}\left(\tilde{B}^{\beta\alpha}+\pfrac{\tilde{B}^{\beta\mu}}{x^\mu}x^\alpha
+\pfrac{\tilde{C}^{\beta\alpha\mu}}{x^{\mu}}\right)\stackrel{!}{=}0,
\end{equation*}
which splits for a constant spacetime geometry into the two conservation equations~\cite{hehl76b,Hehl:2014eja}:
\begin{equation*}
\pfrac{B\indices{^{\beta}^{\mu}}}{x^{\mu}}=0,\qquad
B\indices{^{[\beta\alpha]}}+\pfrac{C\indices{^{[\beta\alpha]\mu}}}{x^{\mu}}=0
\end{equation*}
for both the energy-momentum and the angular momentum, which hold exactly if the given system is invariant under Poincar\'e transformations.
With
\begin{equation*}
C\indices{^{[\beta\alpha]\mu}}=\onehalf\left(p^{\alpha\mu}a^{\beta}-p^{\beta\mu}a^{\alpha}\right),
\end{equation*}
the skew-symmetric part of $C\indices{^{\beta\alpha\mu}}$
then defines the canonical spin tensor of the source vector field.

The Noether current~(\ref{eq:noether-current}) thus generalizes this case to a dynamic spacetime geometry if
$h^{\,\beta}(x)$ stands for a differentiable vector function of spacetime with non-vanishing second and third derivatives.
This will be discussed in the following section.
\section{Discussion of the general condition for a conserved Noether current\label{sec:discussion}}
For the general case, the divergence of the Noether current~(\ref{eq:noether-current}) is obtained as
\begin{align}
\pfrac{\tilde{j}_{\mathrm{N}}^{\,\mu}}{x^{\mu}}&=h^{\beta}\pfrac{\tilde{B}\indices{_{\beta}^{\mu}}}{x^{\mu}}
+\pfrac{h^{\beta}}{x^{\alpha}}\left(\tilde{B}\indices{_{\beta}^{\alpha}}+\pfrac{\tilde{C}\indices{_{\beta}^{\alpha\mu}}}{x^{\mu}}\right)
+\ppfrac{h^{\beta}}{x^{\alpha}}{x^{\eta}}\left(\tilde{C}\indices{_{\beta}^{\alpha\eta}}+
\pfrac{\tilde{q}\indices{_{\beta}^{\alpha\eta\mu}}}{x^{\mu}}\right)
+\pppfrac{h^{\beta}}{x^{\alpha}}{x^{\eta}}{x^{\mu}}\,\tilde{q}\indices{_{\beta}^{\alpha\eta\mu}}\stackrel{!}{=}0.
\label{eq:gen-conn-infini1}
\end{align}
With this equation involving a vanishing \emph{partial} derivative of the
Noether current $\tilde{j}_{\mathrm{N}}^{\,\mu}$, it establishes a proper (local) conservation law.
Yet, the field equations emerging from Eq.~(\ref{eq:gen-conn-infini1})
will turn out to be tensor equations and thus hold invariantly in any reference frame.
As $h^{\beta}(x)$ is supposed to be an arbitrary function of $x$, Eq.~(\ref{eq:gen-conn-infini1}) has
$4+16+40+80=140$ independent coefficients and thus defines a finite-dimensional subgroup of the infinite-dimensional diffeomorphism group.
We note that in this description the torsion degrees of freedom do not emerge from separate dynamical quantities
but are implicitly contained in the additional freedom of the Riemann-Cartan tensor, which is constructed
on the basis of the $64$ non-symmetric connection coefficients $\gamma\indices{^{\eta}_{\alpha\beta}}$.
In contrast, the Riemann tensor of Einstein's general relativity emerges from the $40$ connection
coefficients---referred to as Christoffel symbols---that are symmetric in their lower index pair.

The  four separate conditions for each order of derivatives of $h^{\beta}(x)$
will be worked out in the following sections.
\subsection{Condition~1: term proportional to the third partial derivatives of $h^{\beta}$}
The only term proportional to the third derivative of $h^{\beta}$ is the canonical momentum
$\tilde{q}\indices{_{\beta}^{\alpha\eta\mu}}$, hence the dual of the partial $x^{\mu}$-derivative
of the connection $\gamma\indices{^{\,\beta}_{\alpha\eta}}$.
A necessary and sufficient condition for this term to vanish is that the
(generally non-zero) momentum $\tilde{q}\indices{_{\beta}^{\alpha\eta\mu}}$
is skew-symmetric in one of the index pairs formed out of $\alpha$, $\eta$, and $\mu$.
We choose here the last index pair, namely $\eta$ and $\mu$, and define
\begin{equation}\label{eq:noe-cond3}
\boxed{\tilde{q}\indices{_{\beta}^{\alpha\eta\mu}}=-\tilde{q}\indices{_{\beta}^{\alpha\mu\eta}}}
\end{equation}
which implies that $\tilde{q}\indices{_{\beta}^{\alpha\eta\mu}}$ need not
in addition be skew-symmetric in $\alpha$ and $\eta$.
Equation~(\ref{eq:gen-conn-infini1}) then simplifies to
\begin{equation}\label{eq:gen-conn-infini2}
\pfrac{\tilde{j}_{\mathrm{N}}^{\,\mu}}{x^{\mu}}=h^{\,\beta}\pfrac{\tilde{B}\indices{_{\beta}^{\mu}}}{x^{\mu}}
+\pfrac{h^{\,\beta}}{x^{\alpha}}\left(\tilde{B}\indices{_{\beta}^{\alpha}}+
\pfrac{\tilde{C}\indices{_{\beta}^{\alpha\mu}}}{x^{\mu}}\right)
+\ppfrac{h^{\beta}}{x^{\alpha}}{x^{\eta}}\left(\tilde{C}\indices{_{\beta}^{\alpha\eta}}+
\pfrac{\tilde{q}\indices{_{\beta}^{\alpha\eta\mu}}}{x^{\mu}}\right)\stackrel{!}{=}0.
\end{equation}
\subsection{Condition~2: terms proportional to the second partial derivatives of $h^{\beta}$}
A zero divergence of the Noether current for any symmetry transformation~(\ref{eq:x-shift})---hence
for arbitrary functions $h^{\beta}(x)$---requires in particular that the sum of terms related to the
\emph{second} derivatives of $h^{\beta}(x)$ in Eq.~(\ref{eq:gen-conn-infini2}) vanishes.
This means with $\tilde{C}\indices{_{\beta}^{\alpha\eta}}$ from Eq.~(\ref{eq:C-def})
inserted into the last term of Eq.~(\ref{eq:gen-conn-infini2}):
\begin{equation*}
\ppfrac{h^{\,\beta}}{x^{\alpha}}{x^{\eta}}\left(
\pfrac{\tilde{q}\indices{_{\beta}^{\alpha\eta\mu}}}{x^{\mu}}+\tilde{p}^{\alpha\eta}\,a_{\beta}
+\tilde{k}^{\alpha\lambda\eta}g_{\beta\lambda}+\tilde{k}^{\lambda\alpha\eta}g_{\lambda\beta}
+\tilde{q}\indices{_{\xi}^{\alpha\lambda\eta}}\gamma\indices{^{\xi}_{\beta\lambda}}
+\cancel{\tilde{q}\indices{_{\xi}^{\lambda\alpha\eta}}\gamma\indices{^{\xi}_{\lambda\beta}}}
-\tilde{q}\indices{_{\beta}^{\xi\lambda\eta}}\gamma\indices{^{\alpha}_{\xi\lambda}}\right)\stackrel{!}{=}0.
\end{equation*}
Due to the symmetry of the second partial derivatives of $h^{\,\beta}$ in $\alpha$ and $\eta$,
the term $\tilde{q}\indices{_{\xi}^{\lambda\alpha\eta}}\gamma\indices{^{\xi}_{\lambda\beta}}$
drops out by virtue of the skew-symmetry condition~(\ref{eq:noe-cond3}).
As no symmetries in $\alpha$ and $\eta$ are implied in the remaining terms, one encounters the \emph{sufficient condition}:
\begin{equation}\label{eq:noe-cond1}
\boxed{\pfrac{\tilde{q}\indices{_{\beta}^{\alpha\eta\mu}}}{x^{\mu}}
+\tilde{p}^{\alpha\eta}\,a_{\beta}+\tilde{k}^{\alpha\lambda\eta}g_{\beta\lambda}+\tilde{k}^{\lambda\alpha\eta}g_{\lambda\beta}
+\tilde{q}\indices{_{\xi}^{\alpha\lambda\eta}}\gamma\indices{^{\xi}_{\beta\lambda}}
-\tilde{q}\indices{_{\beta}^{\xi\lambda\eta}}\gamma\indices{^{\alpha}_{\xi\lambda}}=0.}
\end{equation}
We can express Eq.~(\ref{eq:noe-cond1}) equivalently as the tensor equation
\begin{equation}\label{eq:noe-cond1a}
\tilde{q}\indices{_{\beta}^{\alpha\eta\mu}_{;\mu}}
+\tilde{p}^{\alpha\eta}\,a_{\beta}+\tilde{k}^{\alpha\lambda\eta}g_{\beta\lambda}+\tilde{k}^{\lambda\alpha\eta}g_{\lambda\beta}
-\tilde{q}\indices{_{\beta}^{\alpha\xi\mu}}s\indices{^{\eta}_{\xi\mu}}
-2\tilde{q}\indices{_{\beta}^{\alpha\eta\mu}}s\indices{^{\xi}_{\mu\xi}}=0,
\end{equation}
with $s\indices{^{\eta}_{\xi\mu}}\equiv\gamma\indices{^{\eta}_{[\xi\mu]}}$ the Cartan torsion tensor.
It agrees with the corresponding field equation~(56) of Ref.~\cite{struckmeier17a}.
Its implications will be discussed in Sects.~\ref{sec:dis1} and~\ref{sec:dis0}.

With Eq.~(\ref{eq:noe-cond1}), the condition~(\ref{eq:gen-conn-infini2})
for the divergence of the Noether current now further simplifies to
\begin{equation}
\pfrac{\tilde{j}_{\mathrm{N}}^{\,\mu}}{x^{\mu}}=h^{\,\beta}\pfrac{\tilde{B}\indices{_{\beta}^{\mu}}}{x^{\mu}}
+\pfrac{h^{\,\beta}}{x^{\alpha}}\left(\tilde{B}\indices{_{\beta}^{\alpha}}+
\pfrac{\tilde{C}\indices{_{\beta}^{\alpha\eta}}}{x^{\eta}}\right)\stackrel{!}{=}0.
\label{eq:gen-conn-infini4}
\end{equation}
\subsection{Condition~3: terms proportional to the first partial derivatives of $h^{\,\beta}$}
For a generally conserved Noether current, the coefficient proportional to the first
derivative of $h^{\,\beta}$ in Eq.~(\ref{eq:gen-conn-infini4}) must vanish as well, hence
\begin{equation}\label{eq:noe-cond2}
\tilde{B}\indices{_{\beta}^{\alpha}}+\pfrac{\tilde{C}\indices{_{\beta}^{\alpha\eta}}}{x^{\eta}}\stackrel{!}{=}0.
\end{equation}
Equation~(\ref{eq:noe-cond2}) writes in expanded form with $\tilde{C}\indices{_{\beta}^{\alpha\mu}}$ from Eq.~(\ref{eq:C-def})
\begin{equation*}
\tilde{B}\indices{_{\beta}^{\alpha}}+\pfrac{}{x^{\eta}}\left(
\tilde{p}^{\alpha\eta}\,a_{\beta}+\tilde{k}^{\alpha\lambda\eta}g_{\beta\lambda}+\tilde{k}^{\lambda\alpha\eta}g_{\lambda\beta}
+\tilde{q}\indices{_{\xi}^{\alpha\lambda\eta}}\gamma\indices{^{\xi}_{\beta\lambda}}
+\tilde{q}\indices{_{\xi}^{\lambda\alpha\eta}}\gamma\indices{^{\xi}_{\lambda\beta}}
-\tilde{q}\indices{_{\beta}^{\xi\lambda\eta}}\gamma\indices{^{\alpha}_{\xi\lambda}}\right)=0,
\end{equation*}
which is expressed equivalently inserting Eq.~(\ref{eq:noe-cond1})
\begin{equation*}
\tilde{B}\indices{_{\beta}^{\alpha}}+\pfrac{}{x^{\eta}}\left(
-\pfrac{\tilde{q}\indices{_{\beta}^{\alpha\eta\mu}}}{x^{\mu}}+
\tilde{q}\indices{_{\xi}^{\lambda\alpha\eta}}\gamma\indices{^{\xi}_{\lambda\beta}}\right)=0.
\end{equation*}
This equation reduces due to the skew-symmetry of $\tilde{q}\indices{_{\beta}^{\alpha\mu\eta}}$ in its last index pair to
\begin{equation}
\boxed{\tilde{B}\indices{_{\beta}^{\alpha}}+\pfrac{}{x^{\eta}}\left(
\tilde{q}\indices{_{\xi}^{\lambda\alpha\eta}}\gamma\indices{^{\xi}_{\lambda\beta}}\right)=0.}
\label{eq:noe-cond2a}
\end{equation}
As $\tilde{B}\indices{_{\beta}^{\alpha}}$---defined by Eq.~(\ref{eq:B-def})---is the local representation
of the canonical energy-momentum tensor of the \emph{total} system of source fields and dynamic spacetime,
Eq.~(\ref{eq:noe-cond2a}) establishes a correlation of this (pseudo-)tensor with the dynamic spacetime.
The explicit form of this equation will be discussed in Sect.~\ref{sec:dis2}.
\subsection{Condition~4: term proportional to $h^{\beta}$}
Finally, the term proportional to $h^{\,\beta}$ in Eq.~(\ref{eq:gen-conn-infini4}) must separately vanish
\begin{equation}\label{eq:noe-cond4}
\boxed{\pfrac{\tilde{B}\indices{_{\beta}^{\mu}}}{x^{\mu}}=0.}
\end{equation}
Equation~(\ref{eq:noe-cond4}) thus establishes a \emph{local} energy and momentum conservation law
of the \emph{total} system of scalar and vector source fields fields on the one hand, and the dynamic
spacetime, described by the metric and the connection on the other hand.
It turns out to coincide with the divergence of Eq.~(\ref{eq:noe-cond2a})
by virtue of the skew-symmetry of $\tilde{q}\indices{_{\eta}^{\lambda\alpha\mu}}$ in its last index pair
\begin{align*}
0&=\pfrac{}{x^{\mu}}\left(\tilde{B}\indices{_{\beta}^{\mu}}+
\pfrac{\tilde{q}\indices{_{\eta}^{\lambda\mu\alpha}}}{x^{\alpha}}\gamma\indices{^{\eta}_{\lambda\beta}}+
\tilde{q}\indices{_{\eta}^{\lambda\mu\alpha}}\pfrac{\gamma\indices{^{\eta}_{\lambda\beta}}}{x^{\alpha}}\right)\\
&=\pfrac{\tilde{B}\indices{_{\beta}^{\mu}}}{x^{\mu}}+\ppfrac{\tilde{q}\indices{_{\eta}^{\lambda\mu\alpha}}}{x^{\mu}}{x^{\alpha}}
\gamma\indices{^{\eta}_{\lambda\beta}}+\pfrac{\tilde{q}\indices{_{\eta}^{\lambda\mu\alpha}}}{x^{\alpha}}
\pfrac{\gamma\indices{^{\eta}_{\lambda\beta}}}{x^{\mu}}+
\pfrac{\tilde{q}\indices{_{\eta}^{\lambda\mu\alpha}}}{x^{\mu}}\pfrac{\gamma\indices{^{\eta}_{\lambda\beta}}}{x^{\alpha}}+
\tilde{q}\indices{_{\eta}^{\lambda\mu\alpha}}\ppfrac{\gamma\indices{^{\eta}_{\lambda\beta}}}{x^{\mu}}{x^{\alpha}}\\
&=\pfrac{\tilde{B}\indices{_{\beta}^{\mu}}}{x^{\mu}}.
\end{align*}
Equation~(\ref{eq:noe-cond4}) is thus equivalent to a vanishing divergence of
Eq.~(\ref{eq:noe-cond2a}) as the divergence of its last term vanishes identically.
This demonstrates the consistency of the set of equations~(\ref{eq:noe-cond3}), (\ref{eq:noe-cond1}), (\ref{eq:noe-cond2a}),
and~(\ref{eq:noe-cond4}), which were obtained from the Noether condition~(\ref{eq:gen-conn-infini1}).
\subsection{Amended canonical energy-momentum tensor $\tilde{\Theta}\indices{_{\nu}^{\,\mu}}$\label{sec:dis2}}
We now express Eq.~(\ref{eq:noe-cond2a}) in expanded form
by inserting the local representation of the canonical energy-momentum tensor
$\tilde{B}\indices{_{\beta}^{\alpha}}$ from Eq.~(\ref{eq:B-def}) and the field equation for the partial
divergence of $\tilde{q}\indices{_{\eta}^{\lambda\alpha\mu}}$ from Eq.~(\ref{eq:noe-cond1}):
\begin{align}
\tilde{\Theta}\indices{_{\beta}^{\,\alpha}}\equiv\,&\tilde{B}\indices{_{\beta}^{\alpha}}+\pfrac{}{x^{\mu}}\left(
\tilde{q}\indices{_{\eta}^{\lambda\alpha\mu}}\gamma\indices{^{\eta}_{\lambda\beta}}\right)\nonumber\\
=\,&\tilde{\pi}^{\alpha}\pfrac{\phi}{x^{\beta}}
+\tilde{p}^{\,\xi\alpha}\pfrac{a_{\xi}}{x^{\beta}}-\tilde{p}^{\xi\alpha}a_{\eta}\gamma\indices{^{\eta}_{\xi\beta}}
+\tilde{k}^{\xi\lambda\alpha}\pfrac{g_{\xi\lambda}}{x^{\beta}}
-\tilde{k}^{\xi\lambda\alpha}g_{\eta\lambda}\gamma\indices{^{\eta}_{\xi\beta}}
-\tilde{k}^{\lambda\xi\alpha}g_{\lambda\eta}\gamma\indices{^{\eta}_{\xi\beta}}\nonumber\\
&+\tilde{q}\indices{_{\eta}^{\xi\lambda\alpha}}\left(\pfrac{\gamma\indices{^{\eta}_{\xi\lambda}}}{x^{\beta}}
-\pfrac{\gamma\indices{^{\eta}_{\xi\beta}}}{x^{\lambda}}
+\gamma\indices{^{\eta}_{\tau\beta}}\gamma\indices{^{\tau}_{\xi\lambda}}
-\gamma\indices{^{\eta}_{\tau\lambda}}\gamma\indices{^{\tau}_{\xi\beta}}\right)\nonumber\\
&-\delta_{\beta}^{\alpha}\left(\tilde{\pi}^{\tau}\pfrac{\phi}{x^{\tau}}+
\tilde{p}^{\,\xi\tau}\pfrac{a_{\xi}}{x^{\tau}}+\tilde{k}^{\xi\lambda\tau}\pfrac{g_{\xi\lambda}}{x^{\tau}}+
\tilde{q}\indices{_{\eta}^{\xi\lambda\tau}}\pfrac{\gamma\indices{^{\eta}_{\xi\lambda}}}{x^{\tau}}-\tilde{\HCd}\right)=0.
\label{eq:noe-cond6a}
\end{align}
The term proportional to $\tilde{q}\indices{_{\eta}^{\xi\lambda\alpha}}$ is exactly the Riemann tensor,
defined in the convention of Misner et al.~\cite{misner73} by
\begin{equation}\label{eq:riemann-tensor}
R\indices{^{\eta}_{\xi\beta\lambda}}=\pfrac{\gamma\indices{^{\eta}_{\xi\lambda}}}{x^{\beta}}-
\pfrac{\gamma\indices{^{\eta}_{\xi\beta}}}{x^{\lambda}}+
\gamma\indices{^{\eta}_{\tau\beta}}\gamma\indices{^{\tau}_{\xi\lambda}}-
\gamma\indices{^{\eta}_{\tau\lambda}}\gamma\indices{^{\tau}_{\xi\beta}}.
\end{equation}
We remark that the tensor~(\ref{eq:riemann-tensor}) is actually the \emph{Riemann-Cartan tensor},
as it is defined here from a non-symmetric connection, $\gamma\indices{^{\eta}_{[\xi\lambda]}}\not\equiv0$.
Moreover, the torsion---hence the addressed skew-symmetric part of the connection---does \emph{not}
emerge as a separate dynamic quantitiy in our description as all terms containing the connection
are absorbed into the covariant derivatives and into the Riemann-Cartan tensor.

Equation~(\ref{eq:noe-cond6a}) now writes equivalently after merging the partial derivatives
with the $\gamma$-dependent terms into covariant derivatives:
\begin{align}
\tilde{\Theta}\indices{_{\beta}^{\,\alpha}}=\,&\tilde{\pi}^{\alpha}\pfrac{\phi}{x^{\beta}}
+\tilde{p}^{\,\xi\alpha}\,a_{\xi;\beta}+\tilde{k}^{\xi\lambda\alpha}\,g_{\xi\lambda;\beta}
-\tilde{q}\indices{_{\eta}^{\xi\lambda\alpha}}\,R\indices{^{\eta}_{\xi\lambda\beta}}\nonumber\\
&-\delta_{\beta}^{\alpha}\left(\tilde{\pi}^{\tau}\pfrac{\phi}{x^{\tau}}+
\tilde{p}^{\,\xi\tau}\pfrac{a_{\xi}}{x^{\tau}}+\tilde{k}^{\xi\lambda\tau}\pfrac{g_{\xi\lambda}}{x^{\tau}}+
\tilde{q}\indices{_{\eta}^{\xi\lambda\tau}}\pfrac{\gamma\indices{^{\eta}_{\xi\lambda}}}{x^{\tau}}-\tilde{\HCd}\right)=0.
\label{eq:noe-cond6}
\end{align}
The remaining partial derivatives can similarly be rewritten as tensors
if we subtract the corresponding ``gauge Hamiltonian'' terms from the total Hamiltonian
$\tilde{\HCd}(\tilde{\pi},\phi,\tilde{p},a,\tilde{k},g,\tilde{q},\gamma)$:
\begin{equation}\label{eq:split-ham}
\tilde{\HCd}=\tilde{\HCd}_{0}+\tilde{\HCd}_{\mathrm{G}}+\tilde{\HCd}_{\mathrm{Gr}},
\end{equation}
with the ``gauge Hamiltonian'' $\tilde{\HCd}_{\mathrm{G}}(\tilde{p},a,\tilde{k},g,\tilde{q},\gamma)$ given by
\begin{equation}
\tilde{\HCd}_{\mathrm{G}}=\tilde{p}^{\,\xi\tau}a_{\eta}\,\gamma\indices{^{\eta}_{\xi\tau}}
+\tilde{k}^{\lambda\xi\tau}\,g_{\eta\xi}\,\gamma\indices{^{\eta}_{\lambda\tau}}
+\tilde{k}^{\xi\lambda\tau}\,g_{\xi\eta}\,\gamma\indices{^{\eta}_{\lambda\tau}}
+\tilde{q}\indices{_{\eta}^{\xi\lambda\tau}}\gamma\indices{^{\eta}_{\alpha\lambda}}\gamma\indices{^{\alpha}_{\xi\tau}},
\label{eq:gauge-ham4}
\end{equation}
which agrees with the gauge Hamiltonian derived in Ref.~\cite{struckmeier17a}.
The partial derivatives of the fields in Eq.~(\ref{eq:noe-cond6}) are thus converted into covariant derivatives,
whereas the partial derivative of the connection reemerges as one-half the Riemann-Cartan tensor:
\begin{align}
\tilde{\Theta}\indices{_{\nu}^{\,\mu}}&=\tilde{\pi}^{\,\mu}\pfrac{\phi}{x^{\nu}}+
\tilde{p}^{\,\alpha\mu}a_{\alpha;\nu}+\tilde{k}^{\alpha\beta\mu}g_{\alpha\beta;\nu}-
\tilde{q}\indices{_{\eta}^{\alpha\beta\mu}}R\indices{^{\eta}_{\alpha\beta\nu}}\nonumber\\
&\quad-\delta_{\nu}^{\mu}\left(\tilde{\pi}^{\tau}\pfrac{\phi}{x^{\tau}}+
\tilde{p}^{\,\alpha\tau}a_{\alpha;\tau}+\tilde{k}^{\alpha\beta\tau}g_{\alpha\beta;\tau}-
\onehalf\tilde{q}\indices{_{\eta}^{\alpha\beta\tau}}R\indices{^{\eta}_{\alpha\beta\tau}}-\tilde{\HCd}_{0}-\tilde{\HCd}_{\mathrm{Gr}}\right)=0.
\label{eq:noe-cond6b}
\end{align}
$\tilde{\Theta}\indices{_{\nu}^{\,\mu}}$ obviously represents the canonical
energy-momentum tensor of the closed total system of dynamical fields and spacetime.
The non-existing covariant $x^{\tau}$-derivative of the connection $\gamma\indices{^{\eta}_{\alpha\beta}}$
happens to be replaced by the tensor $-\onehalf R\indices{^{\eta}_{\alpha\beta\tau}}$.
The Hamiltonian $\tilde{\HCd}_{0}\big(\phi,\tilde{\pi},a,\tilde{p},g\big)$ describes the dynamics of the given
system of scalar and vector fields, whereas $\tilde{\HCd}_{\mathrm{Gr}}\big(g,\tilde{k},\tilde{q}\big)$
is supposed to describe the dynamics of the free (uncoupled) gravitational field.
Remarkably, the value of the total energy-momentum tensor $\tilde{\Theta}\indices{_{\nu}^{\,\mu}}$ is zero, which confirms
the conjecture of a \emph{zero-energy universe}~\cite{lorentz1916,levi-civita1917,jordan39,sciama53,feynman62,hawking03}.
Dividing by $\sqrt{-g}$, it can be split into the energy-momentum tensors of the source fields $\theta\indices{_{\nu}^{\,\mu}}$
and of the gravitational field $\vartheta\indices{_{\nu}^{\,\mu}}$:
\begin{align*}
\theta\indices{_\nu^\mu}&=\pi^\mu\pfrac{\phi}{x^\nu}+p^{\alpha\mu}a_{\alpha;\nu}
-\delta_\nu^\mu\left(\pi^\tau\pfrac{\phi}{x^\tau}+p^{\alpha\tau}a_{\alpha;\tau}-\HCd_0\right)\\
\vartheta\indices{_\nu^\mu}&=k^{\alpha\beta\mu}g_{\alpha\beta;\nu}-
q\indices{_\eta^{\alpha\beta\mu}}R\indices{^\eta_{\alpha\beta\nu}}-\delta_\nu^\mu
\left(k^{\alpha\beta\tau}g_{\alpha\beta;\tau}-\onehalf q\indices{_\eta^{\alpha\beta\tau}}R\indices{^\eta_{\alpha\beta\tau}}-\HCd_{\mathrm{Gr}}\right),
\end{align*}
which then succinctly writes as the energy-momentum tensor balance equation
\begin{equation*}
\theta\indices{_\nu^\mu}+\vartheta\indices{_\nu^\mu}=0.
\end{equation*}
\section{Discussion of the field equations in the Lagrangian description\label{sec:lagrange}}
\subsection{Lagrangian representation of the Noether condition~(\ref{eq:noe-cond6b})}
The sum in parentheses of Eq.~(\ref{eq:noe-cond6}) represents the Lagrangian $\tilde{\LCd}$ of the total dynamical system
established by the scalar field, the vector field, the metric, and the connection.
This Lagrangian must be a world scalar density in order for Eq.~(\ref{eq:noe-cond6})
to be a tensor equation, hence to be form-invariant under the $\mathrm{Diff}(M)$ symmetry group.
The Lagrangian $\tilde{\LCd}$ is thus equivalently expressed as
\begin{equation}\label{eq:L-total}
\tilde{\LCd}=\tilde{\pi}^{\tau}\pfrac{\phi}{x^{\tau}}+
\tilde{p}^{\,\alpha\tau}a_{\alpha;\tau}+\tilde{k}^{\alpha\beta\tau}g_{\alpha\beta;\tau}-
\onehalf\tilde{q}\indices{_{\eta}^{\alpha\beta\tau}}R\indices{^{\eta}_{\alpha\beta\tau}}-\tilde{\HCd}_{0}.
\end{equation}
The field equations~(\ref{eq:noe-cond1a}) and~(\ref{eq:noe-cond6b}) will be rewritten in the following on the basis of this Lagrangian.
The canonical momenta are obtained from the Lagrangian~(\ref{eq:L-total}) as
\begin{subequations}\label{eq:trans-lag}
\begin{align}
\tilde{\pi}^{\,\mu}&=\pfrac{\tilde{\LCd}}{\left(\pfrac{\phi}{x^{\mu}}\right)},&
\tilde{p}^{\,\alpha\mu}&=\pfrac{\tilde{\LCd}}{a_{\alpha;\mu}}\\
\tilde{k}^{\alpha\beta\mu}&=\pfrac{\tilde{\LCd}}{g_{\alpha\beta;\mu}},&
-\onehalf\tilde{q}\indices{_{\eta}^{\alpha\beta\mu}}&=\pfrac{\tilde{\LCd}}{R\indices{^{\eta}_{\alpha\beta\mu}}}.
\end{align}
\end{subequations}
The Noether condition~(\ref{eq:noe-cond6b})---hence the assertion of a vanishing canonical energy-momentum
tensor $\tilde{\Theta}\indices{_{\nu}^{\,\mu}}=0$ of the total system of dynamical fields
and spacetime described by $\tilde{\LCd}$ ---is now encountered in the equivalent form:
\begin{equation}
\tilde{\Theta}\indices{_{\nu}^{\,\mu}}=
\pfrac{\tilde{\LCd}}{\left(\pfrac{\phi}{x^{\mu}}\right)}\pfrac{\phi}{x^{\nu}}
+\pfrac{\tilde{\LCd}}{a_{\alpha;\mu}}a_{\alpha;\nu}
+\pfrac{\tilde{\LCd}}{g_{\alpha\beta;\mu}}g_{\alpha\beta;\nu}
+2\pfrac{\tilde{\LCd}}{R\indices{^{\eta}_{\alpha\beta\mu}}}R\indices{^{\eta}_{\alpha\beta\nu}}
-\delta_{\nu}^{\mu}\,\tilde{\LCd}=0.
\label{eq:noe-cond6c}
\end{equation}
We may now split the Lagrangian $\tilde{\LCd}$ of the total system into
a Lagrangian $\tilde{\LCd}_{0}$ for the dynamics of the base fields $\phi$ and $a_{\mu}$,
and a Lagrangian $\tilde{\LCd}_{R}$ for the dynamics of the free gravitational field $R\indices{^{\eta}_{\alpha\beta\nu}}$
according to
\begin{equation*}
\tilde{\LCd}=\tilde{\LCd}_{0}+\tilde{\LCd}_{R},\qquad
\tilde{\LCd}_{0}=\tilde{\LCd}_{0}\big(\phi,\partial\phi,a,\partial a,g,\gamma\big),\qquad
\tilde{\LCd}_{R}=\tilde{\LCd}_{R}\big(\gamma,\partial\gamma,g,\partial g\big),
\end{equation*}
where each Lagrangian represents separately a world scalar density.
In the most general case, $\tilde{\LCd}_{R}$ also describes the dynamics of the covariant derivative
of the metric, $g_{\mu\nu;\xi}\neq0$, which is zero for the case of metric compatibility.
As no derivative with respect to the metric appears in Eq.~(\ref{eq:noe-cond6c}),
we are allowed to divide all terms by $\sqrt{-g}$, whereby the field equation acquires the form
of the \emph{generic Einstein-type equation}:
\begin{equation}\label{eq:noe-cond10}
2\pfrac{\LCd_{R}}{R\indices{^{\eta}_{\alpha\beta\mu}}}R\indices{^{\eta}_{\alpha\beta\nu}}+
\pfrac{\LCd_{R}}{g\indices{_{\alpha\beta;\mu}}}g\indices{_{\alpha\beta;\nu}}-\delta_{\nu}^{\mu}\,\LCd_{R}
=-\pfrac{\LCd_{0}}{\left(\pfrac{\phi}{x^{\mu}}\right)}\pfrac{\phi}{x^{\nu}}-
\pfrac{\LCd_{0}}{a_{\alpha;\mu}}a_{\alpha;\nu}+\delta_{\nu}^{\mu}\LCd_{0}.
\end{equation}
The left-hand side of Eq.~(\ref{eq:noe-cond10}), pertaining to $\LCd_{R}$,
can be regarded as the \emph{canonical} energy-momentum tensor of dynamical spacetime.
With the right-hand side the negative canonical energy-momentum tensor of the system $\LCd_{0}$,
\begin{equation}\label{eq:emt-lag}
\theta\indices{_{\nu}^{\,\mu}}\equiv
\pfrac{\LCd_{0}}{\left(\pfrac{\phi}{x^{\mu}}\right)}\pfrac{\phi}{x^{\nu}}+
\pfrac{\LCd_{0}}{a_{\alpha;\mu}}a_{\alpha;\nu}-\delta_{\nu}^{\mu}\LCd_{0},
\end{equation}
Eq.~(\ref{eq:noe-cond10}) thus establishes a \emph{generally covariant energy-momentum balance relation},
with the coupling of spacetime and source fields induced by both, the metric $g_{\alpha\beta}$
and the connection $\gamma\indices{^{\alpha}_{\xi\tau}}$.
Hence, the energy-momentum tensor of the closed total system $\LCd=\LCd_{0}+\LCd_{R}$
is equal to zero---a result also found by Jordan~\cite{jordan39} and,
independently in each case, by Sciama~\cite{sciama53}, Feynman~\cite[p.~10]{feynman62},
and Hawking~\cite{hawking03}, based on different physical reasoning.
A vanishing total energy-momentum tensor of the universe---taken, by definition, as a closed system---is commonly referred to as the \emph{zero-energy principle}.
As we see now, the zero-energy principle follows directly from the requirement that the action
integral~(\ref{eq:varprinzip1}) be diffeomorphism invariant, which has the physical content that
the laws of physics should take the same form in all reference
frames---which is exactly the gist of the \emph{Principle of General Relativity}.

For the particular case of a covariantly conserved metric,
$g_{\alpha\beta;\nu}\equiv0$, the respective terms in Eq.~(\ref{eq:noe-cond10}) drop out.
This yields the generic Einstein-type equation for the case of \emph{metric compatibility}:
\begin{equation}\label{eq:noe-cond10a}\boxed{
2\pfrac{\LCd_{R}}{R\indices{^{\eta}_{\alpha\beta\mu}}}R\indices{^{\eta}_{\alpha\beta\nu}}-
\delta_{\nu}^{\mu}\,\LCd_{R}=-\theta\indices{_{\nu}^{\mu}}.}
\end{equation}
It applies for all Lagrangians $\LCd_R$ which (i) describe the observed dynamics of the
``free'' (uncoupled) gravitational field and (ii) entail a consistent field equation
with regard to its trace, its symmetries, and its covariant derivatives.
Obviously, the expression
\begin{equation*}
\vartheta\indices{_{\nu}^{\mu}}\equiv2\pfrac{\LCd_{R}}{R\indices{^{\eta}_{\alpha\beta\mu}}}R\indices{^{\eta}_{\alpha\beta\nu}}-
\delta_{\nu}^{\mu}\,\LCd_{R}
\end{equation*}
can be interpreted as the \emph{energy-momentum tensor of spacetime}.
The zero-energy principle from Eq.~(\ref{eq:noe-cond10}) then reduces to $\vartheta\indices{_{\nu}^{\mu}}+\theta\indices{_{\nu}^{\mu}}=0$.

In general, the covariant and contravariant representations of Eq.~(\ref{eq:noe-cond10a})
are not necessarily symmetric in $\nu$ and $\mu$ and thus include a possible spin
of the source field and the then emerging torsion of spacetime.
We thus obtain the following relation of spin and torsion:
\begin{equation}\label{eq:spin-tosion}\boxed{
\pfrac{\LCd_{R}}{R\indices{^{\eta}_{\alpha\beta\nu}}}R\indices{^{\eta}_{\alpha\beta}^{\mu}}-
\pfrac{\LCd_{R}}{R\indices{^{\eta}_{\alpha\beta\mu}}}R\indices{^{\eta}_{\alpha\beta}^{\nu}}=\theta^{[\nu\mu]}.}
\end{equation}
For the Hilbert Lagrangian---and even for Lagrangians with additional quadratic terms in $R\indices{^{\eta}_{\alpha\beta}^{\mu}}$---
the left-hand side of this equation simplifies to the skew-symmetric part of the Ricci tensor
\begin{equation}\label{eq:spin-tosion2}
R^{[\nu\mu]}=8\pi G\,\theta^{[\nu\mu]}.
\end{equation}
For the particular case of the Hilbert Lagrangian $\LCd_{R}=\LCd_{R,H}=-R/16\pi G$, and $\theta^{\nu\mu}$
being the canonical energy-momentum tensor of the Klein-Gordon system---which is symmetric and
coincides with the metric one---Eq.~(\ref{eq:noe-cond10a}) yields the proper Einstein equation,
in conjunction with a vanishing source term for the torsion of spacetime.
We will discuss these issues in detail in Sect.~\ref{sec:sample-lag}.
\subsection{Consistency relation\label{sec:dis1}}
As the second partial derivative of $\tilde{q}\indices{_{\beta}^{\alpha\eta\mu}}$ vanishes identically
by virtue of Eq.~(\ref{eq:noe-cond3}), we conclude from the field equation Eq.~(\ref{eq:noe-cond1}):
\begin{equation*}
\ppfrac{\tilde{q}\indices{_{\beta}^{\alpha\eta\mu}}}{x^{\eta}}{x^{\mu}}\equiv0\qquad\Rightarrow\qquad
\pfrac{}{x^{\eta}}\left(\tilde{p}^{\alpha\eta}\,a_{\beta}+\tilde{k}^{\alpha\lambda\eta}g_{\beta\lambda}
+\tilde{k}^{\lambda\alpha\eta}g_{\lambda\beta}+\tilde{q}\indices{_{\xi}^{\alpha\lambda\eta}}\gamma\indices{^{\xi}_{\beta\lambda}}
-\tilde{q}\indices{_{\beta}^{\xi\lambda\eta}}\gamma\indices{^{\alpha}_{\xi\lambda}}\right)=0.
\end{equation*}
Two terms cancel after inserting the divergence of $\tilde{q}\indices{_{\beta}^{\alpha\eta\mu}}$ from Eq.~(\ref{eq:noe-cond1})
\begin{align}
&\quad\pfrac{\tilde{p}^{\alpha\eta}}{x^{\eta}}a_{\beta}+\tilde{p}^{\alpha\eta}\pfrac{a_{\beta}}{x^{\eta}}+
2\pfrac{\tilde{k}^{\lambda\alpha\eta}}{x^{\eta}}g_{\lambda\beta}+2\tilde{k}^{\lambda\alpha\eta}\pfrac{g_{\lambda\beta}}{x^{\eta}}\nonumber\\
&-\left(\tilde{p}^{\alpha\lambda}\,a_{\xi}+2\tilde{k}^{\tau\alpha\lambda}g_{\tau\xi}-
\tilde{q}\indices{_{\eta}^{\alpha\lambda\tau}}\gamma\indices{^{\eta}_{\xi\tau}}-
\cancel{\tilde{q}\indices{_{\xi}^{\tau\eta\lambda}}\gamma\indices{^{\alpha}_{\tau\eta}}}\,
\right)\gamma\indices{^{\xi}_{\beta\lambda}}+
\tilde{q}\indices{_{\xi}^{\alpha\lambda\eta}}\pfrac{\gamma\indices{^{\xi}_{\beta\lambda}}}{x^{\eta}}\nonumber\\
&+\left(\tilde{p}^{\xi\lambda}\,a_{\beta}+2\tilde{k}^{\tau\xi\lambda}g_{\tau\beta}-
\cancel{\tilde{q}\indices{_{\eta}^{\xi\lambda\tau}}\gamma\indices{^{\eta}_{\beta\tau}}}-
\tilde{q}\indices{_{\beta}^{\eta\tau\lambda}}\gamma\indices{^{\xi}_{\eta\tau}}
\right)\gamma\indices{^{\alpha}_{\xi\lambda}}-
\tilde{q}\indices{_{\beta}^{\xi\lambda\eta}}\pfrac{\gamma\indices{^{\alpha}_{\xi\lambda}}}{x^{\eta}}=0.
\label{eq:noe-cond2b}
\end{align}
The terms in Eq.~(\ref{eq:noe-cond2b}) can now be rearranged such that
the partial derivatives are converted into covariant derivatives:
\begin{align}
&\quad\,\,\:\left(\pfrac{\tilde{p}^{\alpha\eta}}{x^{\eta}}+\tilde{p}^{\xi\eta}\gamma\indices{^{\alpha}_{\xi\eta}}\right)a_{\beta}
+\tilde{p}^{\alpha\eta}\left(\pfrac{a_{\beta}}{x^{\eta}}-a_{\xi}\gamma\indices{^{\xi}_{\beta\eta}}\right)\nonumber\\
&+2\left(\pfrac{\tilde{k}^{\lambda\alpha\eta}}{x^{\eta}}+\tilde{k}^{\xi\alpha\eta}\gamma\indices{^{\lambda}_{\xi\eta}}
+\tilde{k}^{\lambda\xi\eta}\gamma\indices{^{\alpha}_{\xi\eta}}\right)g_{\lambda\beta}
+2\tilde{k}^{\lambda\alpha\eta}\left(\pfrac{g_{\lambda\beta}}{x^{\eta}}-g_{\xi\beta}\gamma\indices{^{\xi}_{\lambda\eta}}
-g_{\lambda\xi}\gamma\indices{^{\xi}_{\beta\eta}}\right)\nonumber\\
&+\tilde{q}\indices{_{\xi}^{\alpha\lambda\eta}}\left(\pfrac{\gamma\indices{^{\xi}_{\beta\lambda}}}{x^{\eta}}
+\gamma\indices{^{\xi}_{\tau\eta}}\gamma\indices{^{\tau}_{\beta\lambda}}\right)
-\tilde{q}\indices{_{\beta}^{\xi\lambda\eta}}\left(\pfrac{\gamma\indices{^{\alpha}_{\xi\lambda}}}{x^{\eta}}
+\gamma\indices{^{\alpha}_{\tau\eta}}\gamma\indices{^{\tau}_{\xi\lambda}}
\vphantom{\pfrac{\gamma\indices{^{\xi}_{\beta\lambda}}}{x^{\eta}}}\right)=0.
\label{eq:noe-cond5}
\end{align}
Due to the skew-symmetry of $\tilde{q}\indices{_{\xi}^{\alpha\lambda\eta}}$ in its last index pair,
the last two terms are one-half the Riemann curvature tensor~(\ref{eq:riemann-tensor}), respectively.
The sums proportional to $a_{\beta}$ and $g_{\lambda\beta}$ in Eq.~(\ref{eq:noe-cond5}) can be expressed
as canonical field equations~\cite{struckmeier17a} of the Hamiltonian $\tilde{\HCd}_{0}=\tilde{\HCd}-\tilde{\HCd}_{\mathrm{G}}$:
\begin{alignat*}{2}
\pfrac{\tilde{p}^{\alpha\eta}}{x^{\eta}}+\tilde{p}^{\xi\eta}\gamma\indices{^{\alpha}_{\xi\eta}}
&=\tilde{p}\indices{^{\alpha\eta}_{;\eta}}-2\tilde{p}^{\alpha\xi}s\indices{^{\eta}_{\xi\eta}}
&&=-\pfrac{\tilde{\HCd}_{0}}{a_{\alpha}}\,=\pfrac{\tilde{\LCd}}{a_{\alpha}}\\
\pfrac{\tilde{k}^{\lambda\alpha\eta}}{x^{\eta}}+\tilde{k}^{\xi\alpha\eta}\gamma\indices{^{\lambda}_{\xi\eta}}
+\tilde{k}^{\lambda\xi\eta}\gamma\indices{^{\alpha}_{\xi\eta}}
&=\tilde{k}\indices{^{\lambda\alpha\eta}_{;\eta}}-2\tilde{k}^{\lambda\alpha\xi}s\indices{^{\eta}_{\xi\eta}}
&&=-\pfrac{\tilde{\HCd}_{0}}{g_{\lambda\alpha}}=\pfrac{\tilde{\LCd}}{g_{\lambda\alpha}}.
\end{alignat*}
With Eqs.~(\ref{eq:trans-lag}), the consistency relation~(\ref{eq:noe-cond5}) follows as
\begin{equation}
\pfrac{\tilde{\LCd}}{a_{\alpha}}a_{\beta}+\pfrac{\tilde{\LCd}}{a_{\alpha;\eta}}a_{\beta;\eta}
+2\pfrac{\tilde{\LCd}}{g_{\lambda\alpha}}g_{\lambda\beta}
+2\pfrac{\tilde{\LCd}}{g_{\lambda\alpha;\eta}}g_{\lambda\beta;\eta}
+\pfrac{\tilde{\LCd}}{R\indices{^{\xi}_{\alpha\lambda\eta}}}R\indices{^{\xi}_{\beta\lambda\eta}}
-\pfrac{\tilde{\LCd}}{R\indices{^{\,\beta}_{\xi\lambda\eta}}}R\indices{^{\alpha}_{\xi\lambda\eta}}=0,
\label{eq:consistency-lag}
\end{equation}
which is exactly the Lagrangian representation of the consistency relation of Ref.~\cite{struckmeier17a}.

For metric compatibility ($g_{\lambda\beta;\eta}\equiv0$), Eq.~(\ref{eq:consistency-lag}) can be split into
two groups, namely the terms depending on the Lagrangian $\LCd_{R}$ for the ``free'' gravitational field
on the left-hand side and the Lagrangian $\LCd_{0}$ of the scalar and vector source fields on the right-hand side.
This yields, after dividing by $\sqrt{-g}$:
\begin{align}
2\pfrac{\LCd_{R}}{g^{\nu\beta}}g^{\mu\beta}
-\pfrac{\LCd_{R}}{R\indices{^{\tau}_{\mu\beta\lambda}}}R\indices{^{\tau}_{\nu\beta\lambda}}
+\pfrac{\LCd_{R}}{R\indices{^{\nu}_{\tau\beta\lambda}}}R\indices{^{\mu}_{\tau\beta\lambda}}-\delta_{\nu}^{\mu}\,\LCd_{R}
=-2\pfrac{\LCd_{0}}{g^{\nu\beta}}g^{\mu\beta}+\pfrac{\LCd_{0}}{a_{\mu;\beta}}a_{\nu;\beta}
+\pfrac{\LCd_{0}}{a_{\mu}}a_{\nu}+\delta_{\nu}^{\mu}\LCd_{0},
\label{eq:consistency-lag2}
\end{align}
where the derivatives of the Lagrangian densities $\tilde{\LCd}=\LCd\sqrt{-g}$ with respect to
the metric are replaced by the corresponding derivatives of the Lagrangians $\LCd$ as
\begin{equation}\label{eq:Lag-Lagden}
\frac{2}{\sqrt{-g}}\pfrac{\tilde{\LCd}}{g^{\nu\beta}}g^{\mu\beta}=2\pfrac{\LCd}{g^{\nu\beta}}g^{\mu\beta}-\delta_{\nu}^{\mu}\LCd,\qquad
\frac{2}{\sqrt{-g}}\pfrac{\tilde{\LCd}}{g_{\mu\beta}}g_{\nu\beta}=2\pfrac{\LCd}{g_{\mu\beta}}g_{\nu\beta}+\delta_{\nu}^{\mu}\LCd.
\end{equation}
\subsection{Equivalence of the Noether condition~(\ref{eq:noe-cond6c}) and the consistency relation~(\ref{eq:consistency-lag})\label{sec:dis3}}
On first sight, the consistency relation~(\ref{eq:consistency-lag}) appears to
emerge in addition to the Noether condition~(\ref{eq:noe-cond6c}).
Yet, as it turns out, both conditions are \emph{equivalent}.
In order to prove this, we sum up both equations:
\begin{align}
&\quad\pfrac{\tilde{\LCd}}{\left(\pfrac{\phi}{x^{\alpha}}\right)}\pfrac{\phi}{x^{\beta}}
+\pfrac{\tilde{\LCd}}{a_{\alpha;\eta}}a_{\beta;\eta}
+\pfrac{\tilde{\LCd}}{a_{\eta;\alpha}}a_{\eta;\beta}+\pfrac{\tilde{\LCd}}{a_{\alpha}}a_{\beta}
+\pfrac{\tilde{\LCd}}{g_{\lambda\eta;\alpha}}g_{\lambda\eta;\beta}
+2\pfrac{\tilde{\LCd}}{g_{\lambda\alpha;\eta}}g_{\lambda\beta;\eta}
+2\pfrac{\tilde{\LCd}}{g_{\lambda\alpha}}g_{\lambda\beta}\nonumber\\
&+2\pfrac{\tilde{\LCd}}{R\indices{^{\xi}_{\lambda\eta\alpha}}}R\indices{^{\xi}_{\lambda\eta\beta}}
+\pfrac{\tilde{\LCd}}{R\indices{^{\xi}_{\alpha\lambda\eta}}}R\indices{^{\xi}_{\beta\lambda\eta}}
-\pfrac{\tilde{\LCd}}{R\indices{^{\,\beta}_{\xi\lambda\eta}}}R\indices{^{\alpha}_{\xi\lambda\eta}}
\equiv\delta_{\beta}^{\alpha}\,\tilde{\LCd}.
\label{eq:consistency-lag12}
\end{align}
According to Eq.~(\ref{eq:assertion3}) of Corollary~\ref{sec:Lagr-case},
the resulting equation~(\ref{eq:consistency-lag12}) represents an \emph{identity}---and
thus does not constitute a dynamical law, i.e., an equation of motion.
Hence, Eq.~(\ref{eq:consistency-lag}) holds if Eq.~(\ref{eq:noe-cond6c}) is satisfied and vice versa.
\subsection{Correlation of spin and torsion\label{sec:dis0}}
The field equation~(\ref{eq:noe-cond1a}) has the Lagrangian representation
\begin{equation}
\left(\pfrac{\tilde{\LCd}}{R\indices{^{\eta}_{\alpha\beta\mu}}}\right)_{;\mu}=\frac{1}{2}
\pfrac{\tilde{\LCd}}{a_{\alpha;\beta}}a_{\eta}+\pfrac{\tilde{\LCd}}{g_{\tau\alpha;\beta}}g_{\tau\eta}
+\pfrac{\tilde{\LCd}}{R\indices{^{\eta}_{\alpha\tau\mu}}}\,s\indices{^{\,\beta}_{\tau\mu}}
+2\pfrac{\tilde{\LCd}}{R\indices{^{\eta}_{\alpha\beta\mu}}}\,s\indices{^{\tau}_{\mu\tau}}.
\end{equation}
Splitting again the total Lagrangian $\tilde{\LCd}$ into the sum of the Lagrangian
$\tilde{\LCd}_{R}$ for the free gravitational field, the Lagrangian $\tilde{\LCd}_{g}$ for
nonmetricity, and the Lagrangian $\tilde{\LCd}_{0}$ describing the dynamics of the scalar and vector field,
this yields
\begin{equation}
\left(\pfrac{\tilde{\LCd}_{R}}{R\indices{^{\eta}_{\alpha\beta\mu}}}\right)_{;\mu}=\frac{1}{2}
\pfrac{\tilde{\LCd}_{0}}{a_{\alpha;\beta}}a_{\eta}+\pfrac{\tilde{\LCd}_{g}}{g_{\tau\alpha;\beta}}g_{\tau\eta}
+\pfrac{\tilde{\LCd}_{R}}{R\indices{^{\eta}_{\alpha\tau\mu}}}\,s\indices{^{\,\beta}_{\tau\mu}}
+2\pfrac{\tilde{\LCd}_{R}}{R\indices{^{\eta}_{\alpha\beta\mu}}}\,s\indices{^{\tau}_{\mu\tau}}.
\end{equation}
For the usual case of a covariantly conserved metric, hence for metric compatibility, we can divide by $\sqrt{-g}$
\begin{equation}\label{eq:div_LR}
\left(\pfrac{\LCd_{R}}{R\indices{_{\eta\alpha\beta\mu}}}\right)_{;\mu}
-\pfrac{\LCd_{R}}{R\indices{_{\eta\alpha\tau\mu}}}\,s\indices{^{\,\beta}_{\tau\mu}}
-2\pfrac{\LCd_{R}}{R\indices{_{\eta\alpha\beta\mu}}}\,s\indices{^{\tau}_{\mu\tau}}
=\pfrac{\tilde{\LCd}_{g}}{g_{\eta\alpha;\beta}}+\frac{1}{2}\pfrac{\LCd_{0}}{a_{\alpha;\beta}}a^{\eta}.
\end{equation}
Equation~(\ref{eq:div_LR}) can now be split into a symmetric part in $\eta$ and $\alpha$
\begin{equation}\label{eq:div_LR-symm}
\pfrac{\tilde{\LCd}_{g}}{g_{\eta\alpha;\beta}}+\frac{1}{4}\left(\pfrac{\LCd_{0}}{a_{\alpha;\beta}}a^{\eta}+
\pfrac{\LCd_{0}}{a_{\eta;\beta}}a^{\alpha}\right)=0,
\end{equation}
and a skew-symmetric part in $\eta$ and $\alpha$
\begin{equation}\label{eq:div_LR-skew}\boxed{
\left(\pfrac{\LCd_{R}}{R\indices{_{\eta\alpha\beta\mu}}}\right)_{;\mu}
-\pfrac{\LCd_{R}}{R\indices{_{\eta\alpha\tau\mu}}}\,s\indices{^{\,\beta}_{\tau\mu}}
-2\pfrac{\LCd_{R}}{R\indices{_{\eta\alpha\beta\mu}}}\,s\indices{^{\tau}_{\mu\tau}}
=\frac{1}{4}\left(\pfrac{\LCd_{0}}{a_{\alpha;\beta}}a^{\eta}-\pfrac{\LCd_{0}}{a_{\eta;\beta}}a^{\alpha}\right).}
\end{equation}
We will show in the next section that the leftmost term,
hence the divergence associated with $\LCd_{R}$, vanishes for the Hilbert Lagrangian $\LCd_{R,H}$.
For that case, Eq.~(\ref{eq:div_LR-skew}) yields an algebraic equation for the torsion emerging from the vector field $a_{\mu}$.
All other choices of $\LCd_{R}$, yield a non-algebraic equation for the correlation of spin and torsion.
So, the question whether spacetime torsion either propagates or is merely tied to spinning matter depends
on the model for the dynamics of the gravitational field in classical vacuum.
In any case, for a non-vanishing right-hand side of Eq.~(\ref{eq:div_LR-skew}), the vector field necessarily
acts as a source of torsion of spacetime.
\section{Sample Lagrangians\label{sec:sample-lag}}
\subsection{Lagrangian $\LCd_{R}$ of the ``free'' gravitational field}
With Eq.~(\ref{eq:noe-cond10}) and the equivalent equation~(\ref{eq:consistency-lag}), we have derived
energy-momentum balance equations for the \emph{interaction} of given source fields---whose free dynamics
is described by $\LCd_{0}$---with the gravitational field, whose free dynamics is described by $\LCd_{R}$.
The Lagrangians $\LCd_{0}$ and $\LCd_{R}$ for the dynamics in the absence of any coupling must be
set up on the basis of physical reasoning.
From analogies to other field theories, we chose a Lagrangian $\LCd_{R}(R,g)$ as a sum of a constant
and of linear and quadratic terms in the Riemann curvature tensor~\cite{struckmeier17a}
\begin{equation}\label{eq:quad-lag}
\LCd_{R}=\frac{1}{4}R\indices{^{\eta}_{\alpha\beta\tau}}\left[g_{1}R\indices{^{\alpha}_{\eta\xi\lambda}}g^{\beta\xi}g^{\tau\lambda}+
\frac{1}{8\pi G}\left(\delta_{\eta}^{\tau}\,g^{\alpha\beta}-\delta_{\eta}^{\beta}\,g^{\alpha\tau}\right)\right]+\frac{\Lambda}{8\pi G}.
\end{equation}
The terms in parentheses can be regarded as the Riemann tensor of the \emph{maximally symmetric} $4$-dimensional manifold
\begin{equation*}
\hat{R}\indices{^{\alpha}_{\eta}^{\beta\tau}}=\frac{1}{8\pi G}
\left(\delta_{\eta}^{\tau}\,g^{\alpha\beta}-\delta_{\eta}^{\beta}\,g^{\alpha\tau}\right),
\end{equation*}
which can be interpreted as the ``ground state'' of spacetime~\cite{carroll13}.
The coupling constant $g_{1}$ is dimensionless, whereas $G$ has the natural dimension of Length$^{2}$.
The Lagrangian~(\ref{eq:quad-lag}) is thus the sum of the Hilbert Lagrangian $\LCd_{R,\mathrm{H}}$
\begin{equation}\label{eq:hilbert-lag}
\LCd_{R,\mathrm{H}}=-\frac{1}{16\pi G}\left(R-2\Lambda\right)
\end{equation}
plus a Lagrangian quadratic in the Riemann tensor.
The latter was proposed earlier by A.~Einstein in a personal letter to H.~Weyl~\cite{einstein18}.

To set up the energy-momentum balance equation~(\ref{eq:noe-cond10a}) for the Lagrangian $\LCd_{R}$, we first calculate
\begin{equation}\label{eq:quad-lag-deri}
\pfrac{\LCd_{R}}{R\indices{^{\eta}_{\alpha\beta\mu}}}=\frac{g_{1}}{2}R\indices{^{\alpha}_{\eta\xi\lambda}}g^{\beta\xi}g^{\mu\lambda}+
\frac{1}{32\pi G}\left(\delta_{\eta}^{\mu}\,g^{\alpha\beta}-\delta_{\eta}^{\beta}\,g^{\alpha\mu}\right)=
\frac{g_{1}}{2}R\indices{^{\alpha}_{\eta}^{\beta\mu}}+\frac{1}{4}\hat{R}\indices{^{\alpha}_{\eta}^{\beta\mu}},
\end{equation}
and hence
\begin{align*}
2\pfrac{\LCd_{R}}{R\indices{^{\eta}_{\alpha\beta\mu}}}R\indices{^{\eta}_{\alpha\beta\nu}}&=
g_{1}R\indices{^{\alpha}_{\eta\xi\lambda}}g^{\beta\xi}g^{\mu\lambda}R\indices{^{\eta}_{\alpha\beta\nu}}+
\frac{1}{16\pi G}\left(\delta_{\eta}^{\mu}\,g^{\alpha\beta}-\delta_{\eta}^{\beta}\,g^{\alpha\mu}\right)R\indices{^{\eta}_{\alpha\beta\nu}}\\
&=-g_{1}R\indices{_{\eta}^{\alpha\beta\mu}}\,R\indices{^{\eta}_{\alpha\beta\nu}}-\frac{1}{8\pi G}R\indices{^{\mu}_{\nu}}.
\end{align*}
The energy-momentum balance equation~(\ref{eq:noe-cond10a}) now follows as
\begin{equation*}
g_{1}R\indices{_{\eta}^{\alpha\beta\mu}}R\indices{^{\eta}_{\alpha\beta\nu}}
+\frac{1}{8\pi G}R\indices{^{\mu}_{\nu}}+\delta_{\nu}^{\mu}\,\LCd_{R}=\theta\indices{_{\nu}^{\mu}},
\end{equation*}
which finally yields the Einstein-type equation after inserting the Lagrangian $\LCd_{R}$ from Eq.~(\ref{eq:quad-lag}):
\begin{equation}
g_{1}\left(R\indices{_{\eta}^{\alpha\beta\mu}}R\indices{^{\eta}_{\alpha\beta\nu}}-
\quarter\delta_{\nu}^{\mu}R\indices{_{\eta}^{\alpha\beta\tau}}R\indices{^{\eta}_{\alpha\beta\tau}}\right)+
\frac{1}{8\pi G}\left(R\indices{^{\mu}_{\nu}}-\onehalf\delta_{\nu}^{\mu} R+\Lambda\delta_{\nu}^{\mu}\right)=\theta\indices{_{\nu}^{\mu}}.
\label{eq:noe-cond8b}
\end{equation}
Note that here the contravariant representations of the energy-momentum tensor $\theta^{\nu\mu}$
as well as that of the Ricci tensor $R^{\mu\nu}$ are not necessarily symmetric.
Thus, Eq.~(\ref{eq:noe-cond8b}) can be split into a symmetric and a skew-symmetric equation in $\mu$ and $\nu$
(see also Kibble~\cite{kibble61} and Sciama~\cite{sciama62}):
\begin{subequations}\label{eq:noe-cond8c}
\begin{align}
g_{1}\left(R\indices{_{\eta}^{\alpha\beta\mu}}R\indices{^{\eta}_{\alpha\beta}^{\nu}}-
\quarter g^{\mu\nu}R\indices{_{\eta}^{\alpha\beta\tau}}R\indices{^{\eta}_{\alpha\beta\tau}}\right)+
\frac{1}{8\pi G}\left(R^{(\mu\nu)}-\onehalf g^{\mu\nu} R+\Lambda g^{\mu\nu}\right)&=\theta^{(\mu\nu)}\label{eq:symm-ricci}\\
\frac{1}{8\pi G}R^{[\mu\nu]}&=\theta^{[\nu\mu]}\label{eq:skew-ricci}.
\end{align}
\end{subequations}
Setting $g_{1}=0$ in Eq.~(\ref{eq:symm-ricci}), hence neglecting the term proportional to  quadratic in the Riemann tensor,
reduces Eq.~(\ref{eq:noe-cond8b}) to the form of the Einstein equation proper.
Equation~(\ref{eq:skew-ricci}) is the representation of the general Eq.~(\ref{eq:spin-tosion})
for the particular Lagrangian $\LCd_{R}$ from Eq.~(\ref{eq:quad-lag}).

The field equation~(\ref{eq:div_LR-skew}), which describes the coupling of spin and torsion,
emerges for the Lagrangian~(\ref{eq:quad-lag}) as the differential equation:
\begin{equation}\label{eq:div_LR_gen}
\frac{g_{1}}{2}R\indices{^{\alpha\eta\beta\mu}_{;\mu}}-
\left(\onehalf g_{1}R\indices{^{\alpha\eta\tau\mu}}+\quarter\hat{R}\indices{^{\alpha\eta\tau\mu}}\right)s\indices{^{\,\beta}_{\tau\mu}}+
\left(g_{1}R\indices{^{\alpha\eta\beta\mu}}+\onehalf\hat{R}\indices{^{\alpha\eta\beta\mu}}\right)s\indices{^{\tau}_{\tau\mu}}
=\quarter\left(\pfrac{\LCd_{0}}{a_{\alpha;\beta}}a^{\eta}-\pfrac{\LCd_{0}}{a_{\eta;\beta}}a^{\alpha}\right).
\end{equation}
For the Hilbert Lagrangian~(\ref{eq:hilbert-lag}), this reduces to the algebraic equation:
\begin{equation}\label{eq:div_LRH}
2\hat{R}\indices{^{\alpha\eta\beta\mu}}s\indices{^{\tau}_{\tau\mu}}
-\hat{R}\indices{^{\alpha\eta\tau\mu}}s\indices{^{\,\beta}_{\tau\mu}}
=\pfrac{\LCd_{0}}{a_{\alpha;\beta}}a^{\eta}-\pfrac{\LCd_{0}}{a_{\eta;\beta}}a^{\alpha}.
\end{equation}
Thus, a massive spin-$1$ particle field $a_{\mu}$ always acts as a source of torsion of spacetime.
The right-hand side will be specified for the Proca system in Sect.~\ref{sec:proca}.
Obviously, the spin-$0$ particle field $\phi$, i.e., the Klein-Gordon system, to be discussed in the following section,
does not act as a source of torsion as Eq.~(\ref{eq:div_LRH}) is identically satisfied for $s\indices{^{\,\beta}_{\tau\mu}}\equiv0$.
\subsection{Klein-Gordon Lagrangian $\LCd_{0}$}
The Klein-Gordon Lagrangian density $\tilde{\LCd}_{0}\big(\phi,\partial_{\nu}\phi,g^{\mu\nu}\big)$
for a system of a real scalar field $\phi$ in a dynamic spacetime is given by
\begin{equation}\label{eq:kg-ham0}
\tilde{\LCd}_{0}=\onehalf\left(\pfrac{\phi}{x^{\alpha}}\pfrac{\phi}{x^{\beta}}\,g^{\alpha\beta}-m^{2}\,\phi^{2}\right)\sqrt{-g}.
\end{equation}
For this Lagrangian, the identity~(\ref{eq:assertion3}) takes on the particular form for a symmetric metric $g^{\alpha\beta}$:
\begin{equation*}
2\pfrac{\tilde{\LCd}_{0}}{g^{\nu\lambda}}g^{\mu\lambda}-\pfrac{\tilde{\LCd}_{0}}{\left(\pfrac{\phi}{x^{\mu}}\right)}\pfrac{\phi}{x^{\nu}}
\equiv-\delta_{\nu}^{\mu}\tilde{\LCd}_{0},
\end{equation*}
hence, dividing by $\sqrt{-g}$:
\begin{equation}\label{eq:KG-identity}
\frac{2}{\sqrt{-g}}\pfrac{\tilde{\LCd}_{0}}{g^{\nu\lambda}}g^{\mu\lambda}\equiv
\pfrac{\LCd_{0}}{\left(\pfrac{\phi}{x^{\mu}}\right)}\pfrac{\phi}{x^{\nu}}-\delta_{\nu}^{\mu}\LCd_{0},
\end{equation}
As $\tilde{\LCd}_{0}$ does not depend on the partial derivative of the metric,
the left-hand side of Eq.~(\ref{eq:KG-identity}) defines the mixed tensor representation of the \emph{metric}
energy-momentum tensor $T\indices{_{\nu}^{\mu}}$ of the Klein-Gordon system~(\ref{eq:kg-ham0}),
whereas the right-hand side represents its \emph{canonical} energy-momentum tensor $\theta\indices{_{\nu}^{\mu}}$:
\begin{equation}\label{eq:homogeneity-KG}
T\indices{_{\nu}^{\mu}}=\frac{2}{\sqrt{-g}}\pfrac{\tilde{\LCd}_{0}}{g^{\nu\lambda}}g^{\mu\lambda},\qquad
\theta\indices{_{\nu}^{\mu}}=\pfrac{\LCd_{0}}{\left(\pfrac{\phi}{x^{\mu}}\right)}\pfrac{\phi}{x^{\nu}}-\delta_{\nu}^{\mu}\LCd_{0}.
\end{equation}
Both tensors thus coincide for this system and have the explicit symmetric contravariant form:
\begin{equation}\label{eq:emt-KG}
T^{\nu\mu}=\pfrac{\phi}{x^{\alpha}}\pfrac{\phi}{x^{\beta}}g^{\alpha\nu}g^{\beta\mu}-\onehalf g^{\nu\mu}\left(
\pfrac{\phi}{x^{\alpha}}\pfrac{\phi}{x^{\beta}}\,g^{\alpha\beta}-m^{2}\,\phi^{2}\right)=\theta^{\nu\mu}=\theta^{(\nu\mu)}.
\end{equation}
The Euler-Lagrange equation for the Lagrangian~(\ref{eq:kg-ham0}) follows for a covariantly conserved metric as
\begin{equation*}
g^{\alpha\beta}\left(\ppfrac{\phi}{x^{\alpha}}{x^{\beta}}-
\gamma\indices{^{\xi}_{\alpha\beta}}\pfrac{\phi}{x^{\xi}}-
2s\indices{^{\xi}_{\alpha\xi}}\pfrac{\phi}{x^{\beta}}\right)+m^{2}\phi=0.
\end{equation*}
The second derivative of $\phi$ as well as the term proportional to $\gamma\indices{^{\xi}_{\alpha\beta}}$ are no tensors.
Yet their sum is just the covariant $x^{\beta}$-derivative of the covector $\partial\phi/\partial x^{\alpha}$,
\begin{equation}\label{eq:eqmo-KG2}
g^{\alpha\beta}\left[{\left(\pfrac{\phi}{x^{\alpha}}\right)}_{;\beta}-2s\indices{^{\xi}_{\alpha\xi}}\pfrac{\phi}{x^{\beta}}\right]+m^{2}\phi=0
\end{equation}
and thus holds as a tensor equation in any reference system.
The term related to the torsion vector $s\indices{^{\xi}_{\alpha\xi}}$ states that the dynamics of the
scalar field $\phi$ also couples to the torsion of spacetime, but does not act as a source of torsion as
the canonical energy-momentum tensor~(\ref{eq:emt-KG}) is symmetric.

For non-zero torsion neither the covariant divergence of the Einstein tensor $G^{\mu\nu}$
nor the covariant divergence of the energy-momentum tensor vanishes.
One thus encounters from the conventional Einstein equation
\begin{equation}
G\indices{^{\alpha}_{\mu;\alpha}}\equiv\left(R\indices{^{\alpha}_{\mu}}-
\onehalf\delta^{\alpha}_{\mu}R\right)_{;\alpha}=8\pi G\,T\indices{^{\alpha}_{\mu;\alpha}},
\end{equation}
the following explicit form for a covariantly conserved metric
\begin{equation}\label{eq:KG-torsion}
\onehalf R\indices{^{\beta\xi}_{\mu\tau}}s\indices{^{\tau}_{\beta\xi}}+
R\indices{^{\tau\beta}_{\tau\xi}}s\indices{^{\xi}_{\beta\mu}}=8\pi G\,
\pfrac{\phi}{x^{\alpha}}g^{\alpha\beta}\left(\pfrac{\phi}{x^{\mu}}s\indices{^{\xi}_{\beta\xi}}+
\pfrac{\phi}{x^{\xi}}s\indices{^{\xi}_{\beta\mu}}\right).
\end{equation}
Equation~(\ref{eq:KG-torsion}) is obviously satisfied for an identically vanishing torsion tensor
$s\indices{^{\xi}_{\beta\mu}}\equiv0$ and thus shows that the spacetime dynamics of the Klein-Gordon
system~(\ref{eq:kg-ham0}) is compatible with an identically vanishing torsion of spacetime.
\subsection{Proca Lagrangian\label{sec:proca}}
The Proca Lagrangian density $\tilde{\LCd}_{0}\big(a_{\mu},\partial_{\nu}a_{\mu},g^{\mu\nu},\gamma\indices{^{\xi}_{\mu\nu}}\big)$ writes
\begin{equation}\label{eq:Proca-Lag}
\tilde{\LCd}_{0}=\left(-\quarter f_{\alpha\beta}\,f_{\xi\eta}\,g^{\alpha\xi}\,g^{\beta\eta}+
\onehalf m^{2}a_{\alpha}\,a_{\xi}\,g^{\alpha\xi}\right)\sqrt{-g},\qquad
f_{\alpha\beta}=a_{\beta;\alpha}-a_{\alpha;\beta}=-f_{\beta\alpha},
\end{equation}
with $f_{\alpha\beta}$ denoting the skew-symmetric field tensor.
With the particular Lagrangian~(\ref{eq:Proca-Lag}), the identity~(\ref{eq:assertion3}) takes on the form:
\begin{equation}\label{eq:proca-identity}
-2\pfrac{\tilde{\LCd}_{0}}{g^{\nu\lambda}}g^{\mu\lambda}+\pfrac{\tilde{\LCd}_{0}}{a_{\mu;\lambda}}a_{\nu;\lambda}+
\pfrac{\tilde{\LCd}_{0}}{a_{\lambda;\mu}}a_{\lambda;\nu}+\pfrac{\tilde{\LCd}_{0}}{a_{\mu}}a_{\nu}\equiv\delta_{\nu}^{\mu}\tilde{\LCd}_{0}.
\end{equation}
Dividing Eq.~(\ref{eq:proca-identity}) by $\sqrt{-g}$, its leftmost term represents the \emph{metric} energy-momentum tensor
$T\indices{_{\nu}^{\mu}}$ from Eq.~(\ref{eq:homogeneity-KG}), as $\tilde{\LCd}_{0}$ does not depend on the derivative of the metric.
By virtue of the identity~(\ref{eq:proca-identity}), $T\indices{_{\nu}^{\mu}}$
can equivalently be obtained from the derivatives with respect to the fields:
\begin{align*}
T\indices{_{\nu}^{\mu}}&=
\pfrac{\LCd_{0}}{a_{\mu;\lambda}}a_{\nu;\lambda}+
\pfrac{\LCd_{0}}{a_{\lambda;\mu}}a_{\lambda;\nu}+
\pfrac{\LCd_{0}}{a_{\mu}}a_{\nu}-\delta_{\nu}^{\mu}\LCd_{0}\\
&=\theta\indices{_{\nu}^{\mu}}+\pfrac{\LCd_{0}}{a_{\mu;\lambda}}a_{\nu;\lambda}+\pfrac{\LCd_{0}}{a_{\mu}}a_{\nu},
\end{align*}
wherein $\theta\indices{_{\nu}^{\mu}}$ denotes the \emph{canonical} energy-momentum tensor
\begin{equation}\label{eq:can-em-tensor}
\theta\indices{_{\nu}^{\mu}}=\pfrac{\LCd_{0}}{a_{\lambda;\mu}}a_{\lambda;\nu}-\delta_{\nu}^{\mu}\LCd_{0},
\end{equation}
and thus
\begin{equation}\label{eq:can-em-tensor-contr}
\theta^{\nu\mu}=T^{\nu\mu}-\pfrac{\LCd_{0}}{a_{\mu;\lambda}}a\indices{^{\nu}_{;\lambda}}-\pfrac{\LCd_{0}}{a_{\mu}}a^{\nu}.
\end{equation}
With
\begin{equation*}
\pfrac{\LCd_{0}}{a_{\mu;\nu}}=f^{\mu\nu},
\end{equation*}
the energy-momentum tensors follow as:
\begin{alignat}{2}
T\indices{_{\nu}^{\mu}}&=f\indices{^{\lambda\mu}}\left(a_{\lambda;\nu}-a_{\nu;\lambda}\right)+
m^{2}\,a_{\nu}\,a^{\mu}&&+\quarter \delta_{\nu}^{\mu}\left(
f_{\alpha\beta}\,f^{\alpha\beta}-2m^{2}a_{\alpha}\,a^{\alpha}\right)\\
\theta\indices{_{\nu}^{\mu}}&=f\indices{^{\lambda\mu}}\,\,a_{\lambda;\nu}&&+\quarter \delta_{\nu}^{\mu}\left(
f_{\alpha\beta}\,f^{\alpha\beta}-2m^{2}a_{\alpha}\,a^{\alpha}\right).\label{eq:can-emt-proca}
\end{alignat}
Our conclusion is that for a Proca system the asymmetric \emph{canonical} energy-momentum
tensor $\theta^{\,\nu\mu}$ represents the correct source term for gravitation.
Remarkably, the tensor $\theta^{\,\nu\mu}$ thus entails an increased weighting of the kinetic energy
over the mass as compared to the \emph{metric} energy momentum tensor $T^{\nu\mu}$ in their roles as the source of gravity.
This holds independently of the particular model for the ``free'' (uncoupled)
gravitational field, whose dynamics is encoded in the Lagrangian $\LCd_{R}$ of Eq.~(\ref{eq:noe-cond10a}).

From the Euler-Lagrange field equation for the vector field $a_{\mu}$,
\begin{equation*}
f\indices{^{\mu\alpha}_{;\alpha}}-2f^{\mu\beta}s\indices{^{\alpha}_{\beta\alpha}}-m^{2}a^{\mu}=0,
\end{equation*}
the covariant divergence of the canonical energy-momentum tensor is obtained as
\begin{equation*}
\theta\indices{_{\nu}^{\mu}_{;\mu}}=f^{\alpha\beta}\left[-a_{\xi}R\indices{^{\xi}_{\alpha\beta\nu}}+
2\left(a_{\alpha;\xi}s\indices{^{\xi}_{\beta\nu}}+a_{\alpha;\nu}s\indices{^{\xi}_{\beta\xi}}\right)\right].
\end{equation*}
This tensor does generally not vanish in a curved spacetime, even if we neglect torsion.
Hence, from
\begin{equation}
G\indices{^{\alpha}_{\nu;\alpha}}\equiv\left(R\indices{^{\alpha}_{\nu}}-
\onehalf\delta^{\alpha}_{\nu}R\right)_{;\alpha}=8\pi G\,\theta\indices{_{\nu}^{\alpha}_{;\alpha}},
\end{equation}
which has the following explicit form for the Proca system,
\begin{equation}\label{eq:Proca-torsion}
\onehalf R\indices{^{\beta\xi}_{\nu\tau}}s\indices{^{\tau}_{\beta\xi}}+
R\indices{^{\tau\beta}_{\tau\xi}}s\indices{^{\xi}_{\beta\nu}}=8\pi G\,f^{\alpha\beta}\left[-a_{\xi}R\indices{^{\xi}_{\alpha\beta\nu}}+
2\left(a_{\alpha;\xi}s\indices{^{\xi}_{\beta\nu}}+a_{\alpha;\nu}s\indices{^{\xi}_{\beta\xi}}\right)\right],
\end{equation}
we conclude that the Einstein equation is consistent only if torsion is included.
In contrast to the corresponding equation for the Klein-Gordon system from Eq.~(\ref{eq:KG-torsion}),
this equation has no solution for a vanishing torsion.

The skew-symmetric part of the canonical energy-momentum tensor~(\ref{eq:can-emt-proca}) of the Proca system follows as
\begin{equation}
\theta_{[\nu\mu]}=\onehalf\left(f_{\nu\beta}\,a\indices{^{\beta}_{;\mu}}-f_{\mu\beta}\,a\indices{^{\beta}_{;\nu}}\right)
=\onehalf\left(a_{\mu;\beta}\,a\indices{^{\beta}_{;\nu}}-a_{\nu;\beta}\,a\indices{^{\beta}_{;\mu}}\right),
\end{equation}
which yields, according to Eq.~(\ref{eq:skew-ricci}), the skew-symmetric part of the Ricci tensor
from the contraction of the generalized first Bianchi identity~\cite{plebanski06}:
\begin{equation}\label{eq:Proca-torsion2}
R_{[\nu\mu]}\equiv s\indices{^{\alpha}_{\alpha\nu;\mu}}-s\indices{^{\alpha}_{\alpha\mu;\nu}}+
s\indices{^{\alpha}_{\nu\mu;\alpha}}-2s\indices{^{\alpha}_{\beta\alpha}}s\indices{^{\beta}_{\nu\mu}}
=4\pi G\left(a_{\mu;\beta}\,a\indices{^{\beta}_{;\nu}}-a_{\nu;\beta}\,a\indices{^{\beta}_{;\mu}}\right).
\end{equation}
Again, this equation satisfied only with a non-vanishing torsion.
\section{Summary and Conclusions\label{sec:conclusions}}
The minimum set of postulates for a theory of spacetime geometry interacting with matter is that
(i) the theory should be derived from an action principle and that (ii) the \emph{Principle of
General Relativity} should hold, i.e., the theory should be diffeomorphism-invariant.
An appropriate basis for the formulation of such a theory is given by Noether's
theorem---which directly follows from the action principle:
it provides for any symmetry of the given action a pertaining conserved Noether current.
Noether's theorem is most efficiently formulated in the Hamiltonian formalism by means
of the generating function of the respective infinitesimal symmetry transformation as this
function directly yields the weakly conserved Noether current.
From the latter, one can then set up the most general field equation of geometrodynamics
for systems with $\mathrm{Diff}(M)$ symmetry.
The covariant field-theoretical version of the canonical transformation formalism is applied
to work out the particular form of the action integral that is maintained under this symmetry.

The general recipe to set up this equation is as follows:
\begin{enumerate}
\item Establish the covariant representation of the canonical energy-momentum tensor for
the given matter Lagrangian $\LCd_{0}$---which must be a \emph{world scalar}.
This means for a system of a scalar field $\phi$ and a massive vector field $a_{\alpha}$
\begin{equation*}
\theta\indices{_{\nu}^{\,\mu}}=
\pfrac{\LCd_{0}}{\left(\pfrac{\phi}{x^{\mu}}\right)}\pfrac{\phi}{x^{\nu}}+
\pfrac{\LCd_{0}}{a_{\alpha;\mu}}a_{\alpha;\nu}-\delta_{\nu}^{\mu}\LCd_{0}.
\end{equation*}
The covariant form of the canonical energy-momentum tensor thus contains \emph{direct}
coupling terms of the vector field $a_{\alpha}$ and the connection $\gamma\indices{^{\alpha}_{\mu\nu}}$,
and thereby also causes a coupling to a torsion of spacetime.
For vector fields which represent the classical limit of massive spin
particles, the correct source of gravitation is constituted by the canonical energy-momentum
tensor and \emph{not} by the conventionally used metric (Hilbert) energy-momentum tensor---in agreement with Hehl~\cite{hehl76b}.
The source term changes for systems with additional symmetries---such as a
system with additional $\mathrm{U}(1)$ symmetry, in which case the metric
(Hilbert) energy-momentum tensor turns out to be the appropriate source term~\cite{struckmeier17b}.
\item With the source term and the \emph{postulated}
Lagrangians for both, the dynamics of the ``free'' (uncoupled) connection and metric, $\LCd_{R}$,
the new and most general equation of geometrodynamics for scalar and vector source fields is given by:
\begin{equation*}
2\pfrac{\LCd_{R}}{R\indices{^{\eta}_{\alpha\beta\mu}}}R\indices{^{\eta}_{\alpha\beta\nu}}+
\pfrac{\LCd_{R}}{g_{\alpha\beta;\mu}}g_{\alpha\beta;\nu}-
\delta_{\nu}^{\mu}\,\LCd_{R}=-\theta\indices{_{\nu}^{\mu}}.
\end{equation*}
On first sight, $\LCd_{R}$ may be any world scalar formed out of the
Riemann tensor $R\indices{^{\mu}_{\tau\alpha\beta}}$ and the metric $g^{\mu\nu}$ and its covariant derivative.
Yet, as the Noether current is merely \emph{weakly} conserved, the choice of $\LCd_{R}$ is actually restricted
by the requirement that the subsequent field equation is consistent with regard to its trace and its covariant divergence.
\item In the particular case of metric compatibility, hence for a covariantly conserved metric,
the connection and the metric are correlated according to
\begin{equation*}
g_{\alpha\beta;\nu}\equiv\pfrac{g_{\alpha\beta}}{x^{\nu}}-g_{\tau\beta}\gamma\indices{^{\tau}_{\alpha\nu}}-
g_{\alpha\tau}\gamma\indices{^{\tau}_{\beta\nu}}=0,
\end{equation*}
and the correlation of the Riemann tensor $R\indices{^{\eta}_{\alpha\beta\mu}}$
to the source simplifies to the following form of a generic Einstein-type equation:
\begin{equation*}
\vartheta\indices{_{\nu}^{\mu}}\equiv 2\pfrac{\LCd_{R}}{R\indices{^{\eta}_{\alpha\beta\mu}}}R\indices{^{\eta}_{\alpha\beta\nu}}-
\delta_{\nu}^{\mu}\,\LCd_{R}=-\theta\indices{_{\nu}^{\mu}}.
\end{equation*}
The left-hand side associated with the Lagrangian $\LCd_{R}$ describing the dynamics of the gravitational
field in classical vacuum, can be interpreted as the covariant canonical energy-momentum tensor of spacetime,
which balances the canonical energy-momentum tensor of matter on the right-hand side.
$\vartheta\indices{_{\nu}^{\mu}}+\theta\indices{_{\nu}^{\mu}}=0$ represents the zero energy principle.
\end{enumerate}
The simplest case for zero torsion is given by the Hilbert Lagrangian~(\ref{eq:hilbert-lag}),
which directly yields the Einstein tensor on the left-hand side.
This requires the energy-momentum tensor to be symmetric as well as its covariant
divergence  to be zero in order for the resulting field equation to be consistent.

Summarizing, our generic theory of geometrodynamics generalizes Einstein's General Relativity as follows:
\begin{enumerate}
\item The description of the dynamics of the ``free'' gravitational fields is not restricted to the Hilbert Lagrangian.
As was shown by Hayashi and Shirafuji~\cite{hayashi80}, the most general Lagrangian
compatible with also parity invariance can be at most quadratic in the Riemann-Cartan curvature tensor.

In the case of a quadratic and linear dependence of $\LCd_{R}(R,g)$ on the Riemann tensor, the field equation
\begin{equation*}
g_{1}\left(R\indices{_{\eta}^{\alpha\beta\mu}}R\indices{^{\eta}_{\alpha\beta\nu}}-
\quarter\delta_{\nu}^{\mu}R\indices{_{\eta}^{\alpha\beta\tau}}R\indices{^{\eta}_{\alpha\beta\tau}}\right)+
\frac{1}{8\pi G}\left(R\indices{^{\mu}_{\nu}}-\onehalf\delta_{\nu}^{\mu} R+
\Lambda\delta_{\nu}^{\mu}\right)=\theta\indices{_{\nu}^{\mu}},
\end{equation*}
is encountered~\cite{struckmeier17a}.
It equally complies with the Principle of General Relativity.
The additional term proportional to the \emph{dimensionless} coupling constant $g_{1}$
is equally satisfied by the Schwarzschild and the Kerr metric~\cite{kehm17} in the absence of torsion.
Yet, it entails a different description of the dynamics of spacetime in the case of a
non-vanishing source term $\theta\indices{_{\nu}^{\mu}}$ as compared to the solution
based on only the Einstein tensor---which follows setting $g_{1}=0$.
\item The generalized theory is not restricted to a covariantly conserved metric, hence to \emph{metric compatibility}.
\item The spacetime is not assumed to be generally torsion-free.
Based on the Riemann-Cartan curvature tensor, the generalized theory allows for sources of gravity which generate and couple to a torsion of spacetime.
This applies in particular to those vector fields, which represent the classical limit of massive spin-$1$ particles.
For this case, the \emph{canonical} energy-momentum tensor is the appropriate source term.
Its skew-symmetric part is then related to the skew-symmetric part of the then non-symmetric Ricci tensor according to
\begin{equation*}
R^{[\nu\mu]}=8\pi G\,\theta^{[\nu\mu]}.
\end{equation*}
This equation states that a skew-symmetric part of the canonical energy-momentum
tensor is necessarily associated with a non-vanishing torsion of spacetime.
The corresponding additional degrees of freedom are encoded in a Riemann tensor that is \emph{not symmetric}
under exchange of its first and second index pair, which gives rise to a non-symmetric Ricci tensor.

Moreover, for the case of the Hilbert Lagrangian, hence for $g_{1}=0$, one encounters the following field
equation from the covariant derivatives of the Einstein- and the energy-momentum tensors of the Proca system:
\begin{equation*}
\onehalf R\indices{^{\beta\xi}_{\nu\tau}}s\indices{^{\tau}_{\beta\xi}}+
R\indices{^{\tau\beta}_{\tau\xi}}s\indices{^{\xi}_{\beta\nu}}=8\pi G\,f^{\alpha\beta}\left[-a_{\xi}R\indices{^{\xi}_{\alpha\beta\nu}}+
2\left(a_{\alpha;\xi}s\indices{^{\xi}_{\beta\nu}}+a_{\alpha;\nu}s\indices{^{\xi}_{\beta\xi}}\right)\right],
\end{equation*}
which necessarily gives rise to a non-vanishing torsion.
\end{enumerate}
\ack
First of all, we want to remember our revered academic teacher Walter Greiner, whose charisma and
passion for physics inspired us to stay engaged in physics for all of our lives.

The authors thank Patrick Liebrich and Julia Lienert (Goethe University Frankfurt am Main and FIAS),
and Horst Stoecker (FIAS, GSI, and Goethe University Frankfurt am Main) for valuable discussions.
D.V.\ and J.K. thank the C.~W.~Fueck Foundation for its support.
\appendix
\section{Identity for a scalar-valued function $S$ of an $(n,m)$-tensor $T$ and the metric}
\begin{proposition}
Let $S=S(g,T)\in\RB$ be a scalar-valued function constructed from the metric tensor $g_{\mu\nu}$
and an $(n,m)$-tensor $T\indices{^{\xi_{1}\ldots\xi_{n}}_{\eta_{1}\ldots\eta_{m}}}$. 
Then the following identity holds:
\begin{align}
\pfrac{S}{g_{\mu\beta}}g_{\nu\beta}+\pfrac{S}{g_{\beta\mu}}g_{\beta\nu}&-
\pfrac{S}{T\indices{^{\nu\,\xi_{2}\ldots\xi_{n}}_{\eta_{1}\ldots\eta_{m}}}}
T\indices{^{\mu\,\xi_{2}\ldots\xi_{n}}_{\eta_{1}\ldots\eta_{m}}}-\ldots-
\pfrac{S}{T\indices{^{\xi_{1}\ldots\xi_{n-1}\,\nu}_{\eta_{1}\ldots\eta_{m}}}}
T\indices{^{\xi_{1}\ldots\xi_{n-1}\,\mu}_{\eta_{1}\ldots\eta_{m}}}\nonumber\\
&+\pfrac{S}{T\indices{^{\xi_{1}\ldots\xi_{n}}_{\mu\,\eta_{2}\ldots\eta_{m}}}}
T\indices{^{\xi_{1}\ldots\xi_{n}}_{\nu\,\eta_{2}\ldots\eta_{m}}}+\ldots+
\pfrac{S}{T\indices{^{\xi_{1}\ldots\xi_{n}}_{\eta_{1}\ldots\eta_{m-1}\,\mu}}}
T\indices{^{\xi_{1}\ldots\xi_{n}}_{\eta_{1}\ldots\eta_{m-1}\,\nu}}\equiv0\,\delta_{\nu}^{\mu}.
\label{eq:assertion1}
\end{align}
\end{proposition}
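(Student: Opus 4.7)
The plan is to exploit the fact that $S(g,T)$ being a \emph{scalar} means its value at a given spacetime point is invariant under the action of any infinitesimal linear transformation of the coordinate frame at that point. The identity will emerge by expanding this invariance to first order and reading off the coefficient of the arbitrary transformation parameter.

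First, I would fix a point $p \in M$ and introduce an infinitesimal linear chart change $X^{\mu} = x^{\mu} + \epsilon\,A^{\mu}{}_{\nu}\,x^{\nu}$ with a constant matrix $A^{\mu}{}_{\nu}$, so that at $p$ the Jacobian is $\partial X^{\mu}/\partial x^{\nu} = \delta^{\mu}_{\nu}+\epsilon A^{\mu}{}_{\nu}$ and its inverse is $\delta^{\mu}_{\nu}-\epsilon A^{\mu}{}_{\nu}+O(\epsilon^2)$. Applying the tensor transformation law (Eq.~(\ref{eq:def-tensor-denity}) with $w=0$) to $g_{\mu\nu}$ and to $T\indices{^{\xi_1\ldots\xi_n}_{\eta_1\ldots\eta_m}}$ at the point $p$, I would compute the induced variations
\begin{equation*}
\delta g_{\mu\nu}=-\epsilon\bigl(A^{\rho}{}_{\mu}\,g_{\rho\nu}+A^{\rho}{}_{\nu}\,g_{\mu\rho}\bigr),
\end{equation*}
and
\begin{equation*}
\delta T\indices{^{\xi_1\ldots\xi_n}_{\eta_1\ldots\eta_m}}=\epsilon\sum_{i=1}^{n}A^{\xi_i}{}_{\rho}\,T\indices{^{\xi_1\ldots\rho\ldots\xi_n}_{\eta_1\ldots\eta_m}}-\epsilon\sum_{j=1}^{m}A^{\rho}{}_{\eta_j}\,T\indices{^{\xi_1\ldots\xi_n}_{\eta_1\ldots\rho\ldots\eta_m}}.
\end{equation*}

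The second step is to invoke the hypothesis that $S$ is scalar-valued, so its value at $p$ is unchanged by the chart transition; the condition $(m-n)/2\in\RZ$ is precisely what guarantees that non-trivial such scalars exist, since dilatation $A^{\mu}{}_{\nu}=\lambda\delta^{\mu}_{\nu}$ imposes that the weights in $g$ (weight $-2$) and in $T$ (weight $n-m$) balance to zero. Setting $\delta S=0$ by the chain rule,
\begin{equation*}
\pfrac{S}{g_{\mu\nu}}\delta g_{\mu\nu}+\pfrac{S}{T\indices{^{\xi_1\ldots\xi_n}_{\eta_1\ldots\eta_m}}}\delta T\indices{^{\xi_1\ldots\xi_n}_{\eta_1\ldots\eta_m}}=0,
\end{equation*}
and substituting the above, I obtain a linear expression in the arbitrary matrix entries $A^{\nu}{}_{\mu}$.

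Finally, because $A^{\nu}{}_{\mu}$ is arbitrary, the coefficient of each $A^{\nu}{}_{\mu}$ must vanish separately. Extracting this coefficient by careful relabeling of dummy indices converts the metric piece into $-\bigl(\pfrac{S}{g_{\mu\beta}}g_{\nu\beta}+\pfrac{S}{g_{\beta\mu}}g_{\beta\nu}\bigr)$, and the tensor piece into the sum of $n$ terms with upper-index substitutions and $m$ terms with lower-index substitutions, with the signs dictated by the transformation rule. Multiplying by $-1$ reproduces precisely the asserted identity~(\ref{eq:assertion1}); the trailing ``$0\,\delta^{\mu}_{\nu}$'' is just a reminder that the vanishing expression carries one upper and one lower free index.

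The main obstacle is purely bookkeeping: making sure the derivative $\pfrac{S}{g_{\mu\nu}}$ is understood with the symmetric convention (so that both $(\mu\beta)$ and $(\beta\mu)$ terms appear with the correct multiplicity), and matching up the index positions when reading off the coefficient of $A^{\nu}{}_{\mu}$ so that the contractions in~(\ref{eq:assertion1}) come out with the right placement. No deep calculation is required; the content of the proposition is the standard statement that $\mathrm{GL}$-invariance of a scalar-valued function of tensorial variables is equivalent to the vanishing of the associated ``generator equation,'' applied in the special case of one $(n,m)$-tensor plus the metric.
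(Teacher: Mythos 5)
Your proof is correct, but it takes a genuinely different route from the one in the paper. The paper proceeds by induction on the rank of $T$: it verifies the identity for scalars built from second-rank tensors ($T\indices{^{\alpha}_{\alpha}}$, $T^{\alpha\beta}g_{\alpha\beta}$, $T_{\alpha\beta}g^{\alpha\beta}$) and then shows, case by case, that passing from an $(n,m)$-tensor to an $(n{+}1,m{+}1)$-, $(n{+}2,m)$-, or $(n,m{+}2)$-tensor (with the extra indices contracted away, directly or via an additional metric factor) produces additional terms that cancel pairwise. Your argument instead encodes the scalar character of $S$ as invariance under an arbitrary infinitesimal $\mathrm{GL}(4)$ frame change $X^{\mu}=x^{\mu}+\epsilon A^{\mu}{}_{\nu}x^{\nu}$ at a point, expands $\delta S=0$ by the chain rule, and reads off the coefficient of the arbitrary matrix $A^{\nu}{}_{\mu}$; your index bookkeeping for $\delta g_{\mu\nu}$ and $\delta T$ is right, and the extracted coefficient reproduces~(\ref{eq:assertion1}) exactly, including the double metric term that accommodates the symmetric-derivative convention. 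What your approach buys is generality and brevity: it applies to \emph{any} differentiable chart-independent function of $g$ and $T$, not only to scalars assembled as polynomial contractions, and it makes transparent that the identity is the Lie-algebra (generator) form of $\mathrm{GL}$-invariance—consistent with the paper's own closing remark that~(\ref{eq:assertion2}) is a tensorial Euler homogeneity relation. What the paper's induction buys is an explicit, elementary verification tied to the concrete contraction structure of the Lagrangians it later uses, with the role of the condition $(m-n)/2\in\RZ$ (an even number of metric factors is needed to saturate the indices) visible in the construction rather than, as in your version, only in the dilatation subcase $A^{\mu}{}_{\nu}=\lambda\delta^{\mu}_{\nu}$, which by itself reproduces the contracted identity~(\ref{eq:assertion-contr}).
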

\begin{proof}
The induction hypothesis is immediately verified for scalars constructed from second rank tensors and, if necessary, the metric,
hence, $S=T\indices{^{\alpha}_{\alpha}}$, $S=T^{\alpha\beta}\,g_{\alpha\beta}$, and $S=T_{\alpha\beta}\,g^{\alpha\beta}$.
Let Eq.~(\ref{eq:assertion1}) hold for an $(n,m)$-tensor $T\indices{^{\xi_{1}\ldots\xi_{n}}_{\eta_{1}\ldots\eta_{m}}}$.
We first consider an $(n+1,m+1)$-tensor $\bar{T}\indices{^{\xi_{1}\ldots\xi_{n}\alpha}_{\eta_{1}\ldots\eta_{m}\alpha}}$
with the last indices contracted in order to recover a scalar.
Setting up $S$ according to~(\ref{eq:assertion1}) with the tensor $\bar{T}$, one encounters the two additional terms
\begin{align*}
&\quad-\pfrac{S}{\bar{T}\indices{^{\xi_{1}\ldots\xi_{n}\nu}_{\eta_{1}\ldots\eta_{m}\alpha}}}
\bar{T}\indices{^{\xi_{1}\ldots\xi_{n}\mu}_{\eta_{1}\ldots\eta_{m}\alpha}}+
\pfrac{S}{\bar{T}\indices{^{\xi_{1}\ldots\xi_{n}\,\alpha}_{\eta_{1}\ldots\eta_{m}\mu}}}
\bar{T}\indices{^{\xi_{1}\ldots\xi_{n}\,\alpha}_{\eta_{1}\ldots\eta_{m}\nu}}\\
&=-\delta_{\nu}^{\alpha}\,\bar{T}\indices{^{\xi_{1}\ldots\xi_{n}\mu}_{\eta_{1}\ldots\eta_{m}\alpha}}+
\delta_{\alpha}^{\mu}\,\bar{T}\indices{^{\xi_{1}\ldots\xi_{n}\,\alpha}_{\eta_{1}\ldots\eta_{m}\nu}}=0.
\end{align*}
Equation~(\ref{eq:assertion1}) thus also holds for the scalar $S$ formed from the $(n+1,m+1)$-tensor $\bar{T}$.

For the case that $\bar{T}$ represents an $(n+2,m)$-tensor $\bar{T}\indices{^{\xi_{1}\ldots\xi_{n}\alpha\beta}_{\eta_{1}\ldots\eta_{m}}}$,
the scalar $S$ must have one additional factor $g_{\alpha\beta}$.
One thus encounters four additional terms:
\begin{align*}
&\quad-\pfrac{S}{\bar{T}\indices{^{\xi_{1}\ldots\xi_{n}\nu\beta}_{\eta_{1}\ldots\eta_{m}}}}
\bar{T}\indices{^{\xi_{1}\ldots\xi_{n}\mu\beta}_{\eta_{1}\ldots\eta_{m}}}-
\pfrac{S}{\bar{T}\indices{^{\xi_{1}\ldots\xi_{n}\,\alpha\nu}_{\eta_{1}\ldots\eta_{m}}}}
\bar{T}\indices{^{\xi_{1}\ldots\xi_{n}\,\alpha\mu}_{\eta_{1}\ldots\eta_{m}}}+
\pfrac{S}{g_{\mu\beta}}g_{\nu\beta}+\pfrac{S}{g_{\alpha\mu}}g_{\alpha\nu}\\
&=-\delta_{\nu}^{\alpha}\,\bar{T}\indices{^{\xi_{1}\ldots\xi_{n}\mu\beta}_{\eta_{1}\ldots\eta_{m}}}g_{\alpha\beta}-
\delta_{\nu}^{\beta}\,\bar{T}\indices{^{\xi_{1}\ldots\xi_{n}\,\alpha\mu}_{\eta_{1}\ldots\eta_{m}}}g_{\alpha\beta}+
\delta_{\alpha}^{\mu}g_{\nu\beta}\bar{T}\indices{^{\xi_{1}\ldots\xi_{n}\alpha\beta}_{\eta_{1}\ldots\eta_{m}}}+
\delta_{\beta}^{\mu}g_{\alpha\nu}\bar{T}\indices{^{\xi_{1}\ldots\xi_{n}\alpha\beta}_{\eta_{1}\ldots\eta_{m}}}\\
&=-g_{\nu\beta}\bar{T}\indices{^{\xi_{1}\ldots\xi_{n}\mu\beta}_{\eta_{1}\ldots\eta_{m}}}-
g_{\alpha\nu}\bar{T}\indices{^{\xi_{1}\ldots\xi_{n}\,\alpha\mu}_{\eta_{1}\ldots\eta_{m}}}+
g_{\nu\beta}\bar{T}\indices{^{\xi_{1}\ldots\xi_{n}\mu\beta}_{\eta_{1}\ldots\eta_{m}}}+
g_{\alpha\nu}\bar{T}\indices{^{\xi_{1}\ldots\xi_{n}\alpha\mu}_{\eta_{1}\ldots\eta_{m}}}\\
&=0.
\end{align*}
For the case that $\bar{T}$ represents an $(n,m+2)$-tensor $\bar{T}\indices{^{\xi_{1}\ldots\xi_{n}}_{\eta_{1}\ldots\eta_{m}\alpha\beta}}$,
the scalar $S$ must have one additional factor $g^{\alpha\beta}$.
Owing to
\begin{equation*}
\pfrac{S}{g_{\mu\beta}}g_{\nu\beta}=-\pfrac{S}{g^{\alpha\nu}}g^{\alpha\mu},\qquad
\pfrac{S}{g_{\alpha\mu}}g_{\alpha\nu}=-\pfrac{S}{g^{\nu\beta}}g^{\mu\beta},
\end{equation*}
one thus encounters the four additional terms:
\begin{align*}
&\quad\pfrac{S}{\bar{T}\indices{^{\xi_{1}\ldots\xi_{n}}_{\eta_{1}\ldots\eta_{m}\mu\beta}}}
\bar{T}\indices{^{\xi_{1}\ldots\xi_{n}}_{\eta_{1}\ldots\eta_{m}\nu\beta}}+
\pfrac{S}{\bar{T}\indices{^{\xi_{1}\ldots\xi_{n}}_{\eta_{1}\ldots\eta_{m}\,\alpha\mu}}}
\bar{T}\indices{^{\xi_{1}\ldots\xi_{n}}_{\eta_{1}\ldots\eta_{m}\,\alpha\nu}}-
\pfrac{S}{g^{\alpha\nu}}g^{\alpha\mu}-\pfrac{S}{g^{\nu\beta}}g^{\mu\beta}\\
&=\delta_{\alpha}^{\mu}\,\bar{T}\indices{^{\xi_{1}\ldots\xi_{n}}_{\eta_{1}\ldots\eta_{m}\nu\beta}}g^{\alpha\beta}+
\delta_{\beta}^{\mu}\,\bar{T}\indices{^{\xi_{1}\ldots\xi_{n}}_{\eta_{1}\ldots\eta_{m}\,\alpha\nu}}g^{\alpha\beta}-
\delta_{\nu}^{\beta}g^{\alpha\mu}\bar{T}\indices{^{\xi_{1}\ldots\xi_{n}}_{\eta_{1}\ldots\eta_{m}\alpha\beta}}-
\delta_{\nu}^{\alpha}g^{\mu\beta}\bar{T}\indices{^{\xi_{1}\ldots\xi_{n}}_{\eta_{1}\ldots\eta_{m}\alpha\beta}}\\
&=g^{\mu\beta}\bar{T}\indices{^{\xi_{1}\ldots\xi_{n}}_{\eta_{1}\ldots\eta_{m}\nu\beta}}+
g^{\alpha\mu}\bar{T}\indices{^{\xi_{1}\ldots\xi_{n}}_{\eta_{1}\ldots\eta_{m}\,\alpha\nu}}-
g^{\alpha\mu}\bar{T}\indices{^{\xi_{1}\ldots\xi_{n}}_{\eta_{1}\ldots\eta_{m}\alpha\nu}}-
g^{\mu\beta}\bar{T}\indices{^{\xi_{1}\ldots\xi_{n}}_{\eta_{1}\ldots\eta_{m}\nu\beta}}\\
&=0.
\end{align*}
$\hfill\square$
\end{proof}
The derivative of a Lagrangian $\LCd$ with respect to the metric $g_{\mu\nu}$
can thus always be replaced by the derivatives with respect to the appertaining tensors $T$ that are
made into a scalar by means of the metric.
The identity thus provides the correlation of the \emph{metric} and the \emph{canonical}
energy-momentum tensors of a given system.
\begin{corollary}
The contraction of Eq.~(\ref{eq:assertion1}) then yields a condition for the scalar $S$:
\begin{equation}\label{eq:assertion-contr}
\pfrac{S}{g_{\alpha\beta}}g_{\alpha\beta}\equiv\frac{n-m}{2}\pfrac{S}{T\indices{^{\xi_{1}\ldots\xi_{n}}_{\eta_{1}\ldots\eta_{m}}}}
T\indices{^{\xi_{1}\ldots\xi_{n}}_{\eta_{1}\ldots\eta_{m}}}.
\end{equation}
\end{corollary}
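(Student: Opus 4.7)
The plan is to obtain the corollary by simply contracting the proposition's identity~(\ref{eq:assertion1}) on the free indices $\mu$ and $\nu$, then counting the contributions term by term.

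First I would set $\mu=\nu$ and sum in Eq.~(\ref{eq:assertion1}). On the right-hand side, $0\cdot\delta^\mu_\nu$ contracts to $0$ regardless of the spacetime dimension, so the resulting equation remains an identity equal to zero. The two metric terms $\partial S/\partial g_{\mu\beta}\,g_{\nu\beta}$ and $\partial S/\partial g_{\beta\mu}\,g_{\beta\nu}$ both collapse, upon contraction and by the symmetry of $g_{\alpha\beta}$, to $\partial S/\partial g_{\alpha\beta}\,g_{\alpha\beta}$, so together they contribute $2\,\partial S/\partial g_{\alpha\beta}\,g_{\alpha\beta}$ to the identity.

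Next I would count the tensor terms. Each of the $n$ terms of the contravariant block, upon the contraction $\mu\to\nu$, becomes $-\partial S/\partial T\indices{^{\xi_1\ldots\xi_n}_{\eta_1\ldots\eta_m}}\,T\indices{^{\xi_1\ldots\xi_n}_{\eta_1\ldots\eta_m}}$ (the relabelled index just runs through the summed values), so the $n$ contravariant terms contribute $-n\,\partial S/\partial T\cdot T$ in this shorthand. Identically, each of the $m$ covariant terms contributes $+\partial S/\partial T\cdot T$, giving $+m\,\partial S/\partial T\cdot T$ in total. Adding everything up yields
\begin{equation*}
2\,\pfrac{S}{g_{\alpha\beta}}g_{\alpha\beta}-(n-m)\,\pfrac{S}{T\indices{^{\xi_1\ldots\xi_n}_{\eta_1\ldots\eta_m}}}T\indices{^{\xi_1\ldots\xi_n}_{\eta_1\ldots\eta_m}}=0,
\end{equation*}
which is exactly Eq.~(\ref{eq:assertion-contr}) after dividing by $2$. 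The hypothesis $(m-n)/2\in\RZ$ in the proposition ensures that the coefficient $(n-m)/2$ in the statement of the corollary is an integer, which is consistent with $S$ being constructible from $T$ and a corresponding number of metric factors in the first place.

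There is no genuine obstacle here: the entire content of the corollary is a bookkeeping exercise over the $n$ contravariant and $m$ covariant index slots of $T$, taking care that the two $g$-derivative terms on the first line of~(\ref{eq:assertion1}) coincide after contraction by virtue of the symmetry of the metric, so that they produce a factor $2$ which is then absorbed into the $(n-m)/2$ in the stated identity.
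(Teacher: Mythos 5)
Your proposal is correct and follows exactly the paper's (one-line) proof, which simply contracts Eq.~(\ref{eq:assertion1}) on $\mu$ and $\nu$; your explicit term-by-term bookkeeping of the two metric contributions and the $n$ contravariant versus $m$ covariant tensor contributions is precisely what that contraction amounts to.
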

\begin{proof}
Contracting Eq.~(\ref{eq:assertion1}) directly yields Eq.~(\ref{eq:assertion-contr}).
\end{proof}
\begin{corollary}\label{sec:equiv}
Let a \emph{relative scalar of weight} $w$---denoted by $\tilde{S}=\tilde{S}(g,T_k)\in\RB$---be
given as a function of the (symmetric) metric $g_{\mu\nu}$ and a sum of $k$ tensors
$T\indices{_k^{\xi_{1}\ldots\xi_{n_k}}_{\eta_{1}\ldots\eta_{m_k}}}$ of respective rank $(n_k,m_k)$.
Then the following identity holds:
\begin{align}
2\pfrac{\tilde{S}}{g_{\mu\beta}}g_{\nu\beta}&-
\pfrac{\tilde{S}}{T\indices{_k^{\nu\,\xi_{2}\ldots\xi_{n_k}}_{\eta_{1}\ldots\eta_{m_k}}}}
T\indices{_k^{\mu\,\xi_{2}\ldots\xi_{n_k}}_{\eta_{1}\ldots\eta_{m_k}}}-\ldots-
\pfrac{\tilde{S}}{T\indices{_k^{\xi_{1}\ldots\xi_{n_k-1}\,\nu}_{\eta_{1}\ldots\eta_{m_k}}}}
T\indices{_k^{\xi_{1}\ldots\xi_{n_k-1}\,\mu}_{\eta_{1}\ldots\eta_{m_k}}}\nonumber\\
&+\pfrac{\tilde{S}}{T\indices{_k^{\xi_{1}\ldots\xi_{n_k}}_{\mu\,\eta_{2}\ldots\eta_{m_k}}}}
T\indices{_k^{\xi_{1}\ldots\xi_{n_k}}_{\nu\,\eta_{2}\ldots\eta_{m_k}}}+\ldots+
\pfrac{\tilde{S}}{T\indices{_k^{\xi_{1}\ldots\xi_{n_k}}_{\eta_{1}\ldots\eta_{m_k-1}\,\mu}}}
T\indices{_k^{\xi_{1}\ldots\xi_{n_k}}_{\eta_{1}\ldots\eta_{m_k-1}\,\nu}}\equiv\delta_{\nu}^{\mu}\,w\,\tilde{S}.
\label{eq:assertion2}
\end{align}
\end{corollary}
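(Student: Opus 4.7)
The plan is to reduce the identity to the Proposition already established for absolute scalars, by peeling off the density weight. Since $\tilde{\LCd}$ is a relative scalar of weight one with functional dependence only on the metric and the tensors $T_k$, I would write
\begin{equation*}
\tilde{\LCd}(g,T_k)=\LCd(g,T_k)\,\sqrt{-g},
\end{equation*}
where $\LCd$ is an absolute (weight-zero) scalar of the same arguments. The Proposition then applies to $S=\LCd$, and since the metric is taken symmetric, the two metric-derivative terms on the left of Eq.~(\ref{eq:assertion1}) coincide and collapse to $2(\partial\LCd/\partial g_{\mu\beta})g_{\nu\beta}$. The extension from a single tensor to a sum over $k$ of tensors of various ranks is immediate, because the Proposition's identity is linear in the tensor arguments and may be invoked term by term under the hypothesis $(m_k-n_k)/2\in\RZ$.

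Next I would translate the derivatives of $\LCd$ into derivatives of $\tilde{\LCd}$. Since $\sqrt{-g}$ is independent of the tensor arguments, every tensor derivative merely transforms by a factor $\sqrt{-g}$, so the contravariant and covariant tensor terms of Eq.~(\ref{eq:assertion1}) become their $\tilde{\LCd}$ counterparts after multiplication by $\sqrt{-g}$. The metric derivative, however, carries an extra contribution: using the standard identity $\partial\sqrt{-g}/\partial g_{\mu\beta}=\onehalf g^{\beta\mu}\sqrt{-g}$ one finds
\begin{equation*}
2\pfrac{\tilde{\LCd}}{g_{\mu\beta}}g_{\nu\beta}=2\pfrac{\LCd}{g_{\mu\beta}}g_{\nu\beta}\sqrt{-g}+\LCd\,g^{\beta\mu}g_{\nu\beta}\sqrt{-g}=2\pfrac{\LCd}{g_{\mu\beta}}g_{\nu\beta}\sqrt{-g}+\delta_{\nu}^{\mu}\tilde{\LCd}.
\end{equation*}
The extra contribution $\delta_{\nu}^{\mu}\tilde{\LCd}$ is exactly what accounts for the non-zero right-hand side of Eq.~(\ref{eq:assertion2}).

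To conclude, I would multiply the Proposition's identity for $\LCd$ through by $\sqrt{-g}$, substitute the relations above, and rearrange: the tensor-derivative terms cleanly become the corresponding $\tilde{\LCd}$ terms, the metric term becomes $2(\partial\tilde{\LCd}/\partial g_{\mu\beta})g_{\nu\beta}-\delta_{\nu}^{\mu}\tilde{\LCd}$, and the Proposition tells us the whole sum vanishes. Solving for the remaining terms yields Eq.~(\ref{eq:assertion2}). I do not anticipate a genuine obstacle: the argument is essentially a density-weight bookkeeping added on top of the Proposition, whose proof has already carried out the combinatorial work. The only care required is to track the position of each free index for every tensor $T_k$ in the sum, and this involves no new ideas beyond those already visible in the Proposition.
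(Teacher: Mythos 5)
Your proposal is correct and follows essentially the same route as the paper, whose proof is simply to combine the Proposition (Eq.~(\ref{eq:assertion1})) with the density-weight relation~(\ref{eq:Lag-Lagden}); you merely derive that relation explicitly from $\partial\sqrt{-g}/\partial g_{\mu\beta}=\onehalf g^{\beta\mu}\sqrt{-g}$ instead of citing it. The only soft spot is your justification of the multi-tensor extension by "linearity" (a scalar such as $T_1^{\alpha\beta}T_{2\,\alpha\beta}$ is not a sum of single-tensor terms; the correct remark is that the Proposition's index-contraction cancellation argument applies unchanged), but the paper glosses over this point as well.
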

\begin{proof}
Multiply Eq.~(\ref{eq:assertion1}) with ${\left(\sqrt{-g}\right)}^w$, add $\delta_{\nu}^{\mu}\,w\,\tilde{S}$
on both sides of the identity, and combine the appropriate terms on the left-hand side according to Eq.~(\ref{eq:Lag-Lagden}):
\begin{equation*}
2\pfrac{\tilde{S}}{g_{\mu\beta}}g_{\nu\beta}=2\pfrac{S}{g_{\mu\beta}}g_{\nu\beta}\,{\left(\sqrt{-g}\right)}^w+\delta_{\nu}^{\mu}\,w\,\tilde{S}.
\end{equation*}
\end{proof}
Equation~(\ref{eq:assertion2}) is obviously a representation of Euler's theorem on homogeneous functions
in the realm of tensor calculus.
\begin{corollary}\label{sec:Lagr-case}
For a \emph{scalar density} Lagrangian $\tilde{\LCd}$, i.e.\ for a relative scalar of weight $w=1$, the identity~(\ref{eq:assertion2}) writes:
\begin{align}
-2\pfrac{\tilde{\LCd}}{g^{\nu\beta}}g^{\mu\beta}&-
\pfrac{\tilde{\LCd}}{T\indices{_k^{\nu\,\xi_{2}\ldots\xi_{n_k}}_{\eta_{1}\ldots\eta_{m_k}}}}
T\indices{_k^{\mu\,\xi_{2}\ldots\xi_{n_k}}_{\eta_{1}\ldots\eta_{m_k}}}-\ldots-
\pfrac{\tilde{\LCd}}{T\indices{_k^{\xi_{1}\ldots\xi_{n_k-1}\,\nu}_{\eta_{1}\ldots\eta_{m_k}}}}
T\indices{_k^{\xi_{1}\ldots\xi_{n_k-1}\,\mu}_{\eta_{1}\ldots\eta_{m_k}}}\nonumber\\
&+\pfrac{\tilde{\LCd}}{T\indices{_k^{\xi_{1}\ldots\xi_{n_k}}_{\mu\,\eta_{2}\ldots\eta_{m_k}}}}
T\indices{_k^{\xi_{1}\ldots\xi_{n_k}}_{\nu\,\eta_{2}\ldots\eta_{m_k}}}+\ldots+
\pfrac{\tilde{\LCd}}{T\indices{_k^{\xi_{1}\ldots\xi_{n_k}}_{\eta_{1}\ldots\eta_{m_k-1}\,\mu}}}
T\indices{_k^{\xi_{1}\ldots\xi_{n_k}}_{\eta_{1}\ldots\eta_{m_k-1}\,\nu}}\equiv\delta_{\nu}^{\mu}\,\tilde{\LCd}.
\label{eq:assertion3}
\end{align}
\end{corollary}
\begin{corollary}\label{sec:Ham-case}
For a \emph{scalar density} Hamiltonian $\tilde{\HCd}$, the momentum fields $\tilde{p}$ representing tensor densities
must be expressed as absolute tensors $p=\tilde{p}/\sqrt{-g}$ prior to setting up the invariant according to Eq.~(\ref{eq:assertion3}).
An example is worked out in~\ref{ex:proca-ham}.
\end{corollary}
\section{Examples of the identity~(\ref{eq:assertion3}) for scalar-valued functions of tensors}
\subsection{Determinant of the covariant metric tensor $(g_{\mu\nu})$}
The components of the covariant metric tensor $(g_{\mu\nu}(x))$ transform under the transition $x\mapsto X$ of the space-time location as:
\vspace*{-2mm}
\begin{equation*}
g_{\mu\nu}(X)=g_{\alpha\beta}(x)\pfrac{x^\alpha}{X^\mu}\pfrac{x^\beta}{X^\nu},
\end{equation*}
and hence its determinant $g\equiv\left|g_{\alpha\beta}(x)\right|$:
\begin{equation*}
\left|g_{\mu\nu}(X)\right|=\left|g_{\alpha\beta}(x)\right|{\left|\pfrac{x}{X}\right|}^2.
\end{equation*}
The determinant $g$ of the covariant metric tensor thus transforms as a \emph{relative scalar} of weight $w=2$.
According to the general form of the identity for relative scalars of weight $w$ from Eq.~(\ref{eq:assertion2}), we get for $g$
due to the symmetry $g_{\beta\alpha}=g_{\alpha\beta}$:
\begin{equation}\label{eq:metric-identity}
\pfrac{g}{g_{\beta\mu}}g_{\beta\alpha}\equiv\delta_\alpha^\mu\,g.
\end{equation}
Contracting~(\ref{eq:metric-identity}) with the inverse metric $g^{\alpha\nu}$ yields the derivative of the determinant
$g$ of the covariant metric with respect to the component $g_{\nu\mu}$ of the metric:
\begin{equation*}
\pfrac{g}{g_{\nu\mu}}\equiv g^{\mu\nu}\,g,
\end{equation*}
and thus for the negative square root of $g$
\begin{equation}\label{eq:metric-identity2}
\pfrac{\sqrt{-g}}{g_{\nu\mu}}\equiv \onehalf g^{\mu\nu}\sqrt{-g}.
\end{equation}
\subsection{Contraction of a rank-$4$ tensor density and two absolute rank-$2$ tensors}
Let $\tilde{S}\equiv S\sqrt{-g}$ be a scalar density, i.e., a relative scalar of weight $w=1$,
emerging from the contraction of an arbitrary tensor density $\tilde{A}\indices{^\alpha^\beta_\xi}\equiv A\indices{^\alpha^\beta_\xi}\sqrt{-g}$
with absolute ($w=0$) tensors $B\indices{_\alpha^\xi^\eta}$ and $C\indices{_\beta_\eta}$:
\begin{equation}
\tilde{S}=\tilde{A}\indices{^\alpha^\beta_\xi}\,B\indices{_\alpha^\xi^\eta}\,C\indices{_\beta_\eta}.
\end{equation}
Then
\begin{align}
&-\pfrac{\tilde{S}}{\tilde{A}\indices{^{\nu\beta}_{\xi}}}\tilde{A}\indices{^{\mu\beta}_{\xi}}
-\pfrac{\tilde{S}}{\tilde{A}\indices{^{\alpha\nu}_{\xi}}}\tilde{A}\indices{^{\alpha\mu}_{\xi}}
+\pfrac{\tilde{S}}{\tilde{A}\indices{^{\alpha\beta}_{\mu}}}\tilde{A}\indices{^{\alpha\beta}_{\nu}}
-\pfrac{\tilde{S}}{B\indices{_\alpha^\nu^\eta}}B\indices{_\alpha^\mu^\eta}
-\pfrac{\tilde{S}}{B\indices{_\alpha^\xi^\nu}}B\indices{_\alpha^\xi^\mu}\nonumber\\
&+\pfrac{\tilde{S}}{B\indices{_\mu^\xi^\eta}}B\indices{_\nu^\xi^\eta}
+\pfrac{\tilde{S}}{C\indices{_\mu_\eta}}C\indices{_\nu_\eta}
+\pfrac{\tilde{S}}{C\indices{_\beta_\mu}}C\indices{_\beta_\nu}
+2\pfrac{\tilde{S}}{g_{\mu\beta}}g_{\nu\beta}\equiv\delta_\nu^\mu\,\tilde{S}.
\label{eq:arb-scal}
\end{align}
To prove the identity~(\ref{eq:arb-scal}), the respective terms of the sum are worked out explicitly:
\begin{align*}
-\pfrac{\tilde{S}}{\tilde{A}\indices{^{\nu\beta}_{\xi}}}\tilde{A}\indices{^{\mu\beta}_{\xi}}&=
-\delta_\nu^\alpha\,\tilde{A}\indices{^{\mu\beta}_{\xi}}\,B\indices{_\alpha^\xi^\eta}\,C\indices{_\beta_\eta}
=-\tilde{A}\indices{^{\mu\beta}_{\xi}}\,B\indices{_\nu^\xi^\eta}\,C\indices{_\beta_\eta}\\
-\pfrac{\tilde{S}}{\tilde{A}\indices{^{\alpha\nu}_{\xi}}}\tilde{A}\indices{^{\alpha\mu}_{\xi}}&=
-\delta_\nu^\beta\,\tilde{A}\indices{^{\alpha\mu}_{\xi}}\,B\indices{_\alpha^\xi^\eta}\,C\indices{_\beta_\eta}
=-\tilde{A}\indices{^{\alpha\mu}_{\xi}}\,B\indices{_\alpha^\xi^\eta}\,C\indices{_\nu_\eta}\\
\hphantom{-}\pfrac{\tilde{S}}{\tilde{A}\indices{^{\alpha\beta}_{\mu}}}\tilde{A}\indices{^{\alpha\beta}_{\nu}}&=
\hphantom{-}\delta_\xi^\mu\,\tilde{A}\indices{^{\alpha\beta}_{\nu}}\,B\indices{_\alpha^\xi^\eta}\,C\indices{_\beta_\eta}
=\hphantom{-}\tilde{A}\indices{^{\alpha\beta}_{\nu}}\,B\indices{_\alpha^\mu^\eta}\,C\indices{_\beta_\eta}\\
-\pfrac{\tilde{S}}{B\indices{_\alpha^\nu^\eta}}B\indices{_\alpha^\mu^\eta}&=
-\tilde{A}\indices{^{\alpha\beta}_{\xi}}\,\delta_\nu^\xi\,B\indices{_\alpha^\mu^\eta}\,C\indices{_\beta_\eta}
=-\tilde{A}\indices{^{\alpha\beta}_{\nu}}\,B\indices{_\alpha^\mu^\eta}\,C\indices{_\beta_\eta}\\
-\pfrac{\tilde{S}}{B\indices{_{\alpha}^{\xi\nu}}}B\indices{_{\alpha}^{\xi\mu}}
&=-\tilde{A}\indices{^{\alpha\beta}_{\xi}}\,\delta_\nu^\eta\,B\indices{_\alpha^\xi^\mu}\,C\indices{_\beta_\eta}
=-\tilde{A}\indices{^{\alpha\beta}_{\xi}}\,B\indices{_\alpha^\xi^\mu}\,C\indices{_\beta_\nu}\\
\hphantom{-}\pfrac{\tilde{S}}{B\indices{_{\mu}^{\xi\eta}}}B\indices{_{\nu}^{\xi\eta}}
&=\hphantom{-}\tilde{A}\indices{^{\alpha\beta}_{\xi}}\,\delta_\alpha^\mu\,B\indices{_\nu^\xi^\eta}C\indices{_\beta_\eta}
=\hphantom{-}\tilde{A}\indices{^{\mu\beta}_{\xi}}\,B\indices{_\nu^\xi^\eta}\,C\indices{_\beta_\eta}\\
\hphantom{-}\pfrac{\tilde{S}}{C\indices{_{\beta}_{\mu}}}C\indices{_{\beta}_{\nu}}
&=\hphantom{-}\tilde{A}\indices{^{\alpha\beta}_{\xi}}\,B\indices{_\alpha^\xi^\eta}\,\delta^\mu_\eta\,C\indices{_\beta_\nu}
=\hphantom{-}\tilde{A}\indices{^{\alpha\beta}_{\xi}}\,B\indices{_\alpha^\xi^\mu}\,C\indices{_\beta_\nu}\\
\hphantom{-}\pfrac{\tilde{S}}{C\indices{_{\mu}_{\eta}}}\,C\indices{_{\nu}_{\eta}}
&=\hphantom{-}\tilde{A}\indices{^{\alpha\beta}_{\xi}}\,B\indices{_\alpha^\xi^\eta}\,\delta_\beta^\mu\,C\indices{_\nu_\eta}
=\hphantom{-}\tilde{A}\indices{^{\alpha\mu}_{\xi}}\,B\indices{_\alpha^\xi^\eta}\,C\indices{_\nu_\eta}.
\end{align*}
The terms on the right-hand sides in each case occur twice with opposite signs---and thus sum up to zero.
The remaining term on the left-hand side of Eq.~(\ref{eq:arb-scal}) is the derivative of $\tilde{S}$ with respect to the metric.
Making use of Eq.~(\ref{eq:metric-identity2}), this gives
\begin{equation*}
2\pfrac{\tilde{S}}{g_{\mu\beta}}g_{\nu\beta}=2Sg_{\nu\beta}\pfrac{\sqrt{-g}}{g_{\mu\beta}}
=S\,g_{\nu\beta}\,g^{\beta\mu}\,\sqrt{-g}=\delta_\nu^\mu\,\tilde{S},
\end{equation*}
which verifies the assertion~(\ref{eq:arb-scal}).

As the corresponding \emph{scalar} $S$ does not depend on the metric $g_{\mu\nu}$, its number
of upper and lower indices must be equal, hence $n-m=0$.
Both, the left- and right-hand side of Eq.~(\ref{eq:assertion-contr}) are thus zero in this case.

\subsection{Hilbert Lagrangian $\tilde{\LCd}_\mathrm{H}$}
The Hilbert Lagrangian $\tilde{\LCd}_\mathrm{H}$ is defined as the Ricci scalar density $\tilde{R}=R\sqrt{-g}$,
which in turn is defined as the following contraction of the Riemann-Cartan curvature tensor
$R\indices{^{\eta}_{\alpha\eta\lambda}}$ from Eq.~(\ref{eq:riemann-tensor}):
\begin{equation}\label{eq:ricci-scalar}
\tilde{\LCd}_\mathrm{H}\equiv\tilde{R}=R\indices{^{\eta}_{\alpha\eta\lambda}}\,g^{\alpha\lambda}\,\sqrt{-g}.
\end{equation}
With the scalar density $\tilde{R}$, the general equation~(\ref{eq:assertion3}) takes on the particular form
\begin{equation}\label{eq:Ricci-identity}
-2\pfrac{\tilde{R}}{g^{\nu\beta}}g^{\mu\beta}-
\pfrac{\tilde{R}}{R\indices{^{\nu}_{\alpha\eta\lambda}}}R\indices{^{\mu}_{\alpha\eta\lambda}}+
\pfrac{\tilde{R}}{R\indices{^{\eta}_{\mu\eta\lambda}}}R\indices{^{\xi}_{\nu\xi\lambda}}+
2\pfrac{\tilde{R}}{R\indices{^{\eta}_{\alpha\eta\mu}}}R\indices{^{\xi}_{\alpha\xi\nu}}\equiv\delta_\nu^\mu\,\tilde{R}.
\end{equation}
The factors ``$2$'' occur due to the symmetry of the metric and the skew-symmetry of the Riemann tensor in its last index pair.
With
\begin{equation*}
\pfrac{\sqrt{-g}}{g^{\nu\beta}}=-\onehalf g_{\beta\nu}\,\sqrt{-g}
\end{equation*}
and making use of the skew-symmetries of the Riemann tensor in its first and second index pairs, one finds
\begin{subequations}\label{eq:Hilbert-deris}
\begin{align}
2\pfrac{\tilde{R}}{g^{\nu\beta}}=\tilde{R}_{\nu\beta}+\tilde{R}_{\beta\nu}-g_{\beta\nu}\,\tilde{R},\qquad
\tilde{R}_{\nu\beta}\equiv\tilde{R}\indices{^{\eta}_{\nu\eta\beta}}.
\label{eq:Hilbert-deri1}
\end{align}
The derivatives of $\tilde{R}$ with respect to the Riemann tensor are
\begin{align}
\pfrac{\tilde{R}}{R\indices{^{\nu}_{\alpha\eta\lambda}}}R\indices{^{\mu}_{\alpha\eta\lambda}}
&=\delta^{\eta}_{\nu}\,g^{\alpha\lambda}\tilde{R}\indices{^{\mu}_{\alpha\eta\lambda}}
=\tilde{R}\indices{^{\mu\lambda}_{\nu\lambda}}=\tilde{R}\indices{^\mu_\nu}\label{eq:Hilbert-deri2}\\
\pfrac{\tilde{R}}{R\indices{^{\eta}_{\mu\eta\lambda}}}R\indices{^{\xi}_{\nu\xi\lambda}}
&=\delta_{\alpha}^{\mu}\,g^{\alpha\lambda}\tilde{R}\indices{^{\eta}_{\nu\eta\lambda}}
=\tilde{R}\indices{^{\eta}_{\nu\eta}^{\mu}}=\tilde{R}\indices{_\nu^\mu}\label{eq:Hilbert-deri3}\\
\pfrac{\tilde{R}}{R\indices{^{\eta}_{\alpha\eta\mu}}}R\indices{^{\xi}_{\alpha\xi\nu}}
&=\delta_{\lambda}^{\mu}\,g^{\alpha\lambda}\tilde{R}\indices{^{\xi}_{\alpha\xi\nu}}
=\tilde{R}\indices{^{\eta\mu}_{\eta\nu}}=\tilde{R}\indices{^\mu_\nu}\label{eq:Hilbert-deri4}.
\end{align}
\end{subequations}
The identity~(\ref{eq:Ricci-identity}) is obviously satisfied by Eqs.~(\ref{eq:Hilbert-deris}).

From Eq.~(\ref{eq:Hilbert-deri1}) one observes that the leftmost term in Eq.~(\ref{eq:Ricci-identity})
provides the \emph{symmetric} part of the Ricci tensor
\begin{equation*}
\pfrac{R}{g^{\nu\mu}}=\onehalf\left(R_{\nu\mu}+R_{\mu\nu}\right),
\end{equation*}
whereas the sum of the second and the third terms of Eq.~(\ref{eq:Ricci-identity}) yields its \emph{skew-symmetric} part
\begin{equation*}
\onehalf\left(\pfrac{R}{R\indices{^{\eta}_{\beta\eta\lambda}}}R\indices{^{\xi}_{\nu\xi\lambda}}-
\pfrac{R}{R\indices{^{\nu}_{\alpha\eta\lambda}}}R\indices{^{\beta}_{\alpha\eta\lambda}}\right)g_{\beta\mu}
=\onehalf\left(R_{\nu\mu}-R_{\mu\nu}\right).
\end{equation*}
The skew-symmetric part of $R_{\nu\mu}$ vanishes for the case of zero torsion.
For this case, the identity~(\ref{eq:Ricci-identity}) simplifies to
\begin{equation*}
\frac{1}{\sqrt{-g}}\pfrac{\tilde{R}}{g^{\nu\beta}}g^{\mu\beta}\equiv
\pfrac{R}{R\indices{^{\eta}_{\alpha\eta\mu}}}R\indices{^{\xi}_{\alpha\xi\nu}}-\onehalf\delta_\nu^\mu\,R
=R\indices{^\mu_\nu}-\onehalf\delta_\nu^\mu\,R.
\end{equation*}
\subsection{Riemann tensor squared}
Any \emph{absolute} scalar Lagrangian $\LCd_{R}(R,g)$ built from the Riemann-Cartan tensor~(\ref{eq:riemann-tensor})
and the metric satisfies the identity~(\ref{eq:assertion1})
\begin{equation}\label{eq:constraint2-lag}
2\pfrac{\LCd_{R}}{g^{\nu\beta}}g^{\mu\beta}\equiv
-\pfrac{\LCd_{R}}{R\indices{^{\nu}_{\alpha\beta\lambda}}}R\indices{^{\mu}_{\alpha\beta\lambda}}+
\pfrac{\LCd_{R}}{R\indices{^{\eta}_{\mu\beta\lambda}}}R\indices{^{\eta}_{\nu\beta\lambda}}+
2\pfrac{\LCd_{R}}{R\indices{^{\eta}_{\alpha\beta\mu}}}R\indices{^{\eta}_{\alpha\beta\nu}}.
\end{equation}
The factors ``$2$'' again emerge from the symmetry of the metric and the skew-symmetry of the Riemann-Cartan tensor in its last index pair.
The identity is easily verified for a Lagrangian linear and quadratic in the Riemann tensor:
\begin{equation*}
\LCd_{R}=R\indices{^{\eta}_{\alpha\beta\lambda}}\left[\quarter R\indices{^{\alpha}_{\eta\xi\tau}}g^{\beta\xi}g^{\tau\lambda}-
g_{2}\left(\delta_{\eta}^{\beta}\,g^{\alpha\lambda}-\delta_{\eta}^{\lambda}\,g^{\alpha\beta}\right)\right].
\end{equation*}
The left-hand side of Eq.~(\ref{eq:constraint2-lag}) evaluates to
\begin{equation*}
2\pfrac{\LCd_{R}}{g^{\nu\beta}}g^{\mu\beta}=-R^{\eta\alpha\beta\mu}\,
R_{\eta\alpha\beta\nu}-2g_{2}\left(R\indices{_{\nu}^{\mu}}+R\indices{^{\mu}_{\nu}}\right),
\end{equation*}
which indeed agrees with the terms obtained from the right-hand side:
\begin{align*}
\pfrac{\LCd_{R}}{R\indices{^{\eta}_{\mu\beta\lambda}}}R\indices{^{\eta}_{\nu\beta\lambda}}
-\pfrac{\LCd_{R}}{R\indices{^{\nu}_{\alpha\beta\lambda}}}R\indices{^{\mu}_{\alpha\beta\lambda}}
&=2g_{2}\left(R\indices{^{\mu}_{\nu}}-R\indices{_{\nu}^{\mu}}\right)\\
2\pfrac{\LCd_{R}}{R\indices{^{\eta}_{\alpha\beta\mu}}}R\indices{^{\eta}_{\alpha\beta\nu}}
&=-R\indices{^{\eta\alpha\beta\mu}}R\indices{_{\eta\alpha\beta\nu}}-4g_2\,R\indices{^{\mu}_{\nu}}.
\end{align*}
\subsection{Ricci tensor squared}
The scalar made of the (not necessarily symmetric) Ricci tensor $R_{\eta\alpha}$ is defined by the following contraction with the metric
\begin{equation}\label{eq:ricci-tensor-scalar}
\LCd_{R}=R\indices{_{\eta\alpha}}R\indices{_{\xi\lambda}}\,g^{\eta\xi}\,g^{\alpha\lambda}.
\end{equation}
With Eq.~(\ref{eq:ricci-tensor-scalar}), the general Eq.~(\ref{eq:assertion1}) now takes on the particular form
\begin{equation}\label{eq:Ricci-tensor-identity}
2\pfrac{\LCd_{R}}{g^{\nu\beta}}g^{\mu\beta}-
\pfrac{\LCd_{R}}{R_{\mu\beta}}R_{\nu\beta}-\pfrac{\LCd_{R}}{R_{\beta\mu}}R_{\beta\nu}\equiv0.
\end{equation}
Without making use of the symmetries of the Ricci tensor and the metric, this identity is actually fulfilled as
\begin{align*}
\pfrac{\LCd_{R}}{g^{\nu\beta}}g^{\mu\beta}&=R\indices{_{\eta\alpha}}R\indices{_{\xi\lambda}}\left(
\delta_{\nu}^{\eta}\,\delta_{\beta}^{\xi}\,g^{\alpha\lambda}+
g^{\eta\xi}\,\delta_{\nu}^{\alpha}\,\delta_{\beta}^{\lambda}\right)g^{\beta\mu}\\
&=\left(R\indices{_{\nu\alpha}}R\indices{_{\beta\lambda}}\,g^{\alpha\lambda}+
g^{\eta\xi}\,R\indices{_{\eta\nu}}R\indices{_{\xi\beta}}\right)g^{\beta\mu}\\
&=R\indices{_{\nu\beta}}R\indices{^{\mu\beta}}+R\indices{_{\beta\nu}}R\indices{^{\beta\mu}}.
\end{align*}
The derivative terms of the Ricci tensor are
\begin{equation*}
\pfrac{\LCd_{R}}{R_{\mu\beta}}R_{\nu\beta}=\left(\delta_{\eta}^{\mu}\delta_{\alpha}^{\beta}R_{\xi\lambda}+
R_{\eta\alpha}\delta_{\xi}^{\mu}\delta_{\lambda}^{\beta}\right)g^{\eta\xi}\,g^{\alpha\lambda}\,R_{\nu\beta}
=2R\indices{_{\nu\beta}}\,R\indices{^{\mu\beta}}
\end{equation*}
and
\begin{align*}
\pfrac{\LCd_{R}}{R_{\beta\mu}}R_{\beta\nu}=\left(\delta_{\eta}^{\beta}\delta_{\alpha}^{\mu}R_{\xi\lambda}+
R_{\eta\alpha}\delta_{\xi}^{\beta}\delta_{\lambda}^{\mu}\right)g^{\eta\xi}\,g^{\alpha\lambda}\,R_{\beta\nu}
=2R\indices{_{\beta\nu}}\,R\indices{^{\beta\mu}},
\end{align*}
which obviously cancel the terms emerging from the derivatives with respect to the metric.

For zero torsion, the Ricci tensor is symmetric.
Then
\begin{equation}
\pfrac{\LCd_{R}}{g^{\nu\beta}}g^{\mu\beta}\equiv\pfrac{\LCd_{R}}{R_{\mu\beta}}R_{\nu\beta}
\qquad\Leftrightarrow\qquad\pfrac{\LCd_{R}}{g^{\nu\mu}}\equiv\pfrac{\LCd_{R}}{R\indices{_{\alpha\beta}}}R_{\nu\beta}\,g_{\alpha\mu}.
\end{equation}
\subsection{Klein-Gordon Lagrangian and Hamiltonian\label{ex:KG-ham}}
The Klein-Gordon Lagrangian density $\tilde{\LCd}_{\mathrm{KG}}=\LCd_{\mathrm{KG}}\sqrt{-g}$
for a massive \emph{complex} scalar field $\phi(x)$ is given by:
\begin{align}
\tilde{\LCd}_{\mathrm{KG}}\left(\phi,\bar{\phi},\partial_\mu\phi,\partial_\nu\bar{\phi},g^{\mu\nu}\right)
&=\left(\pfrac{\bar{\phi}}{x^{\alpha}}\pfrac{\phi}{x^{\beta}}g^{\alpha\beta}-m^2\bar{\phi}\,\phi\right)\sqrt{-g}\nonumber\\
&=\left[\frac{1}{2}\left(\pfrac{\bar{\phi}}{x^{\alpha}}\pfrac{\phi}{x^{\beta}}
+\pfrac{\bar{\phi}}{x^{\beta}}\pfrac{\phi}{x^{\alpha}}\right)g^{\alpha\beta}-m^2\bar{\phi}\,\phi\right]\sqrt{-g}.
\label{eq:KG-Lagrangian-complex}
\end{align}
In order to set up the pertaining identity~(\ref{eq:assertion2}), we set the required derivatives of $\tilde{\LCd}_{\mathrm{KG}}$.
With the derivative of the determinant $g$ of the covariant metric $g_{\mu\nu}$ with respect to the contravariant metric,
\begin{equation*}
\pfrac{\sqrt{-g}}{g^{\nu\beta}}g^{\mu\beta}=-\onehalf g_{\beta\nu}g^{\mu\beta}\sqrt{-g}=-\onehalf\delta_\nu^\mu\sqrt{-g},
\end{equation*}
we find
\begin{align*}
2\pfrac{\tilde{\LCd}_{\mathrm{KG}}}{g^{\nu\beta}}g^{\mu\beta}&=\left(\pfrac{\bar{\phi}}{x^{\nu}}\pfrac{\phi}{x^{\beta}}
+\pfrac{\bar{\phi}}{x^{\beta}}\pfrac{\phi}{x^{\nu}}\right)g^{\mu\beta}\sqrt{-g}-\delta_\nu^\mu\,\tilde{\LCd}_{\mathrm{KG}}\\
\pfrac{\bar{\phi}}{x^\nu}\pfrac{\tilde{\LCd}_{\mathrm{KG}}}{\left(\pfrac{\bar{\phi}}{x^\mu}\right)}
&=\frac{1}{2}\pfrac{\bar{\phi}}{x^{\nu}}\left(\pfrac{\phi}{x^{\beta}}g^{\mu\beta}+\pfrac{\phi}{x^{\alpha}}g^{\alpha\mu}\right)\sqrt{-g}
=\pfrac{\bar{\phi}}{x^{\nu}}\pfrac{\phi}{x^{\beta}}g^{\mu\beta}\sqrt{-g}\\
\pfrac{\tilde{\LCd}_{\mathrm{KG}}}{\left(\pfrac{\phi}{x^\mu}\right)}\pfrac{\phi}{x^\nu}
&=\frac{1}{2}\left(\pfrac{\bar{\phi}}{x^{\alpha}}g^{\alpha\mu}+\pfrac{\bar{\phi}}{x^{\beta}}g^{\mu\beta}\right)\pfrac{\phi}{x^{\nu}}\sqrt{-g}
=\pfrac{\bar{\phi}}{x^{\beta}}\pfrac{\phi}{x^{\nu}}g^{\mu\beta}\sqrt{-g}.
\end{align*}
The particular identity for the Lagrangian density~(\ref{eq:KG-Lagrangian-complex}) thus writes:
\begin{equation}
-2\pfrac{\tilde{\LCd}_{\mathrm{KG}}}{g^{\nu\beta}}g^{\mu\beta}
+\pfrac{\bar{\phi}}{x^\nu}\pfrac{\tilde{\LCd}_{\mathrm{KG}}}{\left(\pfrac{\bar{\phi}}{x^\mu}\right)}
+\pfrac{\tilde{\LCd}_{\mathrm{KG}}}{\left(\pfrac{\phi}{x^\mu}\right)}\pfrac{\phi}{x^\nu}\equiv\delta_\nu^\mu\,\tilde{\LCd}_{\mathrm{KG}}.
\end{equation}
The equivalent covariant Klein-Gordon Hamiltonian density $\tilde{\HCd}_{\mathrm{KG}}\big(\phi,\bar{\phi},\tilde{\pi}^{\mu},\tilde{\bar{\pi}}^{\nu},g_{\mu\nu}\big)$
for a system of complex fields in a dynamic spacetime is given by:
\begin{align}
\tilde{\HCd}_{\mathrm{KG}}&=\tilde{\bar{\pi}}^{\alpha}\pfrac{\phi}{x^\alpha}+\pfrac{\bar{\phi}}{x^\alpha}\tilde{\pi}^{\alpha}-\tilde{\LCd}_{\mathrm{KG}},\qquad
\tilde{\bar{\pi}}^{\alpha}=\pfrac{\tilde{\LCd}_{\mathrm{KG}}}{\left(\pfrac{\phi}{x^\mu}\right)},\qquad
\tilde{\pi}^{\alpha}=\pfrac{\tilde{\LCd}_{\mathrm{KG}}}{\left(\pfrac{\bar{\phi}}{x^\mu}\right)}\nonumber\\
&=\tilde{\bar{\pi}}^{\alpha}\tilde{\pi}^{\,\beta}\,g_{\alpha\beta}\frac{1}{\sqrt{-g}}+m^{2}\,\bar{\phi}\,\phi\sqrt{-g}.
\label{eq:KG-Hamiltonian-complex}
\end{align}
We observe that the scalar density $\tilde{\HCd}_{\mathrm{KG}}$ is defined as a function of the \emph{tensor densities}
$\tilde{\bar{\pi}}^{\alpha}$ and $\tilde{\pi}^{\,\beta}$ rather than of absolute tensors.
To set up the pertaining identity, $\tilde{\HCd}_{\mathrm{KG}}$ can be rewritten as:
\begin{align}
\tilde{\HCd}_{\mathrm{KG}}^{\prime}&=\left(\bar{\pi}^{\alpha}\pi^{\,\beta}\,g_{\alpha\beta}+m^{2}\,\bar{\phi}\,\phi\right)\sqrt{-g}\nonumber\\
&=\left[\onehalf\left(\bar{\pi}^{\alpha}\pi^{\,\beta}+\bar{\pi}^{\beta}\pi^{\,\alpha}\right)g_{\alpha\beta}+m^{2}\,\bar{\phi}\,\phi\right]\sqrt{-g}.
\label{eq:KG-Hamiltonian-complex-prime}
\end{align}
As can be directly verified, the invariant now takes on the form:
\begin{equation}
2\pfrac{\tilde{\HCd}_{\mathrm{KG}}^{\prime}}{g_{\mu\beta}}g_{\nu\beta}
-\bar{\pi}^\mu\pfrac{\tilde{\HCd}_{\mathrm{KG}}^{\prime}}{\bar{\pi}^\nu}
-\pfrac{\tilde{\HCd}_{\mathrm{KG}}^{\prime}}{\pi^\nu}\pi^\mu\equiv\delta_\nu^\mu\,\tilde{\HCd}_{\mathrm{KG}}^{\prime}.
\end{equation}
The correlation of the derivatives with respect to the metric of $\tilde{\HCd}_{\mathrm{KG}}^{\prime}$ and $\tilde{\HCd}_{\mathrm{KG}}$ follows as:
\begin{align*}
2\pfrac{\tilde{\HCd}_{\mathrm{KG}}^\prime}{g_{\beta\mu}}g_{\beta\nu}&=2\pfrac{\tilde{\HCd}_{\mathrm{KG}}}{g_{\beta\mu}}g_{\beta\nu}
+\delta_\nu^\mu\left(\tilde{\HCd}_{\mathrm{KG}}^\prime+\tilde{\LCd}_{\mathrm{KG}}\right)
=2\pfrac{\tilde{\LCd}_{\mathrm{KG}}}{g^{\beta\nu}}g^{\beta\mu}+\delta_\nu^\mu\left(\tilde{\HCd}_{\mathrm{KG}}^\prime+\tilde{\LCd}_{\mathrm{KG}}\right),
\end{align*}
hence
\begin{equation*}
\pfrac{\tilde{\HCd}_{\mathrm{KG}}}{g_{\beta\mu}}g_{\beta\nu}=\pfrac{\tilde{\LCd}_{\mathrm{KG}}}{g^{\beta\nu}}g^{\beta\mu}.
\end{equation*}
In terms of the proper Hamiltonian~(\ref{eq:KG-Hamiltonian-complex}), the invariant writes:
\begin{equation}
2\pfrac{\tilde{\HCd}_{\mathrm{KG}}}{g_{\mu\beta}}g_{\nu\beta}
-\tilde{\bar{\pi}}^\mu\pfrac{\tilde{\HCd}_{\mathrm{KG}}}{\tilde{\bar{\pi}}^\nu}
-\pfrac{\tilde{\HCd}_{\mathrm{KG}}}{\tilde{\pi}^\nu}\tilde{\pi}^\mu\equiv-\delta_\nu^\mu\,\tilde{\LCd}_{\mathrm{KG}}.
\end{equation}
\subsection{Proca Lagrangian and Hamiltonian\label{ex:proca-ham}}
In a dynamic spacetime, the Proca Lagrangian from Eq.~(\ref{eq:Proca-Lag}) is defined by:
\begin{equation*}
\tilde{\LCd}_{\mathrm{P}}=\left(-\quarter p_{\alpha\beta}\,p_{\xi\eta}g^{\alpha\xi}\,g^{\beta\eta}
+\onehalf m^{2}a_{\alpha}\,a_{\beta}\,g^{\alpha\beta}\right)\sqrt{-g},\qquad
p_{\alpha\beta}\equiv f_{\alpha\beta}\equiv\pfrac{a_\beta}{x^\alpha}-\pfrac{a_\alpha}{x^\beta}.
\end{equation*}
The identity~(\ref{eq:assertion3}) for the scalar density $\tilde{\LCd}_{\mathrm{P}}$ is then:
\begin{equation*}
-2\pfrac{\tilde{\LCd}_{\mathrm{P}}}{g^{\beta\nu}}g^{\beta\mu}+\pfrac{\tilde{\LCd}_{\mathrm{P}}}{p_{\mu\beta}}p_{\nu\beta}
+\pfrac{\tilde{\LCd}_{\mathrm{P}}}{p_{\beta\mu}}p_{\beta\nu}+\pfrac{\tilde{\LCd}_{\mathrm{P}}}{a_\mu}a_\nu\equiv\delta_\nu^\mu\tilde{\LCd}_{\mathrm{P}}.
\end{equation*}
The equivalent covariant Proca Hamiltonian density $\tilde{\HCd}_{\mathrm{P}}\big(\tilde{p}^{\nu\mu},a_{\nu},g_{\mu\nu}\big)$
is obtained by a complete Legendre transformation as:
\begin{align}
\tilde{\HCd}_{\mathrm{P}}&=\tilde{p}^{\alpha\beta}\,\pfrac{a_\alpha}{x^\beta}-\tilde{\LCd}_{\mathrm{P}}\nonumber\\
&=-\quarter\tilde{p}^{\alpha\beta}\,\tilde{p}^{\xi\eta}g_{\alpha\xi}\,g_{\beta\eta}\frac{1}{\sqrt{-g}}
-\onehalf m^{2}a_{\alpha}\,a_{\xi}\,g^{\alpha\xi}\sqrt{-g}.\label{eq:Proca-Ham}
\end{align}
In order to set up the related identity, the Hamiltonian $\tilde{\HCd}_{\mathrm{P}}$ must first be expressed as the equivalent function of absolute tensors:
\begin{equation}\label{eq:Proca-Ham-temp}
\tilde{\HCd}_{\mathrm{P}}^\prime=-\quarter p^{\alpha\beta}\,p^{\xi\eta}g_{\alpha\xi}\,g_{\beta\eta}\sqrt{-g}-
\onehalf m^{2}a_{\alpha}\,a_{\xi}\,g^{\alpha\xi}\sqrt{-g}.
\end{equation}
The identity~(\ref{eq:assertion3}) then follows as
\begin{equation}\label{eq:Proca-Ham-tensor-identity}
2\pfrac{\tilde{\HCd}_{\mathrm{P}}^\prime}{g_{\beta\mu}}g_{\beta\nu}
-\pfrac{\tilde{\HCd}_{\mathrm{P}}^\prime}{p^{\nu\beta}}p^{\mu\beta}-\pfrac{\tilde{\HCd}_{\mathrm{P}}^\prime}{p^{\beta\nu}}p^{\beta\mu}
+\pfrac{\tilde{\HCd}_{\mathrm{P}}^\prime}{a_\mu}\,a_\nu\equiv\delta_\nu^\mu\,\tilde{\HCd}_{\mathrm{P}}^\prime,
\end{equation}
which is again verified by direct calculation:
\allowdisplaybreaks
\begin{align*}
-\pfrac{\tilde{\HCd}_{\mathrm{P}}^\prime}{p^{\nu\beta}}p^{\mu\beta}&=
\onehalf\,p^{\mu\beta}\,p^{\xi\eta}g_{\nu\xi}\,g_{\beta\eta}\sqrt{-g}=\onehalf\,p^{\mu\beta}\,p_{\nu\beta}\sqrt{-g}\\
-\pfrac{\tilde{\HCd}_{\mathrm{P}}^\prime}{p^{\alpha\nu}}p^{\alpha\mu}&=
\onehalf\,p^{\alpha\mu}\,p^{\xi\eta}g_{\alpha\xi}\,g_{\nu\eta}\sqrt{-g}=\onehalf\,p^{\beta\mu}\,p_{\beta\nu}\sqrt{-g}\\
\pfrac{\tilde{\HCd}_{\mathrm{P}}^\prime}{a_\mu}\,a_\nu&=-m^{2}a_{\nu}\,a_{\xi}\,g^{\mu\xi}\sqrt{-g}=-m^{2}a^{\mu}\,a_{\nu}\sqrt{-g}\\
2\pfrac{\tilde{\HCd}_{\mathrm{P}}^\prime}{g_{\beta\mu}}g_{\beta\nu}&=-\onehalf p^{\beta\lambda}p^{\mu\eta}g_{\beta\nu}g_{\lambda\eta}\sqrt{-g}
-\onehalf p^{\alpha\beta}p^{\xi\mu}g_{\alpha\xi}g_{\beta\nu}\sqrt{-g}
+m^{2}a_{\alpha}a_{\xi}g^{\alpha\beta}g^{\xi\mu}g_{\beta\nu}\sqrt{-g}+g^{\mu\beta}g_{\beta\nu}\tilde{\HCd}_{\mathrm{P}}^\prime\\
&=-\onehalf p_{\nu\beta}p^{\mu\beta}\sqrt{-g}-\onehalf p_{\beta\nu}p^{\beta\mu}\sqrt{-g}+m^{2}a_{\nu}\,a^{\mu}\sqrt{-g}+\delta_\nu^\mu\,\tilde{\HCd}_{\mathrm{P}}^\prime.
\end{align*}
Summing up, all terms on the right-hand side cancel---except $\delta_\nu^\mu\,\tilde{\HCd}_{\mathrm{P}}^\prime$, which thus gives Eq.~(\ref{eq:Proca-Ham-tensor-identity}).

For the proper Hamiltonian~(\ref{eq:Proca-Ham}), we have
\begin{align*}
\pfrac{\tilde{\HCd}_{\mathrm{P}}^\prime}{p^{\nu\beta}}p^{\mu\beta}&=-\onehalf\,p^{\mu\beta}\,p_{\nu\beta}\sqrt{-g}
=-\onehalf\,\tilde{p}^{\mu\beta}\,\tilde{p}_{\nu\beta}\frac{1}{\sqrt{-g}}=\pfrac{\tilde{\HCd}_{\mathrm{P}}}{\tilde{p}^{\nu\beta}}\tilde{p}^{\mu\beta}\\
\pfrac{\tilde{\HCd}_{\mathrm{P}}^\prime}{a_\mu}\,a_\nu&=-m^{2}a_{\nu}\,a^{\mu}\sqrt{-g}=\pfrac{\tilde{\HCd}_{\mathrm{P}}}{a_\mu}\,a_\nu.
\end{align*}
Yet the derivatives with respect to the metric of $\tilde{\HCd}_{\mathrm{P}}^\prime$ and $\tilde{\HCd}_{\mathrm{P}}$ are different:
\begin{align*}
2\pfrac{\tilde{\HCd}_{\mathrm{P}}^\prime}{g_{\beta\mu}}g_{\beta\nu}&=2\pfrac{\tilde{\HCd}_{\mathrm{P}}}{g_{\beta\mu}}g_{\beta\nu}
+\delta_\nu^\mu\left(\tilde{\HCd}_{\mathrm{P}}^\prime+\tilde{\LCd}_{\mathrm{P}}\right)
=2\pfrac{\tilde{\LCd}_{\mathrm{P}}}{g^{\beta\nu}}g^{\beta\mu}+\delta_\nu^\mu\left(\tilde{\HCd}_{\mathrm{P}}^\prime+\tilde{\LCd}_{\mathrm{P}}\right),
\end{align*}
hence
\begin{equation*}
\pfrac{\tilde{\HCd}_{\mathrm{P}}}{g_{\beta\mu}}g_{\beta\nu}=\pfrac{\tilde{\LCd}_{\mathrm{P}}}{g^{\beta\nu}}g^{\beta\mu}.
\end{equation*}
The identity~(\ref{eq:Proca-Ham-tensor-identity}) can now be expressed in terms of the proper Proca Hamiltonian $\tilde{\HCd}_{\mathrm{P}}$ as:
\begin{equation*}
2\pfrac{\tilde{\HCd}_{\mathrm{P}}}{g_{\beta\mu}}g_{\beta\nu}
-\pfrac{\tilde{\HCd}_{\mathrm{P}}}{\tilde{p}^{\nu\beta}}\tilde{p}^{\mu\beta}-\pfrac{\tilde{\HCd}_{\mathrm{P}}}{\tilde{p}^{\beta\nu}}\tilde{p}^{\beta\mu}
+\pfrac{\tilde{\HCd}_{\mathrm{P}}}{a_\mu}\,a_\nu\equiv
-\delta_\nu^\mu\,\tilde{\LCd}_{\mathrm{P}}.
\end{equation*}
With the \emph{canonical} energy-momentum tensor density $\tilde{\theta}\indices{_\nu^\mu}=\theta\indices{_\nu^\mu}\sqrt{-g}$ of the Proca system,
\begin{equation*}
\tilde{\theta}\indices{_\nu^\mu}
=f_{\beta\nu}\pfrac{\tilde{\LCd}_{\mathrm{P}}}{f_{\beta\mu}}-\delta_\nu^\mu\,\tilde{\LCd}_{\mathrm{P}}
=\pfrac{\tilde{\HCd}_{\mathrm{P}}}{\tilde{p}^{\beta\nu}}\tilde{p}^{\beta\mu}-\delta_\nu^\mu\,\tilde{\LCd}_{\mathrm{P}},
\end{equation*}
this yields the Hamiltonian representation of $\tilde{\theta}\indices{_\nu^\mu}$:
\begin{equation*}
\tilde{\theta}\indices{_\nu^\mu}\equiv 2\pfrac{\tilde{\HCd}_{\mathrm{P}}}{g_{\beta\mu}}g_{\beta\nu}
-\pfrac{\tilde{\HCd}_{\mathrm{P}}}{\tilde{p}^{\nu\beta}}\tilde{p}^{\mu\beta}
+\pfrac{\tilde{\HCd}_{\mathrm{P}}}{a_\mu}\,a_\nu.
\end{equation*}
\subsection{Dirac Lagrangian}
The regularized Dirac Lagrangian $\LCd_{\mathrm{D}}\big(\psi,\partial\psi,\bar{\psi},\partial\bar{\psi},\bgamma^\mu\big)$,
constructed upon the Dirac equation writes
\begin{equation}\label{ld-regular}
\LCd_{\mathrm{D}}=\frac{\rmi}{2}\left(\bar{\psi}\,\bgamma^{\alpha}\pfrac{\psi}{x^{\alpha}}
-\pfrac{\bar{\psi}}{x^{\alpha}}\bgamma^{\alpha}\psi\right)-m\,\bar{\psi}\psi
+\frac{\rmi}{3M}\pfrac{\bar{\psi}}{x^{\alpha}}\,\bsigma^{\alpha\beta}\pfrac{\psi}{x^{\beta}},
\end{equation}
wherein the $(1,1)$-spinor-$(2,0)$-tensor field $\bsigma^{\alpha\beta}$ is defined
as the \emph{commutator} of the matrix product $\bgamma^{\alpha}\bgamma^{\nu}$:
\begin{equation}\label{eq:sigma-def}
\bsigma^{\alpha\nu}=\frac{\rmi}{2}\left(\bgamma^{\alpha}\bgamma^{\nu}-\bgamma^{\nu}\bgamma^{\alpha}\right)
\qquad\Leftrightarrow\qquad\sigma\indices{^{a}_{b}^{\alpha\nu}}=\frac{\rmi}{2}
\left(\gamma\indices{^{a}_{c}^{\alpha}}\,\gamma\indices{^{c}_{b}^{\nu}}-
\gamma\indices{^{a}_{c}^{\nu}}\,\gamma\indices{^{c}_{b}^{\alpha}}\right).
\end{equation}
With an explicit notation of the spinor indices as lower case Latin letters,
the Dirac Lagrangian takes on the form:
\begin{equation}\label{ld-regular-expl}
\LCd_{\mathrm{D}}=\frac{\rmi}{2}\left(\bar{\psi}_a\,\gamma\indices{^a_c^\alpha}\pfrac{\psi^c}{x^{\alpha}}
-\pfrac{\bar{\psi}_a}{x^{\alpha}}\gamma\indices{^a_c^\alpha}\psi^c\right)-m\,\bar{\psi}_a\psi^a
+\frac{\rmi}{3M}\pfrac{\bar{\psi}_a}{x^{\alpha}}\,\sigma\indices{^a_c^{\alpha\beta}}\pfrac{\psi^c}{x^{\beta}}.
\end{equation}
The invariant~(\ref{eq:assertion1}) for the scalar quantity $\LCd_{\mathrm{D}}$ is then given by:
\begin{equation}\label{ld-invariant}
\pfrac{\LCd_{\mathrm{D}}}{\left(\pfrac{\psi^c}{x^\mu}\right)}\pfrac{\psi^c}{x^\nu}
+\pfrac{\bar{\psi}_a}{x^\nu}\pfrac{\LCd_{\mathrm{D}}}{\left(\pfrac{\bar{\psi}_a}{x^\mu}\right)}
-\pfrac{\LCd_{\mathrm{D}}}{\gamma\indices{^a_c^\nu}}\gamma\indices{^a_c^\mu}\equiv0.
\end{equation}
Equation~(\ref{ld-invariant}) is verified by direct calculation:
\begin{align*}
\pfrac{\LCd_{\mathrm{D}}}{\left(\pfrac{\psi^c}{x^\mu}\right)}\pfrac{\psi^c}{x^\nu}
&=\frac{\rmi}{2}\bar{\psi}_a\,\gamma\indices{^a_c^\mu}\pfrac{\psi^c}{x^{\nu}}
+\frac{\rmi}{3M}\pfrac{\bar{\psi}_a}{x^{\alpha}}\,\sigma\indices{^a_c^{\alpha\mu}}\pfrac{\psi^c}{x^{\nu}}\\
\pfrac{\bar{\psi}_a}{x^\nu}\pfrac{\LCd_{\mathrm{D}}}{\left(\pfrac{\bar{\psi}_a}{x^\mu}\right)}
&=-\frac{\rmi}{2}\pfrac{\bar{\psi}_a}{x^\nu}\gamma\indices{^a_c^\mu}\psi^c
+\frac{\rmi}{3M}\pfrac{\bar{\psi}_a}{x^\nu}\sigma\indices{^a_c^{\mu\beta}}\pfrac{\psi^c}{x^{\beta}}\\
-\pfrac{\LCd_{\mathrm{D}}}{\gamma\indices{^a_c^\nu}}\gamma\indices{^a_c^\mu}
&=-\frac{\rmi}{2}\left(\bar{\psi}_a\,\gamma\indices{^a_c^\mu}\pfrac{\psi^c}{x^{\nu}}
-\pfrac{\bar{\psi}_a}{x^{\nu}}\gamma\indices{^a_c^\mu}\psi^c\right)
-\frac{\rmi}{3M}\left(\pfrac{\bar{\psi}_a}{x^{\nu}}\,\sigma\indices{^a_c^{\mu\beta}}\pfrac{\psi^c}{x^{\beta}}
+\pfrac{\bar{\psi}_a}{x^{\alpha}}\,\sigma\indices{^a_c^{\alpha\mu}}\pfrac{\psi^c}{x^{\nu}}\right),
\end{align*}
which obviously sums up to zero.
The identity~(\ref{eq:assertion3}) for the scalar \emph{density} $\tilde{\LCd}_{\mathrm{D}}$ is then
\begin{equation*}
\pfrac{\tilde{\LCd}_{\mathrm{D}}}{\left(\pfrac{\psi^c}{x^\mu}\right)}\pfrac{\psi^c}{x^\nu}
+\pfrac{\bar{\psi}_a}{x^\nu}\pfrac{\tilde{\LCd}_{\mathrm{D}}}{\left(\pfrac{\bar{\psi}_a}{x^\mu}\right)}
-\pfrac{\tilde{\LCd}_{\mathrm{D}}}{\gamma\indices{^a_c^\nu}}\gamma\indices{^a_c^\mu}\equiv\delta_\nu^\mu\,\tilde{\LCd}_{\mathrm{D}},
\end{equation*}
hence, rearranging the terms and skipping the spinor indices:
\begin{equation}\label{ld-invariant2}
\tilde{\theta}\indices{_\nu^\mu}\equiv
\pfrac{\tilde{\LCd}_{\mathrm{D}}}{\left(\pfrac{\psi}{x^\mu}\right)}\pfrac{\psi}{x^\nu}
+\pfrac{\bar{\psi}}{x^\nu}\pfrac{\tilde{\LCd}_{\mathrm{D}}}{\left(\pfrac{\bar{\psi}}{x^\mu}\right)}
-\delta_\nu^\mu\,\tilde{\LCd}_{\mathrm{D}}\equiv
\Tr\left\{\pfrac{\tilde{\LCd}_{\mathrm{D}}}{\bgamma\indices{^\nu}}\bgamma\indices{^\mu}\right\}\equiv\tilde{T}\indices{_\nu^\mu}.
\end{equation}
Equation~(\ref{ld-invariant2}) states that the \emph{canonical} energy-momentum tensor density $\tilde{\theta}\indices{_\nu^\mu}$
coincides for the Dirac Lagrangian with the \emph{metric} energy-momentum tensor $\tilde{T}\indices{_\nu^\mu}$.

The corresponding invariant~(\ref{eq:assertion1}) for the scalar $\LCd_{\mathrm{D}}$
is obtained if we contract the spacetime indices and leave open the spinor indices:
\begin{equation}\label{ld-invariant-spinor}
\pfrac{\LCd_{\mathrm{D}}}{\left(\pfrac{\psi^b}{x^\alpha}\right)}\pfrac{\psi^a}{x^\alpha}
-\pfrac{\bar{\psi}_b}{x^\alpha}\pfrac{\LCd_{\mathrm{D}}}{\left(\pfrac{\bar{\psi}_a}{x^\alpha}\right)}
+\pfrac{\LCd_{\mathrm{D}}}{\psi^b}\psi^a-\bar{\psi}_b\pfrac{\LCd_{\mathrm{D}}}{\bar{\psi}_a}
+\pfrac{\LCd_{\mathrm{D}}}{\gamma\indices{^b_c^\alpha}}\gamma\indices{^a_c^\alpha}
-\gamma\indices{^c_b^\alpha}\pfrac{\LCd_{\mathrm{D}}}{\gamma\indices{^c_a^\alpha}}\equiv0.
\end{equation}
The invariant~(\ref{ld-invariant-spinor}) is also verified by direct calculation:
\begin{align*}
\pfrac{\LCd_{\mathrm{D}}}{\left(\pfrac{\psi^b}{x^\alpha}\right)}\pfrac{\psi^a}{x^\alpha}
&=\frac{\rmi}{2}\bar{\psi}_c\,\gamma\indices{^c_b^\alpha}\pfrac{\psi^a}{x^{\alpha}}
+\frac{\rmi}{3M}\pfrac{\bar{\psi}_c}{x^{\alpha}}\,\sigma\indices{^c_b^{\alpha\beta}}\pfrac{\psi^a}{x^{\beta}}\\
-\pfrac{\bar{\psi}_b}{x^\alpha}\pfrac{\LCd_{\mathrm{D}}}{\left(\pfrac{\bar{\psi}_a}{x^\alpha}\right)}
&=\frac{\rmi}{2}\pfrac{\bar{\psi}_b}{x^\alpha}\gamma\indices{^a_c^\alpha}\psi^c
-\frac{\rmi}{3M}\pfrac{\bar{\psi}_b}{x^\alpha}\sigma\indices{^a_c^{\alpha\beta}}\pfrac{\psi^c}{x^{\beta}}\\
\pfrac{\LCd_{\mathrm{D}}}{\psi^b}\psi^a&=-\frac{\rmi}{2}\pfrac{\bar{\psi}_c}{x^\alpha}\gamma\indices{^c_b^\alpha}\psi^a-m\,\bar{\psi}_b\psi^a\\
-\bar{\psi}_b\pfrac{\LCd_{\mathrm{D}}}{\bar{\psi}_a}&=-\frac{\rmi}{2}\bar{\psi}_b\gamma\indices{^a_c^\alpha}\pfrac{\psi^c}{x^\alpha}+m\,\bar{\psi}_b\psi^a\\
\pfrac{\LCd_{\mathrm{D}}}{\gamma\indices{^b_c^\alpha}}\gamma\indices{^a_c^\alpha}
&=\frac{\rmi}{2}\left(\bar{\psi}_b\,\gamma\indices{^a_c^\alpha}\pfrac{\psi^c}{x^{\alpha}}
-\pfrac{\bar{\psi}_b}{x^{\alpha}}\gamma\indices{^a_c^\alpha}\psi^c\right)\\
&\quad+\frac{\rmi}{3M}\pfrac{\bar{\psi}_b}{x^{\alpha}}\,\sigma\indices{^a_c^{\alpha\beta}}\pfrac{\psi^c}{x^{\beta}}
-\frac{1}{6M}\pfrac{\bar{\psi}_c}{x^{\alpha}}\left(\gamma\indices{^c_b^\alpha}\gamma\indices{^a_d^\beta}-
\gamma\indices{^c_b^\beta}\gamma\indices{^a_d^\alpha}\right)\pfrac{\psi^d}{x^{\beta}}\\
-\gamma\indices{^c_b^\alpha}\pfrac{\LCd_{\mathrm{D}}}{\gamma\indices{^c_a^\alpha}}
&=-\frac{\rmi}{2}\left(\bar{\psi}_c\,\gamma\indices{^c_b^\alpha}\pfrac{\psi^a}{x^{\alpha}}
-\pfrac{\bar{\psi}_c}{x^{\alpha}}\gamma\indices{^c_b^\alpha}\psi^a\right)\\
&\quad-\frac{\rmi}{3M}\pfrac{\bar{\psi}_c}{x^{\alpha}}\,\sigma\indices{^c_b^{\alpha\beta}}\pfrac{\psi^a}{x^{\beta}}
+\frac{1}{6M}\pfrac{\bar{\psi}_c}{x^{\alpha}}\left(\gamma\indices{^c_b^\alpha}\gamma\indices{^a_d^\beta}-
\gamma\indices{^c_b^\beta}\gamma\indices{^a_d^\alpha}\right)\pfrac{\psi^d}{x^{\beta}}.
\end{align*}
Again, all terms on the right-hand side sum up to zero.
\subsection{Dirac Hamiltonian}
The Dirac momentum density fields $\tilde{\pibar}^\mu$ and $\tilde{\pi}^\mu$
represent the \emph{conjugates} of the spinor fields $\psi$ and $\psibar$, respectively, and hence
the \emph{duals} of their partial derivatives, $\partial\psi/\partial x^\mu$ and $\partial\psibar/\partial x^\mu$.
They are derived from the Lagrangian density $\tilde{\LCd}_{\mathrm{D}}=\LCd_{\mathrm{D}}\sqrt{-g}$ via
\begin{equation}
\tilde{\pibar}^\mu=\pfrac{\tilde{\LCd}_{\mathrm{D}}}{\left(\pfrac{\psi}{x^\mu}\right)},\qquad
\tilde{\pi}^\mu=\pfrac{\tilde{\LCd}_{\mathrm{D}}}{\left(\pfrac{\psibar}{x^\mu}\right)}.
\end{equation}
Due to the quadratic ``velocity'' dependence of~(\ref{ld-regular}), the corresponding
covariant Hamil\-tonian \cite{struckmeier08,StrRei12} is obtained via a \emph{regular} Legendre transformation
\begin{equation*}
\tilde{\HCd}_{\mathrm{D}}\left(\psi,\tilde{\pibar}^\alpha,\psibar,\tilde{\pi}^\alpha,\bgamma^\alpha\right)
=\tilde{\pibar}^\alpha\pfrac{\psi}{x^\alpha}+\pfrac{\psibar}{x^\alpha}\tilde{\pi}^\alpha-\tilde{\LCd}_{\mathrm{D}}
\left(\psi,\partial_\alpha\psi,\psibar,\partial_\alpha\psibar,\bgamma^\alpha\right)
\end{equation*}
as:
\begin{equation}\label{eq:hd-dirac}
\tilde{\HCd}_{\mathrm{D}}=\frac{\rmi M}{2}\left(\psibar\,\bgamma_\alpha\tilde{\pi}^\alpha
-\tilde{\pibar}^\alpha\frac{6\btau_{\alpha\beta}}{\sqrt{-g}}\,\tilde{\pi}^\beta
-\tilde{\pibar}^\alpha\bgamma_\alpha\psi\right)+\left(m-M\right)\psibar\psi\sqrt{-g},
\end{equation}
with $\btau_{\alpha\beta}$ the inverse of the matrix $\bsigma^{\beta\alpha}$, the latter defined in Eq.~(\ref{eq:sigma-def}),
\begin{equation*}
\btau_{\alpha\beta}=\frac{\rmi}{6}\left(\bgamma_{\alpha}\bgamma_{\beta}+3\bgamma_{\beta}\bgamma_{\alpha}\right),\qquad
\btau_{\nu\alpha}\,\bsigma^{\alpha\mu}=\delta_{\nu}^{\mu}\,\bEins.
\end{equation*}
The Hamiltonian representation of the identity~(\ref{ld-invariant2}) is then:
\begin{equation}\label{hd-invariant2}
\Tr\left\{\pfrac{\tilde{\HCd}_{\mathrm{D}}}{\bgamma\indices{_\mu}}\bgamma\indices{_\nu}\right\}
-\pfrac{\tilde{\HCd}_{\mathrm{D}}}{\tilde{\pi}^\nu}\tilde{\pi}^\mu
-\tilde{\pibar}^\mu\pfrac{\tilde{\HCd}_{\mathrm{D}}}{\tilde{\pibar}^\nu}
\equiv-\delta_\nu^\mu\,\tilde{\LCd}_{\mathrm{D}}.
\end{equation}
\section*{References}
\providecommand{\newblock}{}

\end{document}